\newtheorem{cor}{Corollary}[section]
\newtheorem{lem}{Lemma}[section]
\newtheorem{thm}{Theorem}[section]
\newtheorem{remark}{Remark}[section]
\theoremstyle{definition}
\theoremstyle{definition}
\theoremstyle{remark}
\newtheorem*{rem*}{Remark}
\newcommand{\xkhB}[1]{\Big(#1\Big)}
\newcommand{\zkhB}[1]{\Big[#1\Big]}
\newcommand{\dkhB}[1]{\Big\{#1\Big\}}
\newcommand{\norm}[1]{\left\Vert#1\right\Vert}
\newcommand{\abs}[1]{\left\vert#1\right\vert}
\newcommand{\normz}[1]{{\Vert#1\Vert}_0} 
\newcommand{\normo}[1]{{\Vert#1\Vert}_1} 
\newcommand{\normt}[1]{{\Vert#1\Vert}_2} 
\newcommand{\normi}[1]{{\Vert#1\Vert}_\infty}
\def\bX{X}
\def\bP{\mathbb{P}}
\def\bRp{\mathbb{R}^p}
\def\ds{d^\diamond}
\def\As{A^\diamond}
\def\Is{I^\diamond}
\def\dk{d^k}
\def\dkk{d^{k+1}}
\def\Ak{A^k}
\def\Akk{A^{k+1}}
\def\Ik{I^k}
\def\bbk{\mathbf{\beta}^{k}}
\def\bbkk{\mathbf{\beta}^{k+1}}
\def\de{\triangleq}
\def\cm{c_-} 
\def\cp{c_+} 
\def\JAk{E_2(A^k)}
\def\JAkk{E_2(A^{k+1})}
\def\JAki{E_{\infty}(A^k)}
\def\JAkki{E_{\infty}(A^{k+1})}
\def\bbb{\triangle^k}
\def\cN{\mathcal{N}}
\def\bP{\mathbf{P}}
\def\hT{\hat{T}}
\def\real{\mathbb R}
\def\hbeta{\hat{\beta}}
\def\bbs{\beta^\diamond}
\def\bb{\mathbf{\beta}}
\def\bbp{\mathbf{\beta}^{*}}
\def\tb{\bar{\beta}^*}
\def\te{\bar{\eta}}
\def\htt{h_{2}(T)}
\def\hit{h_{\infty}(T)}
\def\la{\lambda}
\def\s{\sigma}
\def\vps{\varepsilon}
\newcommand{\Rmnum}[1]{\expandafter\@slowromancap\romannumeral #1@}
\numberwithin{equation}{section}
\theoremstyle{plain}
\begin{document}

\begin{frontmatter}
\title{A Constructive Approach to High-dimensional Regression} 
\runtitle{SDAR}

\begin{aug}
\author{\fnms{Jian} \snm{Huang}\thanksref{m1}\ead[label=e1]{jian-huang@uiowa.edu}},
\author{\fnms{Yuling} \snm{Jiao}\thanksref{m2}\ead[label=e2]{yulingjiaomath@whu.edu.cn}},
\author{\fnms{Yanyan} \snm{Liu}\thanksref{m3}\ead[label=e3]{liuyy@whu.edu.cn}},
\and
\author{\fnms{Xiliang} \snm{Lu}\thanksref{m3}\ead[label=e4]{xllv.math@whu.edu.cn}
\ead[label=u1,url]{http://www.foo.com}}


\affiliation{University of Iowa\thanksmark{m1}, Zhongnan University of Economics and Law\thanksmark{m2}, Wuhan University\thanksmark{m3}}

\address{Jian Huang, \\
Department of Statistics and Actuarial Science\\
241 SH \\
 University of Iowa \\
  Iowa City, Iowa 52242, USA\\
\printead{e1}}

\address{Yuling Jiao\\
School of  Statistics and Mathematics \\
Zhongnan  University of Economics and Law \\
Wuhan,  China, \\
and Institute of Big Data \\
Zhongnan  University of Economics and Law \\
Wuhan,  China. \\
\printead{e2}}

\address{Yanyan Liu\\
School of Mathematics and Statistics\\
Wuhan University \\
Whuhan, China.\\
\printead{e3}}

\address{Xiliang Lu \\
School of Mathematics and Statistics\\
Wuhan University \\
Whuhan, China,\\
and Hubei Key Laboratory of Computational Science \\
Wuhan University\\
Whuhan, China.\\
\printead{e4}}
\end{aug}

\bigskip
\begin{abstract}
We develop a constructive approach to estimating sparse, high-dimensional linear regression models. 
The approach is a computational algorithm motivated from the KKT conditions for the $\ell_0$-penalized least squares solutions. It generates a sequence of solutions iteratively, based on support detection using primal and dual information and root finding.  We refer to the algorithm as SDAR for brevity. Under a sparse Rieze condition on the design matrix and certain other conditions, we show that with high probability,
the $\ell_2$ estimation error of the solution sequence
decays exponentially to the minimax error bound in $O(\sqrt{J}\log(R))$ steps; and under a mutual coherence condition and certain other
conditions, the $\ell_{\infty}$ estimation error
decays to the optimal error bound in $O(\log(R))$ steps,
where $J$  is the number of important predictors,
$R$ is the relative magnitude of the nonzero target coefficients. Computational complexity analysis shows that the cost of SDAR is $O(np)$
per iteration.
Moreover the oracle least squares estimator can be exactly recovered with high probability at the same cost if we know the sparsity level.
We also consider 
an adaptive version of SDAR to make it more practical in applications. Numerical comparisons with 
Lasso, MCP and  greedy methods demonstrate that 
SDAR is competitive with or outperforms them in accuracy and efficiency.
\end{abstract}

\begin{keyword}[class=AMS]
\kwd[Primary ]{62N02}
\kwd[; secondary ]{62G05}
\end{keyword}

\begin{keyword}
\kwd{Adaptive thresholding; global geometrical convergence;
 $\ell_0$ penalization;  support detection; root finding. }
\end{keyword}
\end{frontmatter}

\section{Introduction}

Consider the linear regression model
\begin{equation}\label{model}
 y = X\beta^{*} + \eta
\end{equation}
where $y\in \real^n$  is a response vector, $X\in \real^{n\times p}$ is the design matrix with $\sqrt{n}$-normalized columns,
$\beta^{*} =(\beta^{*}_{1}, \ldots ,
\beta^{*}_{p})^{\prime} \in \real^{p}$ is the vector of underlying
regression coefficients  and $\eta  \in \real^{n}$ is a vector of random errors
with mean $0$ and variance $\sigma^2$.
We focus on the case where $p \gg n$ and the model is sparse in the sense
that only a relatively small number of predictors are important.

Without any constraints on $\beta^{*}$ there exist infinitely many least squares solutions for (\ref{model}) since it is a highly undetermined linear system
when $p \gg n$.  These solutions usually over-fit the data.
Under the assumption that $\beta^{*}$ is sparse in the sense that the number
of important nonzero elements of $\beta^{*}$ 
is small relative to  $n$, we can estimate $\beta^{*}$ by the solution of the $\ell_0$ minimization problem
\begin{equation}\label{l0}
 \min_{\beta\in \mathbb{R}^{p}}\frac{1}{2n}\|X \beta
 -y\|^{2}_{2},   \quad \textrm{subject to}  \quad \|\beta\|_{0}
 \leq s,
\end{equation}
where  $s>0$  controls the sparsity level. However,
(\ref{l0}) is generally NP hard [Natarajan (1995)],  hence it is
challenging to design a stable and fast algorithm to solve it.

In this paper we propose a 
constructive approach to estimating (\ref{model}). 
The approach is a computational algorithm motivated from the
necessary KKT conditions for the Lagrangian form of (\ref{l0}).
It finds an approximate sequence of solutions to the KKT equations
iteratively using a support detection and root finding method
until convergence is achieved.
For brevity, we refer to the proposed approach as SDAR.

\subsection{Literature review}
Several approaches have been proposed to approximate (\ref{l0}).
Among them the Lasso [Tibshirani (1996),  Chen, Donoho and Saunders (1998)], which uses
the $\ell_1$ norm of $\beta$ in the constraint instead of the $\ell_0$ norm
in (\ref{l0}), is a popular method.
Under the irrepresentable condition on the design matrix $X$ and a sparsity assumption on  $\beta^{*}$, Lasso is model selection (and sign) consistent [Meinshausen and B\"{u}hlmann (2006), Zhao and Yu (2006), Wainwright (2009)].
Lasso is a convex minimization problem. Several fast algorithms have been proposed, including LARS (Homotopy)
[Osborne, Presnell and  Turlachet (2000), Efron et al. (2004),
Donoho and  Tsaig (2008)], coordinate descent [Fu (1998),
Friedman et al. (2007),  Wu and Lange (2008)], and proximal gradient descent
[Agarwal, Negahban and  Wainwright (2012), Xiao and Zhang (2013), Nesterov (2013)].


However, Lasso tends to overshrink large coefficients, which leads to biased estimates
[Fan and Li (2001), Fan and Peng (2004)]. The adaptive Lasso proposed by Zou (2006) and analyzed by Huang, Ma and Zhang (2008) in high-dimensions can
achieve the oracle property under certain conditions.  But its requirements on the minimum value of the nonzero coefficients are not optimal.
Nonconvex penalties such as
the smoothly clipped absolute deviation (SCAD) penalty [Fan and Li (2001)],
the minimax concave penalty (MCP) [Zhang (2010a)],
and the capped $\ell_{1}$ penalty [Zhang (2010b)]  were adopted to remedy
these problems.
Although the global minimizers (also there exist some local minimizers) of these  nonconvex regularized models can eliminate the estimation
bias and enjoy the oracle properties [Zhang and Zhang (2012)], computing their global minimizers or local minimizers with the desired statistical properties is challenging since the optimization problem is nonconvex, nonsmooth and large scale in general.

There are several numerical algorithms for nonconvex regularized problems. The first kind of such methods can be considered a special case (or variant) of minimization-maximization algorithm [Lange,  Hunter  and   Yang (2000),  Hunter and Li (2005)] or of multi-stage convex relaxation [Zhang (2010b)]. Examples include
local quadratic approximation (LQA) [Fan and Li (2001)], local linear approximation (LLA) [Zou and Li (2008)], decomposing the penalty into a difference of two convex terms  (CCCP) [Kim, Cho and Oh (2008),  Gasso, Rakotomamonjy and Canu (2009)].
The second type of methods is the 
coordinate descent  algorithms,  including coordinate descent of the Gauss-Seidel version [Breheny and Huang (2011),  Mazumder et al. (2011)] and coordinate descent of the Jacobian version, i.e., the iterative thresholding method
[Blumensath and Davies (2008), She (2009)].
These algorithms generate a sequence at which the objective functions are nonincreasing, but the convergence of the sequence itself is generally unknown. Moreover, if the sequence generated from multi-stage convex relaxation  (starts from a Lasso solution) converges,  it converges to some stationary point which may enjoy certain oracle statistical properties [Zhang (2010b), Fan, Xue and Zou (2014)] with the cost of a Lasso solver per iteration.
Jiao, Jin and Lu (2013) proposed a globally convergent primal dual active set algorithm for a class of nonconvex regularized problems. Recently, there has been much effort to show that CCCP, LLA and the path following proximal-gradient method can track the local minimizers with the desired statistical properties [Wang, Kim and Li (2013), Fan, Xue and Zou (2014), Wang, Liu and Zhang (2014) and Loh and Wainwright (2015)].

Another line of research includes greedy methods such as the orthogonal match pursuit (OMP)
[Mallat and Zhang (1993)] for solving (\ref{l0}) approximately.
The main idea is to iteratively select one variable with the strongest correlation with the current residual at a time.
Roughly speaking, the performance of OMP can be guaranteed if the small submatrices of $\bX$ are well conditioned   
like orthogonal matrices [Tropp (2004), Donoho,  Elad and Temlyakov (2006), Cai and Wang (2011), Zhang (2011a)].
Fan and Lv (2008) proposed a marginal correlation learning method called sure independence screening (SIS), see also Huang, Horowitz and Ma (2008) with an equivalent formulation
that uses penalized univariate regression for screening.
Fan and Lv (2008) recommended an iterative SIS to improve the finite-sample performance.
As they discussed the iterative SIS also uses the core idea of OMP but it
can select more  features at each iteration.
There are several more recently developed greedy methods aimed at selecting several variables a time or removing variables adaptively,
such as 
hard thresholding gradient descent (GraDes) [Garg and  Khandekar (2009)], stagewise OMP (StOMP) [Donoho et al. (2012)],
 adaptive forward-backward selection (FoBa) [Zhang (2011b)].

\subsection{Contributions}
SDAR is a new approach for fitting sparse, high-dimensional regression models.
Compared with the penalized methods, SDAR does not aim to minimize any regularized criterion, instead, it constructively generates a sequence of solutions $\{\beta^k, k \ge 1\}$ to the KKT equations of the $\ell_0$ penalized criterion.
SDAR can be viewed as a primal-dual active set method for solving the $\ell_{0}$ regularized least squares problem with a changing regularization parameter $\la$ in each iteration (this will be explained in detail in Section 2). However, the sequence generated by SDAR is not a minimizing sequence of the $\ell_{0}$ regularized least squares criterion.
Compared with the greedy methods, the features selected by SDAR are based on the sum of the primal (current approximation $\bbk$) and the dual information (current correlation  $\dk = \bX'(y-\bX\bbk)/n$), while greedy methods
only use  dual information. The differences between
SDAR and several greedy methods will be explained in more detail in Subsection
\ref{Relatedwork}.

We show that SDAR achieves sharp estimation error bounds in finite iterations.
Specifically, we show that: (a) under a sparse Rieze condition on $X$ and a sparsity assumption on $\beta^{*}$, $\|\beta^k-\bbp\|_{2}$ achieves the minimax error bound up to a constant factor with high probability in
$O(\sqrt{J}\log(R))$ steps; (b) under a mutual coherence condition on $X$ and a sparsity assumption on $\beta^*$, the $\|\beta^k-\bbp\|_{\infty}$ achieves the optimal error bound $O(\sigma \sqrt{\log(p)/n})$  in $O(\log(R))$ steps,
where $J$ is the number of important predictors, 
$R$ is the relative magnitude of the nonzero target coefficients
(the exact definitions of $J$ and $R$ are given in Section 3);
(c) under the conditions in (a) and (b),
with high probability,  $\beta^k$ coincides with the oracle least squares estimator in
$O(\sqrt{J}\log(R))$ and $O(\log(R))$ iterations, respectively,
if $J$ is available and the minimum magnitude of the nonzero elements of $\beta^{*}$ is of the order $O(\sigma \sqrt{2\log(p)/n})$, which is the optimal magnitude of detectable signal.

An interesting aspect of the result in (b) is that the number of iterations for SDAR to achieve the optimal error bound is $O(\log(R))$, which does not depend on the underlying sparsity level. This is an appealing feature for the problems with a large triple $(n, p, J)$.
We also analyze the computational cost of SDAR and show that it is $O(np)$ per iteration, comparable to the existing penalized and greedy methods.


In summary, the main contributions of this paper are as follows.

\begin{itemize}

\item We proposed a new approach to fitting sparse, high-dimensional regression models.
Unlike the existing penalized methods that
approximate the $\ell_0$ penalty using the $\ell_1$ or its concave modifications,
the proposed approach seeks to directly approximate the solutions to the
$\ell_0$ penalized problem.

\item We show that the sequence of solutions $\{\beta^k, k \ge 1\}$
generated by the SDAR achieves sharp error bounds. An interesting aspect
of our results is that these bounds can be achieved in $O(\sqrt{J}\log(R))$ or $O(\log(R))$  iterations.



\item
We also consider an adaptive version of SDAR, named ASDAR, by tuning the size of the fitted model based on a data driven procedure such as the BIC.
Our simulation studies
demonstrate that SDAR/ ASDAR outperforms the Lasso, MCP and several greedy methods in terms of accuracy and efficiency. 
\end{itemize}

\subsection{Notation}
Let $\|\beta\|_q = (\sum_{i=1}^{p}|\beta_{i}|^q)^{1/q}, q\in [1,\infty],$  be
the $q$-norm of a column vector
$\beta= (\beta_{1},\ldots,\beta_{p})^{\prime} \in \mathbb{R}^{p}$.
We denote the  number of nonzero elements of $\beta$ by $\|\beta\|_{0}$.
We denote 
the operator norm of $X$ induced by the vector 2-norm by $\norm X$.
We use \textbf{E} to denote the identity matrix.
$\textbf{0}$ denotes a column vector in $\mathbb{R}^{p}$ or a matrix with elements all 0.
Let $S =\{1,2,...,p\}$.  For any $ A,B\subseteq S$ with length $|A|,|B|$, we
denote $\beta_{A}=(\beta_{i}, i\in A)\in \mathbb{R}^{|A|}$,  $X_{A}=(\bX_{i}, i\in A)\in \mathbb{R}^{n\times|A|}$.  And $\bX_{AB}\in \mathbb{R}^{|A|\times
|B|}$  denotes a submatrix of $\bX$ whose rows and columns
are listed in $A$ and $B$, respectively.
We define $\beta |_{A}\in \mathbb{R}^{p}$ with its $i$-th element
$(\beta |_{A})_i = \beta_i \textbf{1}(i\in A)$,  where $\textbf{1}(\cdot)$ is the indicator function. We denote the support of $\beta$ by $\textrm{supp}(\beta)$.
Define $ A^{*} = \textrm{supp}(\bbp)$  and  $K = \normz{\bbp}$.
We use $\|\beta\|_{k,\infty}$ and $\abs{\beta}_{\textrm{min}}$ to denote the $k$-th largest elements (in absolute value) of $\beta$ and the  minimum absolute value of $\beta$, respectively.

\subsection{Organization}
In Section 2 we
develop the SDAR algorithm based on the necessary conditions for the $\ell_0$ penalized solutions. In Section 3 we establish the nonasymptotic error bounds of the SDAR solutions.
In Section 4 we describe a data driven adaptive SDAR. In Section 5 we analyze the computational complexity of SDAR and ASDAR and  discuss the relationship of SDAR with several greedy methods and screening method.
In Section 6 we conduct simulation studies to demonstrate the 
performance of SDAR/ ASDAR by comparing  it with Lasso, MCP and several greedy methods.
We conclude in Section 7 with some final remarks. The proofs are given in the Appendix.

\section{Derivation of SDAR}
In this section we describe the SDAR algorithm.
Consider the Lagrangian form of the $\ell_{0}$ regularized minimization problem  (\ref{l0}),
\begin{equation}\label{eq1}
\min_{\bb \in \bRp } \frac{1}{2n}\normt{\bX\bb-y}^2 + \la\normz{\bb}.
\end{equation}

\begin{lem}\label{lem1}
Let  $\beta^\diamond$ be a minimizer of \eqref{eq1}.  Then $\beta^\diamond$ satisfies:
\begin{equation}\label{eq2}
\begin{cases}
d^\diamond = \bX'(y-\bX\beta^\diamond)/n, \\
\beta^\diamond = H_{\la}(\beta^\diamond + d^\diamond),
\end{cases}
\end{equation}
where $H_{\la}(\cdot)$ is the hard thresholding operator defined by
\begin{equation}\label{hardth}
(H_{\la}(\beta))_{i}=
\begin{cases}
\textrm{0}, & \text{if $|\beta_{i}|<\sqrt{2\la}$},\\
\text{$\beta_{i}$}, & \text{if $|\beta_{i}|\geq \sqrt{2\la}$}.
\end{cases}
\end{equation}
Conversely, if $\beta^\diamond$ and $d^{\diamond}$ satisfy \eqref{eq2}, then $\beta^\diamond$ is a local minimizer of \eqref{eq1}.
\end{lem}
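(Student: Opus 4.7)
The plan is to translate the vector-valued fixed-point equation (2.2) into coordinate-wise statements on the support $A^\diamond = \mathrm{supp}(\beta^\diamond)$ and its complement $I^\diamond$, and then relate these to the first-order behaviour of $F(\beta) = \tfrac{1}{2n}\|X\beta - y\|_2^2 + \lambda \|\beta\|_0$. Specifically, the hard-thresholding identity $\beta^\diamond = H_\lambda(\beta^\diamond + d^\diamond)$ together with $d^\diamond = X'(y - X\beta^\diamond)/n$ is equivalent to the three coordinate-wise conditions: $d^\diamond_{A^\diamond} = 0$, $|\beta^\diamond_i| \geq \sqrt{2\lambda}$ for every $i \in A^\diamond$, and $|d^\diamond_i| < \sqrt{2\lambda}$ for every $i \in I^\diamond$. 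These play the role of the usual KKT conditions, adapted to the discrete $\ell_0$ penalty.

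For the forward direction, I would first observe that restricting $F$ to vectors supported in $A^\diamond$ gives a pure quadratic plus the constant $\lambda|A^\diamond|$; minimality of $\beta^\diamond$ therefore forces the least-squares normal equation $X'_{A^\diamond}(y - X\beta^\diamond)/n = 0$, i.e.\ $d^\diamond_{A^\diamond} = 0$. Next, for each $i \in A^\diamond$, comparing $\beta^\diamond$ with the vector obtained by zeroing its $i$-th coordinate, and using $\|X_i\|_2^2 = n$ together with $d^\diamond_i = 0$, gives
\[
F(\beta^\diamond - \beta^\diamond_i e_i) - F(\beta^\diamond) = \tfrac{1}{2}(\beta^\diamond_i)^2 - \lambda,
\]
so optimality yields $(\beta^\diamond_i)^2 \geq 2\lambda$. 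For $i \in I^\diamond$, comparing $\beta^\diamond$ with $\beta^\diamond + t e_i$ and minimising over $t$ gives a minimum value $-\tfrac{1}{2}(d^\diamond_i)^2 + \lambda$, so optimality yields $(d^\diamond_i)^2 \leq 2\lambda$. Feeding these inequalities back into the definition (2.3) reproduces (2.2).

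For the converse, I would read off from (2.2) that $d^\diamond_{A^\diamond} = 0$ (otherwise $H_\lambda(\beta^\diamond + d^\diamond)$ would not agree with $\beta^\diamond$ coordinate-wise on $A^\diamond$), that $|\beta^\diamond_i| \geq \sqrt{2\lambda}$ on $A^\diamond$, and that $|d^\diamond_i| < \sqrt{2\lambda}$ on $I^\diamond$. To verify local minimality, take a perturbation $h$ with $\|h\|_\infty < \min\bigl(\min_{i \in A^\diamond}|\beta^\diamond_i|,\, \sqrt{\lambda/2}\bigr)$ and split into two cases. If $h_{I^\diamond} = 0$, then $\|\beta^\diamond + h\|_0 = \|\beta^\diamond\|_0$, and using $d^\diamond_{A^\diamond} = 0$ the difference reduces to $\tfrac{1}{2n}\|Xh\|_2^2 \geq 0$. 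Otherwise let $B = \{i \in I^\diamond : h_i \neq 0\}$; the penalty increases by $\lambda|B|$, while the quadratic part changes by at least $-\sum_{i \in B} h_i d^\diamond_i \geq -\sqrt{2\lambda}\|h_B\|_1 \geq -\sqrt{2\lambda}\,|B|\,\|h\|_\infty$, which is strictly smaller in magnitude than $\lambda|B|$ once $\|h\|_\infty < \sqrt{\lambda/2}$. Thus $F(\beta^\diamond + h) \geq F(\beta^\diamond)$ in both cases.

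The main obstacle is the boundary case $|d^\diamond_i| = \sqrt{2\lambda}$ in the forward direction: the coordinate-wise comparison argument only delivers the weak inequality $\leq$, whereas the hard-thresholding convention in (2.3) encodes a strict inequality. In such a degenerate case there are two global minimisers sharing the same value (one with $i$ in the support, one without), and the forward statement should be understood as applying to the minimal-support minimiser; I would spell this tie-breaking convention out explicitly. Apart from this, everything reduces to elementary scalar optimisations enabled by the column normalisation $\|X_i\|_2^2 = n$.
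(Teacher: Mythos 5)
Your proof is correct and follows essentially the same route as the paper's: coordinate-wise optimality comparisons (the normal equations on the support, a drop-a-coordinate test on $A^\diamond$, and an add-a-coordinate test on $I^\diamond$) for the forward direction, and the identical two-case perturbation argument ($h$ supported on $A^\diamond$ versus $h$ touching $I^\diamond$) for the converse. Your explicit flagging of the boundary case $|d^\diamond_i|=\sqrt{2\lambda}$ is in fact a point the paper's own proof silently glosses over when it passes from $\beta^\diamond_i\in\mathop{\textrm{argmin}}_{t}\tfrac12\bigl(t-(\beta^\diamond_i+d^\diamond_i)\bigr)^2+\lambda|t|_0$ directly to $\beta^\diamond_i=H_{\lambda}(\beta^\diamond_i+d^\diamond_i)$, so your treatment is, if anything, the more careful one.
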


\begin{remark}
Lemma \ref{lem1} gives the KKT condition of the $\ell_0$ regularized minimization problem  (\ref{l0}), similar results for SCAD 
MCP 
capped-$\ell_1$ 
regularized least squares models can be derived by replacing the hard thresholding operator in \eqref{eq2} with their corresponding thresholding operators, see Jiao, Jin and Lu (2013) for details.
\end{remark}

Let $\As = \textrm{supp}(\bbs)$ and $ \Is=(\As)^c$. Suppose that the rank of $\bX_{\As}$ is $|\As|$. From the definition of $H_\la(\cdot)$ and \eqref{eq2} it follows that
\begin{equation*}
\As = \dkhB{ i \in S \big| \abs{\bbs_i + \ds_i} \geq \sqrt{2\la}},\quad \Is=\dkhB{ i\in S \big| \abs{\bbs_i + \ds_i} < \sqrt{2\la}}, \end{equation*}
and
\begin{equation*}
\left\{
\begin{aligned}
\bbs_{\Is} &= \textbf{0}, \\
\ds_{\As} &=\textbf{0}, \\
\bbs_{\As} &=(\bX_{\As}'\bX_{\As})^{-1}\bX_{\As}'y,\\
\ds_{\Is} &= \bX'_{\Is}(y-\bX_{\As}\bbs_{\As})/n.
\end{aligned}
\right.
\end{equation*}
We solve this system of equations iteratively.
Let $\{\bbk, \dk\}$ be the solution at the $k$th iteration.
We approximate $\{\As, \Is\}$ by
\begin{equation}
\label{DefAk}
\Ak = \dkhB{ i\in S \big| |\bbk_i + \dk_i| \geq \sqrt{2\la}},\quad \Ik=(\Ak)^c.
\end{equation}
Then we can obtain a new approximation pair $(\bbkk, \dkk)$ by
\begin{equation}
\label{Stepk}
\left\{
\begin{aligned}
\bbkk_{\Ik} &= \textbf{0}, \\
\dkk_{\Ak} &= \textbf{0}, \\
\bbkk_{\Ak} &= (\bX_{\Ak}'\bX_{\Ak})^{-1}\bX_{\Ak}'y, \\
\dkk_{\Ik} &= \bX'_{\Ik}(y-\bX_{\Ak}\bbkk_{\Ak})/n.
\end{aligned}
\right.
\end{equation}
Now suppose we want the support of the solutions to have size
$T\ge 1$. We can choose \begin{equation}
\label{lam1}
\sqrt{2\la} \de \|{\bbk + \dk}\|_{T,\infty}
\end{equation}
in (\ref{DefAk}). With this choice of $\la$, we have  $|\Ak|=T, k \ge 1.$
Then with an initial $\beta^0$ and using  (\ref{DefAk}) and (\ref{Stepk}) with the $\la$ in (\ref{lam1}), we obtain a sequence of solutions $\{\beta^k, k \ge 1\}$.

There are two key aspects of SDAR. In (\ref{DefAk}) we detect the support
of the solution based on the sum of the primal ($\bbk$) and dual ($\dk$) approximations and, in (\ref{Stepk}) we calculate the nonzero solution on the detected support. Therefore, SDAR can be considered an iterative method for solving the KKT equations (\ref{eq2}) with an important modification: a different $\la$ value given in (\ref{lam1}) in each step of the iteration is used.
Thus we can also view SDAR as an adaptive thresholding  and  least-squares fitting  procedure
that uses both the primal and dual information.
We summarize SDAR in Algorithm \ref{alg1}.

\begin{remark}
If  $\Akk = \Ak$ for some $k$ we stop SDAR since the sequences generated by SDAR will  not change. Under certain conditions,  we will show  that $\Akk = \Ak = \textrm{supp}(\beta^*)$ if $k$ is large enough, i.e., the stop condition in SDAR will be active and the output is  the oracle estimator when it stops.
 \end{remark}

As an example, Figure \ref{fig1} shows the solution path of SDAR with
$T=1, 2, \ldots, 5K$,   
the MCP  and the Lasso paths
on $5K$ different $\lambda$ values for a data set generated from a model with
$(n=50, p=100, K=5, \sigma=0.3, \rho = 0.5, R=10)$, which will be described in Section 6.
The Lasso path is computed using LARS [Efron et al. (2004)].
Note that the SDAR path is a function of the fitted model size $T=1, \ldots, L$, where
$L$ is the size of the largest fitted model. In comparison, the paths of
MCP and Lasso are functions of the penalty parameter $\lambda$ in prespecified interval.
In this example, SDAR selects the first $T$ largest components of $\bbp$ correctly when
$T \leq K$.

\begin{algorithm}[H]
\caption{Support detection and root finding (SDAR)}
\label{alg1}
\begin{algorithmic}[1]
\REQUIRE
$\beta^0$, $d^0=\bX'(y-\bX\beta^0)/n$, $T$; set $k=0$.
\FOR {$k=0,1,2,\cdots$}
\STATE $\Ak = \{ i \in S \big| |\beta^k_{i} + d^k_{i}| \geq \|\beta^k+d^k\|_{T,\infty}\},\quad \Ik=(\Ak)^c$
\STATE $\bbkk_{\Ik} = \textbf{0}.$
\STATE $\dkk_{\Ak} = \textbf{0}.$
\STATE $\bbkk_{\Ak} = (\bX_{\Ak}'\bX_{\Ak})^{-1}\bX_{\Ak}'y.$
\STATE $\dkk_{\Ik} = \bX'_{\Ik}(y-\bX_{\Ak}\bbkk_{\Ak})/n.$
 \IF {$\Akk = \Ak$,}
 \STATE Stop and denote the last iteration by $\beta_{\hat{A}}, \beta_{\hat{I}}, d_{\hat{A}}, d_{\hat{I}}$;
 \ELSE
 \STATE $k = k+1$.
 \ENDIF
\ENDFOR
\ENSURE $\hat{\beta}= (\beta_{\hat{A}}^{\prime},\beta_{\hat{I}}^{\prime})^{\prime}$
as the estimates of $\beta^*$.
\end{algorithmic}
\end{algorithm}

\begin{figure}[H]
\centering
\begin{tabular}{ccc}
\includegraphics[trim = 0.5cm 0cm 0.5cm 0cm, clip=true,width=4cm]{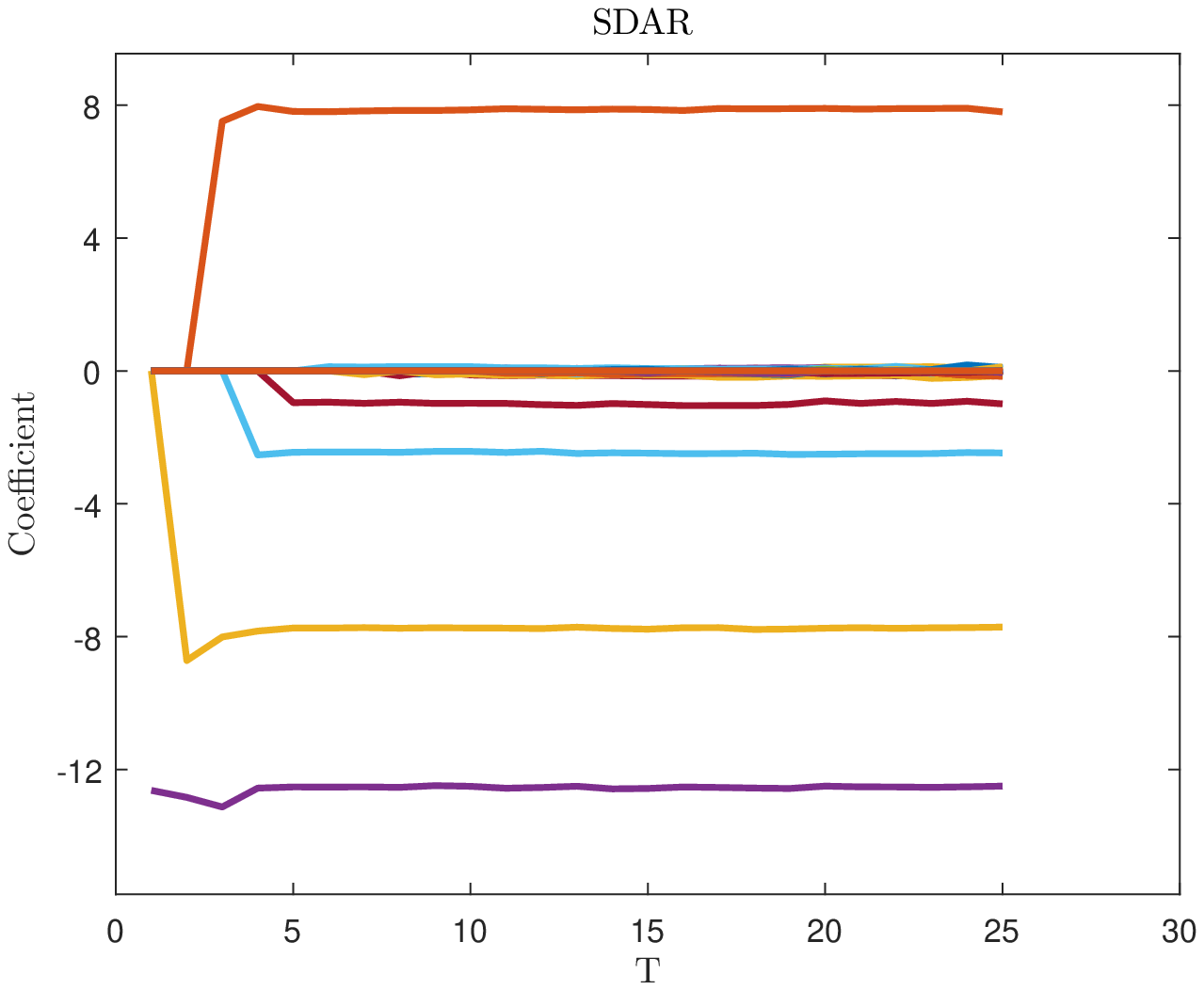}&
\includegraphics[trim = 0.5cm 0cm 0.5cm 0cm, clip=true,width=4cm]{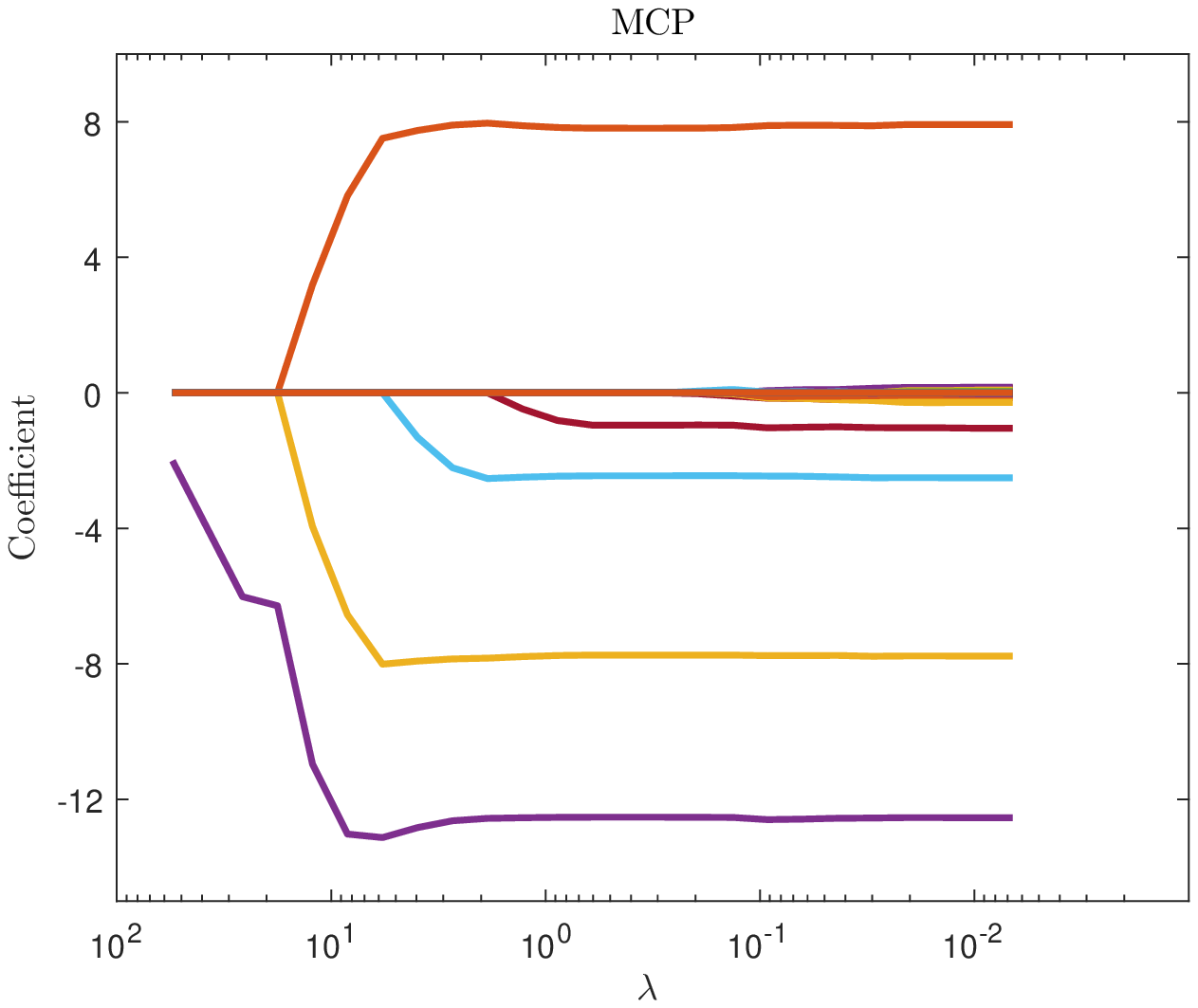}&
\includegraphics[trim = 0.5cm 0cm 0.5cm 0cm, clip=true,width=4cm]{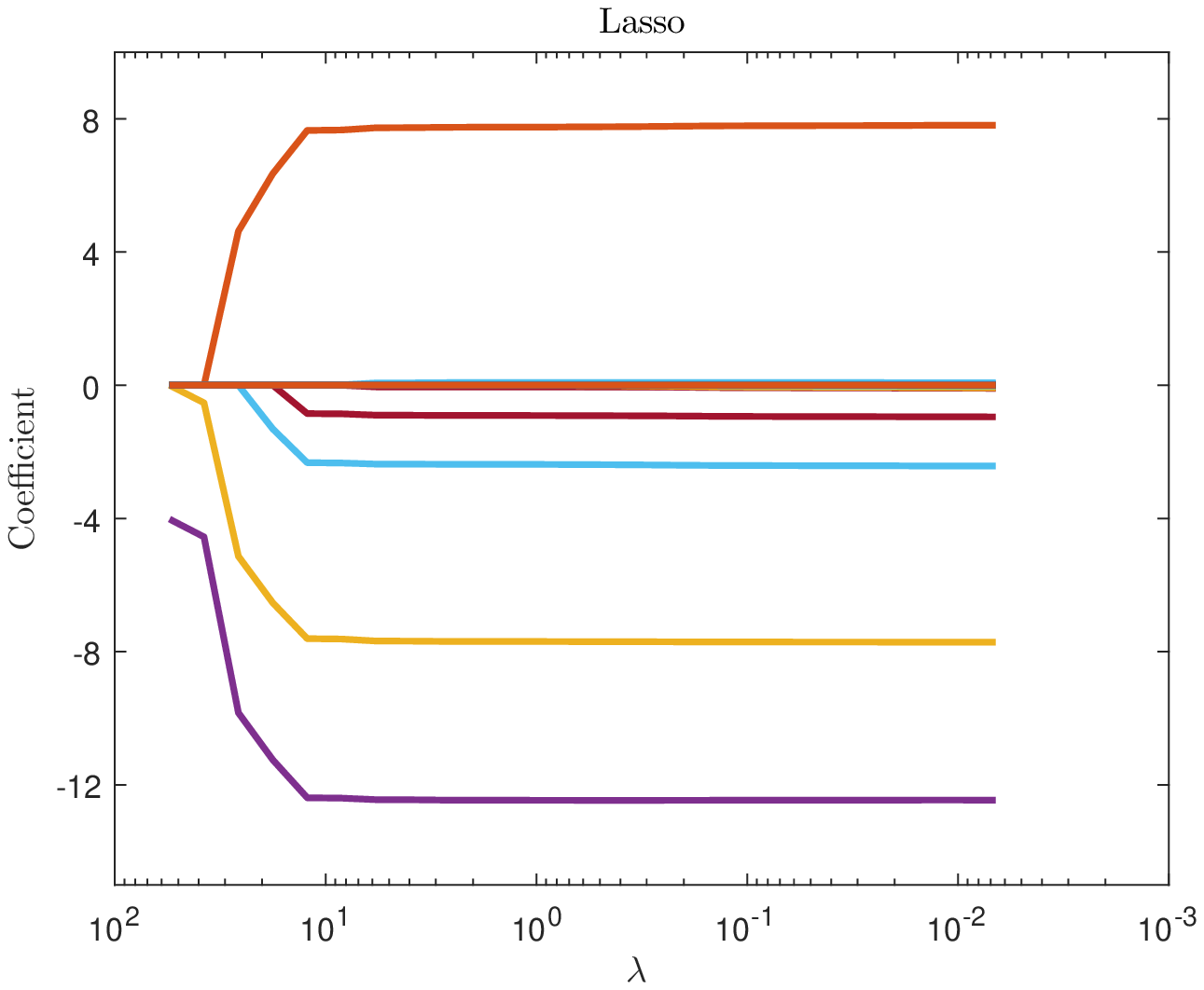}\\
\end{tabular}
\caption{The solution paths of SDAR,  MCP and Lasso.}\label{fig1}
\end{figure}


\section{Nonasymptotic error bounds}
In this section we present the nonasymptotic $\ell_2$ and $\ell_{\infty}$ error bounds
for the solution sequence generated by
SDAR as given in Algorithm \ref{alg1}.


We say that $\bX$ satisfies the SRC [Zhang and Huang (2008), Zhang (2010a)] with order $s$ and spectrum bounds $\{\cm(s), \cp(s)\}$ if
\[
0 < \cm(s)\leq\frac{\normt{\bX_A u}^2}{n\normt{u}^2}\leq\cp(s) < \infty,
\forall \, {0}\neq u\in \mathbb{R}^{\abs{A}}
\mbox{ with } A\subset S \mbox{ and } \abs{A}\leq s.
\]
We denote this condition by $\bX \sim \textrm{SRC}\{s,\cm(s), \cp(s)\}$.
The SRC gives the range of the spectrum of the diagonal sub-matrices of the Gram matrix $G=\bX^{\prime}\bX/n$.
The spectrum of the off diagonal sub-matrices of $G$ can be bounded by the sparse orthogonality constant $\theta_{a,b}$ defined as the smallest number such that
\[
 \theta_{a,b} \ge \frac{\normt{\bX_A^{\prime} \bX_B u}}{n\normt{u}},
\forall \, \textrm{0}\neq u\in \mathbb{R}^{\abs{B}} \mbox{ with }A,B\subset S,  \abs{A}\leq a, \abs{B}\leq b, \mbox{ and } A\cap B = \emptyset.
\]
Another useful quantity is the mutual coherence $\mu$ defined as $\mu = \max_{i\neq j}\abs{G_{i,j}}$, which characterizes the minimum angle between different columns of $\bX/\sqrt{n}$.
Some useful properties of these quantities are summarized in 
Lemma \ref{lem2} in the Appendix.

In addition to the regularity conditions on the design matrix, another key condition is the sparsity of the regression parameter $\bbp$. The usual sparsity condition is to assume that the regression parameter $\bbp_i$ is either nonzero or zero and that the number of nonzero coefficients is relatively small. This strict sparsity condition is not realistic in many problems.
Here we allow that $\bbp$ may not be strictly sparse but most of its elements are small.
Let $A^{*}_J=\{i\in S: |\bbp_i| \geq \|\bbp\|_{J,\infty}\}$ be the set
of the indices  of the first $J$ largest components of  $\bbp$.
Typically, we have $J \ll n$.
Let
\begin{equation}
\label{Rbar}
R=\frac{\bar{M}}{\bar{m}},
\end{equation}
where $\bar{m} = \min\{|\bbp_{i}|, i\in A^{*}_J\}$ and $ \bar{M} = \max\{|\bbp_{i}|, i\in A^{*}_J\}$.
Since $\bbp = \bbp|_{A^{*}_J} +  \bbp|_{(A^{*}_{J})^c}$,  we can transform the non-exactly sparse model (\ref{model}) to the following  exactly sparse model by including the small components of $\beta^{*}$ in the noise,
\begin{equation}\label{model*}
y = \bX\tb + \te,
\end{equation}
where
\begin{equation}\label{teta}
\tb = \bbp|_{A^{*}_J} \ \mbox{ and } \
\te = \bX\bbp|_{(A^{*}_{J})^c}+ \eta.
\end{equation}
Let $R_{J} = \normt{\bbp|_{(A^{*}_{J})^c}}+\normo{\bbp|_{(A^{*}_{J})^c}}/\sqrt{J}$, which is a measure of the magnitude of the small components of $\bbp$ outside $A^{*}_J$.
Of course, $R_{J}=0$ if $\bbp$ is exactly $K$-sparse with $K\leq J$. Without loss of generality, we let $J = K$, $m = \bar{m}$ and $M = \bar{M}$  for simplicity   
if $\bbp$ is exactly $K$-sparse.

Let $\beta^{J,o}$ be the oracle estimator defined as
$ \beta^{J,o}=\arg\min_{\beta}\{\frac{1}{2n}\|y-X\beta\|_2^2, \beta_j=0, j \not \in A^{*}_J\}$,
that is,  
$\beta^{J,o}_{A^{*}_{J}} = \bX_{A^{*}_{J}}^{\dag}y$
and
$\beta^{J,o}_{(A^{*}_{J})^c} = \textbf{0}$, where $\bX_{A^{*}_{J}}^{\dag}$ is the generalized inverse of $\bX_{A^{*}_{J}}$ and equals to  $(\bX_{A^{*}_{J}}^{\prime}\bX_{A^{*}_{J}})^{-1}\bX_{A^{*}_{J}}^{\prime}$ if $\bX_{A^{*}_{J}}$ is of full column rank.
So $\beta^{J,o}$ is obtained by keeping the predictors corresponding to the
$J$ largest components of $\bbp$ in the model and dropping the other predictors. Obviously,  $\beta^{J,o} = \beta^{o}$ if $\bbp$ is exactly $K$-sparse, where $\beta^{o}_{A^{*}} = \bX_{A^{*}}^{\dag}y, \beta^{o}_{(A^{*})^c} = \textbf{0}$.

\subsection{$\ell_{2}$ error bounds}
Let $1\le T \le p$ be a given integer used in Algorithm \ref{alg1}.
We require the following basic assumptions on the design matrix $\bX$ and
the error vector $\eta$.

(A1) The input integer $T$ used in Algorithm \ref{alg1} satisfies  $ T\geq J$.

(A2) $\bX \sim \text{SRC}\{2T,\cm(2T), \cp(2T)\}$.




(A3) The random errors $\eta_1, \ldots, \eta_n$ are independent and identically
distributed with mean zero and sub-Gaussian tails, that is, there exists a $\sigma\geq 0$ such that $E[\exp(t\eta_i)]\leq \exp(\sigma^2t^2/2)$ for $t \in \mathbb{R}^{1}$, $i = 1,\ldots, n.$

Let
\[
\gamma= \frac{2\theta_{T,T}+(1+\sqrt{2})\theta_{T,T}^2}{\cm(T)^2}+ \frac{(1+\sqrt{2})\theta_{T,T}}{\cm(T)}.
\]
Define
$\htt = \max_{A\subseteq S:|A|\le T}\normt{\bX_{A}^{\prime}\te}/n$,
where $\te$ is defined in (\ref{teta}).

\begin{thm}\label{thmt}
Let $1\le T \le p$ be a given integer used in Algorithm \ref{alg1}.
Suppose $\gamma < 1$.
\begin{itemize}
\item[(i)] Assume   \textrm{(A1) and  (A2)} hold. We have
\begin{align}
\normt{\tb|_{A_{J}^* \backslash \Akk}} & \leq \gamma^{k+1}\normt{\tb} + \frac{\gamma }{(1-\gamma)\theta_{T,T}}\htt,\label{thmt-1} \\
\normt{\bbkk-\tb}& \leq b_1\gamma^k\normt{\tb}+ b_2\htt, \label{thmt-2}
\end{align}
where
\begin{equation}
\label{b12}
b_1 = 1+\frac{\theta_{T,T}}{\cm(T)} \ \mbox{ and } \
b_2=\frac{\gamma }{(1-\gamma)\theta_{T,T}}b_1 + \frac{1}{\cm(T)}.
\end{equation}

\item[(ii)] Assume  (A1)-(A3) hold. 
Then  for any $\alpha\in (0,1/2)$, with probability at least $1 - 2\alpha$,
\begin{align}
\normt{\tb|_{A_{J}^* \backslash \Akk}} &\leq \gamma^{k+1}\normt{\tb} + \frac{\gamma }{(1-\gamma)\theta_{T,T}}\vps_1,\label{thmt-3}\\
\normt{\bbkk-\tb}&\leq b_1\gamma^k\normt{\tb}+ b_2\vps_1, \label{thmt-4}
\end{align}
where
$
 \vps_1 = \cp(J)R_{J} + \s\sqrt{T}\sqrt{2\log(p/\alpha)/n}.
$
\end{itemize}

\end{thm}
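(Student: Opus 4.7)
The proof plan is to first establish part (i) via a one-step contraction on how much of the ``true'' $J$-support is missed by $A^{k+1}$, and then upgrade part (i) to part (ii) by bounding $\htt$ with high probability.

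The core of part (i) is a recursive inequality
\begin{equation*}
\normt{\tb|_{A_J^*\setminus A^{k+1}}} \le \gamma\, \normt{\tb|_{A_J^*\setminus A^k}} + c\,\htt,
\end{equation*}
which, iterated together with the trivial bound $\normt{\tb|_{A_J^*\setminus A^0}}\le\normt{\tb}$ and summation of the geometric series $\sum_{j\ge 0}\gamma^j = 1/(1-\gamma)$, would give (\ref{thmt-1}). To derive this contraction, I would rewrite the selection rule of the algorithm in terms of the primal/dual quantities. Using the rewritten model $y=X\tb+\te$ and the fact that $\bbk$ is supported on $A^k$, for indices $i\notin A^k$ we have $(\bbk+\dk)_i = [G(\tb-\bbk)]_i+(X'\te/n)_i$, while for $i\in A^k$ the exact least-squares update (\ref{Stepk}) gives $(\bbk+\dk)_i=\bbk_i$. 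Since $A^{k+1}$ keeps the top-$T$ entries of $\bbk+\dk$ in magnitude and $|A_J^*|\le T$, any index in $A_J^*\setminus A^{k+1}$ must be dominated (in $|\bbk+\dk|$) by an equal-cardinality set of ``wrong'' indices in $A^{k+1}\setminus A_J^*$. Comparing the squared magnitudes of these two sets index-by-index and applying the sparse orthogonality constant $\theta_{T,T}$ to the cross-term $G(\tb-\bbk)$, and the SRC lower bound $\cm(T)$ to $\bbk_{A^k}-\tb_{A^k}$ (obtained from the exact least-squares step), yields exactly the constant $\gamma$ in the statement. The definition of $\htt$ is tailored to control the noise term $X'\te/n$ uniformly over subsets of size $\le T$.

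For the estimation error bound (\ref{thmt-2}), I would decompose
\begin{equation*}
\bbkk-\tb=(\bbkk-\tb)|_{A^{k+1}} - \tb|_{A_J^*\setminus A^{k+1}},
\end{equation*}
using $\tb|_{A_J^*}=\tb$ and $\bbkk|_{(A^{k+1})^c}=0$. The second piece is controlled by (\ref{thmt-1}). For the first piece, writing $\bbkk_{A^{k+1}} = (X_{A^{k+1}}'X_{A^{k+1}})^{-1}X_{A^{k+1}}'y$ and substituting $y=X\tb+\te$ yields, after elementary manipulation, a term bounded by $\theta_{T,T}\normt{\tb|_{A_J^*\setminus A^{k+1}}}/\cm(T)$ plus $\htt/\cm(T)$. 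Combining these gives (\ref{thmt-2}) with the constants $b_1,b_2$ as in (\ref{b12}).

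For part (ii), the only additional task is a high-probability bound on $\htt$. Since $\te = X\bbp|_{(A_J^*)^c}+\eta$, the triangle inequality splits this into a deterministic bias piece $\max_{|A|\le T}\normt{X_A'X\bbp|_{(A_J^*)^c}}/n$, which by SRC/orthogonality is bounded by $\cp(J)R_J$ (here the grouping of $\normt{\cdot}$ and $\normo{\cdot}/\sqrt{J}$ terms in $R_J$ is exactly what is needed to absorb both a block and a one-norm contribution), and a stochastic piece $\max_{|A|\le T}\normt{X_A'\eta}/n$, which by the sub-Gaussian assumption (A3) together with normalized columns of $X$ and a union bound over the $\binom{p}{T}\le p^T$ subsets, is bounded by $\sigma\sqrt{T}\sqrt{2\log(p/\alpha)/n}$ with probability at least $1-2\alpha$. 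Plugging this bound for $\htt$ into (\ref{thmt-1})--(\ref{thmt-2}) yields (\ref{thmt-3})--(\ref{thmt-4}).

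The main obstacle is the one-step contraction. The difficulty lies in accurately tracking how the residual error contributes when comparing the ``true missed'' indices in $A_J^*\setminus A^{k+1}$ against the ``false selected'' indices in $A^{k+1}\setminus A_J^*$ through the primal-dual sum, and in collecting the SRC/orthogonality terms so that the final contraction coefficient $\gamma$ has precisely the form stated. Subsequent bookkeeping (decomposition for $\bbkk-\tb$, probabilistic control of $\htt$) is comparatively routine given that first step.
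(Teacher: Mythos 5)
Your proposal is correct and follows essentially the same route as the paper: the one-step contraction on $\normt{\tb|_{A_J^*\setminus \Ak}}$ (obtained by pairing the lost true indices against the gained false ones through the complementary supports of $\bbk$ and $\dk$, then invoking $\theta_{T,T}$ for the cross-terms and $\cm(T)$ for the least-squares error) is exactly the content of the paper's Lemmas \ref{lem5}--\ref{lem9}, the iteration plus geometric series gives \eqref{thmt-1}, and the split of $\te$ into the bias term $\cp(J)R_J$ and the sub-Gaussian term is the paper's Lemma \ref{lem4}. One minor bookkeeping point: $\bbkk$ is supported on $\Ak$, not $\Akk$, so the decomposition for \eqref{thmt-2} should be taken over $\Ak$ (i.e.\ $\bbkk-\tb=(\bbkk-\tb|_{\Ak})-\tb|_{A_J^*\setminus \Ak}$), which is precisely why the stated bound carries $\gamma^{k}$ rather than $\gamma^{k+1}$.
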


\begin{remark}
Assumption (A1) is necessary for SDAR to select at least $J$ nonzero features.
The sparse Riesz condition in (A2) has been used in the analysis of the Lasso and MCP
[Zhang and Huang (2008), Zhang (2010a)].
Let $c(T) = (1-\cm(2T)) \vee (\cp(2T)-1)$, which is closely related to the
the RIP (restricted isometry property) constant $\delta_{2T}$ for $\bX$ [Cand\`{e}s and Tao (2005)]. By \eqref{lem2-4} in the Appendix, it can be verified  that a sufficient condition
for $\gamma < 1$ is $c(T)\leq0.1599$, i.e.,  $\cp(2T) \leq 1.1599$, $\cm(2T) \geq 0.8401$.
%
%
The sub-Gaussian condition (A3) is often assumed in the literature and slightly weaker than the standard normality assumption.
\end{remark}

\begin{remark}
Several greedy algorithms have  also been studied under the assumptions related to the sparse Riesz condition. For example,  Zhang (2011b) studied OMP under the  condition $c_{+}(T)/c_{-}(31T)\leq 2$.  Zhang (2011a)  analyzed the forward-backward greedy algorithm (FoBa) under the condition $8(T+1)\leq (s-2)T c_{-}^2(sT)$, where $s>0$ is a properly chosen parameter.
 GraDes [Garg and  Khandekar (2009)]
has been analyzed  under the RIP condition $\delta_{2T}\leq 1/3$.
These  conditions and (A2) are related but do not imply each other.
The order of $\ell_2$-norm  estimation error of SDAR is at least as good as that of the above mentioned greedy methods since it achieves the minimax error bound, see, Remark \ref{mimimax} below.
A high level comparison of SDAR with the greedy algorithms will be given in Section 5.2.
\end{remark}

\begin{cor}\label{cort}
\begin{itemize}
\item[(i)] Suppose  \textrm{(A1) and  (A2)} hold. Then
\begin{equation}\label{corterrb}
\normt{\bbk - \tb} \leq c\htt \quad\textrm{if}\quad k\geq \log_{\frac{1}{\gamma}}\frac{\sqrt{J}\bar{M}}{\htt}.
\end{equation}
where $c =b_1+b_2$ with $b_1$ and $b_2$ defined in (\ref{b12}).

Further assume  $\bar{m} \geq \frac{\gamma \htt}{(1-\gamma)\theta_{T,T}\xi}$ for some $0<\xi< 1$, then,
\begin{equation}\label{cortacb}
\Ak \supseteq A^{*}_{J} \quad\textrm{if}\quad  k\geq\log_{\frac{1}{\gamma}}\frac{\sqrt{J}R}{1-\xi}.
\end{equation}
\item[(ii)] Suppose (A1)-(A3) hold. Then, for any $\alpha\in (0,1/2)$,
 with probability at least $1 - 2\alpha$, we have
\begin{equation}\label{corterrg}
\normt{\bbk - \tb} \leq  c \vps_1 \quad\textrm{if}\quad k\geq \log_{\frac{1}{\gamma}}\frac{\sqrt{J}\bar{M}}{\vps_1}.
\end{equation}
Further assume   $ \bar{m} \geq \frac{\vps_1\gamma}{(1-\gamma)\theta_{T,T}\xi}$ for some $0 <\xi< 1$, then, with probability at least $1 - 2\alpha$
\begin{equation}\label{cortacg}
\Ak \supseteq A^{*}_{J} \quad\textrm{if}\quad  k\geq\log_{\frac{1}{\gamma}}\frac{\sqrt{J}R}{1-\xi}.
\end{equation}
\item[(iii)]
Suppose $\beta^*$ is exactly  $K$-sparse.
Let  $T=K$ in SDAR. Suppose (A1)-(A3) hold and  $ m \geq \frac{\gamma}{(1-\gamma)\theta_{T,T}\xi}\s\sqrt{K}\sqrt{2\log(p/\alpha)/n}$ for some $0 <\xi< 1$,
we have with probability at least $1-2\alpha$, $A^k= A^{k+1} = A^*$ if $k\geq\log_{\frac{1}{\gamma}}\frac{\sqrt{K}R}{1-\xi}$, i.e., using   at most $O(\log \sqrt{K}R)$ iterations, SDAR stops and the output is the oracle estimator $\beta^{o}$.
\end{itemize}
\end{cor}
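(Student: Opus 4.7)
My plan is to obtain part (iii) as a direct consequence of the high-probability support-containment bound (\ref{cortacg}) in part (ii), exploiting the pigeonhole constraint forced by the choice $T = K$.

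First, I would specialize part (ii) to the exactly $K$-sparse setting. Since $\beta^*$ is $K$-sparse and $J = K$, the tail piece $\beta^*|_{(A^*_J)^c}$ is the zero vector, so $R_J = 0$, $\tilde{\beta} = \beta^*$, and $\tilde{\eta} = \eta$. Consequently $\varepsilon_1$ reduces to $\sigma\sqrt{K}\sqrt{2\log(p/\alpha)/n}$ and the bar notation collapses to $\bar{m} = m$, $\bar{M} = M$, $R = M/m$. The hypothesis on $m$ in (iii) then exactly matches the condition $\bar{m} \ge \varepsilon_1\gamma/((1-\gamma)\theta_{T,T}\xi)$ needed to invoke (\ref{cortacg}).

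Second, I would apply (\ref{cortacg}) to conclude that, on the single event $\Omega_\alpha$ of probability at least $1 - 2\alpha$ produced by Theorem \ref{thmt}(ii), we have $A^k \supseteq A^*_J = A^*$ for every $k \ge \log_{1/\gamma}(\sqrt{K}R/(1-\xi))$. Because the SDAR update selects exactly the top $T$ coordinates of $|\beta^k + d^k|$ (ties occur with probability zero under the sub-Gaussian noise in (A3)), the cardinality identity $|A^k| = T = K = |A^*|$ combined with $A^k \supseteq A^*$ forces $A^k = A^*$. The same reasoning at iteration $k+1$, which still exceeds the threshold, yields $A^{k+1} = A^*$. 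Hence $A^k = A^{k+1} = A^*$, which triggers the stopping rule in Algorithm \ref{alg1}.

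Third, once the algorithm halts with active set $A^*$, the explicit update formulas (\ref{Stepk}) give $\hat{\beta}_{A^*} = (X_{A^*}'X_{A^*})^{-1}X_{A^*}'y = \beta^o_{A^*}$ and $\hat{\beta}_{(A^*)^c} = 0$, so $\hat{\beta} = \beta^o$. The iteration count $\log_{1/\gamma}(\sqrt{K}R/(1-\xi)) = O(\log(\sqrt{K}R))$ matches the claimed complexity.

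The main point requiring care, which I expect to be the only real subtlety, is confirming that the $1 - 2\alpha$ event in Theorem \ref{thmt}(ii) is a single event controlling all iterates simultaneously, so that applying the bound at both $k$ and $k+1$ does not incur an extra union bound over iterations. This should be transparent from the proof of Theorem \ref{thmt}(ii): the randomness is absorbed into the single quantity $h_2(T) = \max_{|A| \le T}\|X_A'\tilde{\eta}\|_2/n$, whose sub-Gaussian concentration yields a uniform-in-$k$ event. Once this observation is in place, part (iii) reduces to the bookkeeping above.
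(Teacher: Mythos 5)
Your argument for part (iii) is correct and is essentially the paper's: specialize to the exactly $K$-sparse case so that $R_J=0$, $\tilde\beta=\beta^*$ and $\vps_1=\s\sqrt{K}\sqrt{2\log(p/\alpha)/n}$, invoke the support-containment bound to get $A^k\supseteq A^*$, and use $|A^k|=T=K=|A^*|$ to force equality. Your observation that the $1-2\alpha$ event is uniform in $k$ (it is just $\{\htt\le\vps_1\}$, via Lemma \ref{lem4}) is the right justification, and it lets you obtain $A^{k+1}=A^*$ by simply applying \eqref{cortacg} again at $k+1$; the paper instead re-derives $A^{k+1}\supseteq A^*$ from the one-step recursion \eqref{lem9-1} together with the minimum-signal condition. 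Both routes work, and yours is marginally cleaner since it avoids repeating the recursion.

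The genuine gap is that you prove only part (iii). The statement is the whole corollary, and parts (i) and (ii) — which you take as given — are where the actual work of deducing the corollary from Theorem \ref{thmt} lives. Concretely, \eqref{corterrb} requires combining \eqref{thmt-2} with the observation that $\gamma^k\normt{\tb}\le\gamma^k\sqrt{J}\bar M\le\htt$ once $k\ge\log_{1/\gamma}(\sqrt{J}\bar M/\htt)$, so that $\normt{\bbk-\tb}\le(b_1+b_2)\htt$; and \eqref{cortacb} requires combining \eqref{thmt-1} with the assumption $\bar m\ge\gamma\htt/((1-\gamma)\theta_{T,T}\xi)$ to get
\begin{equation*}
\normt{\tb|_{A^*_J\backslash A^k}}\le\gamma^k\sqrt{J}\bar M+\xi\bar m\le(1-\xi)\bar m+\xi\bar m=\bar m ,
\end{equation*}
which forces $A^*_J\backslash A^k=\emptyset$ because every coordinate of $\tb$ on $A^*_J$ has magnitude at least $\bar m$. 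Part (ii) is the same computation with $\htt$ replaced by $\vps_1$ on the event of Lemma \ref{lem4}. None of these steps appear in your proposal, so as written it establishes only the final clause of the corollary while presupposing the two clauses it depends on.
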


\begin{remark}\label{mimimax}
Suppose   $\beta^*$ is exactly $K$-sparse.
In the event $\normt{\eta}\leq\vps$, part (i) of Corollary \ref{cort} implies $\normt{\bbk-\beta^*}=O(\vps/\sqrt{n})$  if $k$ is sufficiently large.
Under certain conditions on the RIP constant of $\bX$, Cand\`{e}s, Romberg and Tao (2006) showed that $\|\hat{\beta}-\beta^*\|_2 = O(\vps)$,
where $\hat{\beta}$ solves
\begin{equation}\label{CRT}
 \min_{\beta\in \mathbb{R}^{p}} \|\beta\|_{1} \mbox{ subject to }  \normt{\bX\beta-y}\leq\vps.
\end{equation}
So the result here is similar to that of Cand\`{e}s, Romberg and Tao (2006) (There is a factor $1/\sqrt{n}$ in our result since we assume the columns of $\bX$ are $\sqrt{n}$-length   normalized while  they assumed the columns of $\bX$  are unit-length normalized).
However, it is a nontrivial task to solve \eqref{CRT} in high-dimensional settings. In comparison, SDAR only involves simple computational steps.
\end{remark}
\begin{remark}
Let  $\beta^*$ be exactly $K$-sparse.
Part (ii) of   Corollary \ref{cort} implies that SDAR achieves the minimax
error bound [Raskutti, Wainwright   and  Yu  (2011)], that is,
$$\normt{\bbk - \beta^*} \leq c \s\sqrt{T}\sqrt{2\log(p/\alpha)/n}$$ with high probability if $k\geq \log_{\frac{1}{\gamma}}\frac{\sqrt{K}M}{\s\sqrt{T}\sqrt{2\log(p/\alpha)/n}}$.
\end{remark}


\subsection{$\ell_{\infty}$ error bounds}
We now consider the $\ell_{\infty}$ error bounds of SDAR.
We replace condition (A2) by

\smallskip
(A2*) The mutual coherence $\mu$ of $\bX$ satisfies   $T\mu \leq 1/4$.



 Let
$\gamma_{\mu} = \frac{(1+2T\mu)T\mu}{1-(T-1)\mu}+2T\mu$,  $c_{\mu}=\frac{16}{3(1-\gamma_{\mu})}+\frac{5}{3}$ and
 $\hit = \max_{A \subseteq S:|A|\le T}\normi{\bX_{A}^{\prime}\te}/n$, where $\te$ is defined in (\ref{teta}).

\begin{thm}\label{thmi}
Let $1\le T \le p$ be a given integer used in Algorithm \ref{alg1}. 

\begin{itemize}
\item[(i)] Assume   \textrm{(A1) and  (A2*)} hold. We have
\begin{align}
\normi{\tb|_{{A^{*}_J} \backslash \Akk}} &< \gamma_{\mu}^{k+1}\normi{\tb} + \frac{4}{1-\gamma_{\mu}}\hit,\label{thmi-1}\\
\normi{\bbkk-\tb} &< \frac{4}{3}\gamma_{\mu}^k\normi{\tb}+\frac{4}{3}
(\frac{4}{1-\gamma_{\mu}}+1)\hit, \label{thmi-2}
\end{align}


\item[(ii)] Assume (A1), (A2*) and (A3) hold.
For any $\alpha\in (0,1/2)$,  with probability at least $1 - 2\alpha$,
\begin{align}
\normi{\tb|_{{A^{*}_J} \backslash \Akk}} &< \gamma_{\mu}^{k+1}\normi{\tb} + \frac{4}{1-\gamma_{\mu}}\vps_2,\label{thmi-3}\\
\normi{\bbkk-\tb} &< \frac{4}{3}\gamma_{\mu}^k\normi{\tb}+\frac{4}{3}
(\frac{4}{1-\gamma_{\mu}}+1)\vps_2, \label{thmi-4}
\end{align}
where
$
\vps_2= (1+(T-1)\mu)R_{J} +  \s\sqrt{2\log(p/\alpha)/n}.
$
\end{itemize}
\end{thm}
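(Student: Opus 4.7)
The plan is to mirror the proof of Theorem \ref{thmt}, replacing its sparse-Riesz estimates with mutual-coherence estimates throughout. I would start by writing $y = \bX\tb + \te$ with $\tb$ supported on $A_{J}^{*}$ and substituting into the update formulas \eqref{Stepk}; this produces, for every coordinate, an explicit expansion of $\beta^{k+1}_i + d^{k+1}_i - \tb_i$ whose two contributions are a ``signal-leakage'' term involving $\tb|_{I^k} = \tb|_{A_{J}^{*}\setminus A^k}$ (since $\tb$ vanishes off $A_{J}^{*}$) and a noise term of the form $\bX_A^{\prime}\te/n$ for some $A$ with $|A|\le T$, uniformly bounded by $\hit$.

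Next I would record two $\ell_{\infty}$ matrix lemmas valid under (A2*): (a) whenever $|A|\le T$, $\bX_A^{\prime}\bX_A/n$ is invertible and $\|(\bX_A^{\prime}\bX_A/n)^{-1}\|_{\infty}\le 1/(1-(T-1)\mu)$, by a Neumann-series argument using $T\mu\le 1/4$; and (b) for disjoint $A,B$ with $|A|,|B|\le T$ and any $v$ supported on $B$ with sparsity at most $T$, $\|(\bX_A^{\prime}\bX_B/n) v\|_{\infty}\le T\mu\|v\|_{\infty}$. Feeding (a), (b), and $\hit$ into the expansion and grouping terms yields a uniform per-coordinate estimate of the form
\[
\bigl|\beta^{k+1}_i + d^{k+1}_i - \tb_i\bigr| \le \tfrac{1}{2}\gamma_{\mu}\,\|\tb|_{A_{J}^{*}\setminus A^k}\|_{\infty} + 2\,\hit,
\]
the constants being chosen so that $\gamma_{\mu}$ appears exactly after the subsequent pigeonhole step.

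The contraction from $A^k$ to $A^{k+1}$ uses the top-$T$ selection rule for $A^{k+1}$. Since $|A^{k+1}|=T\ge J\ge |A_{J}^{*}|$ by (A1), whenever $A_{J}^{*}\setminus A^{k+1}$ is nonempty we have $|A^{k+1}\setminus A_{J}^{*}|\ge|A_{J}^{*}\setminus A^{k+1}|$, so for every $i\in A_{J}^{*}\setminus A^{k+1}$ a pigeonhole argument supplies a ``partner'' $j\in A^{k+1}\setminus A_{J}^{*}$ with $|\beta^{k+1}_j + d^{k+1}_j|\ge|\beta^{k+1}_i + d^{k+1}_i|$. At $j$ we have $\tb_j=0$, so the per-coordinate estimate bounds the left side by the right-hand side of the display; at $i$ the reverse triangle inequality adds another copy of the same error. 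Taking the maximum over $i$ gives the one-step contraction $\|\tb|_{A_{J}^{*}\setminus A^{k+1}}\|_{\infty}\le \gamma_{\mu}\|\tb|_{A_{J}^{*}\setminus A^{k}}\|_{\infty} + 4\,\hit$, and iterating $k+1$ times (starting from $\|\tb|_{A_{J}^{*}\setminus A^0}\|_{\infty}\le\|\tb\|_{\infty}$) yields \eqref{thmi-1}. Bound \eqref{thmi-2} follows by applying the per-coordinate estimate once more: for $i\in A^{k}$ it is controlled directly, while for $i\in I^{k}\cap A_{J}^{*}$ one has $\beta^{k+1}_i = 0$, so $|\beta^{k+1}_i-\tb_i|=|\tb_i|\le \|\tb|_{A_{J}^{*}\setminus A^{k}}\|_{\infty}$ is controlled by \eqref{thmi-1} shifted back one step.

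Part (ii) then reduces to showing $\hit\le\vps_2$ on an event of probability at least $1-2\alpha$. Splitting $\te=\bX\bbp|_{(A_{J}^{*})^{c}}+\eta$, the deterministic piece is controlled by lemma (b) applied to $\bbp|_{(A_{J}^{*})^{c}}$ and yields the $(1+(T-1)\mu)R_J$ contribution after invoking the definition of $R_J$; for the stochastic piece, (A3) and a union bound on sub-Gaussian tails across the $p$ columns give $\max_i|\bX_i^{\prime}\eta|/n\le\s\sqrt{2\log(p/\alpha)/n}$ with probability at least $1-2\alpha$. The main obstacle I anticipate is the bookkeeping needed to ensure that only $\|\tb|_{A_{J}^{*}\setminus A^{k}}\|_{\infty}$, rather than the full $\|\tb\|_{\infty}$, multiplies $\gamma_{\mu}$ on the right of the per-coordinate estimate: only this localization produces a genuine contraction and hence the geometric decay in \eqref{thmi-1}. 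The coherence calculus itself is routine, but arranging the constants to collapse to exactly $\gamma_{\mu}$, $4/(1-\gamma_{\mu})$, and $\tfrac{4}{3}\bigl(\tfrac{4}{1-\gamma_{\mu}}+1\bigr)$ will require some care.
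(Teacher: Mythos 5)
Your proposal is correct and follows essentially the same route as the paper: the paper proves Theorem \ref{thmi} by the $\ell_\infty$ analogues of its Lemmas 4--9, which are exactly your coherence bounds $\|(\bX_A'\bX_A/n)^{-1}u\|_\infty\le\|u\|_\infty/(1-(T-1)\mu)$ and $\|\bX_B'\bX_A u/n\|_\infty\le T\mu\|u\|_\infty$, your gained/lost-index pigeonhole based on the top-$T$ selection rule and $|A^{k+1}|=T\ge J$ (the paper's Lemma 7), the resulting one-step contraction $E_\infty(A^{k+1})\le\gamma_\mu E_\infty(A^k)+\tfrac{3+2\mu}{1-(T-1)\mu}h_\infty(T)$ with $\tfrac{3+2\mu}{1-(T-1)\mu}<4$, and the sub-Gaussian/union bound giving $h_\infty(T)\le\vps_2$ with probability $1-2\alpha$. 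The only point worth adding is the explicit remark (which the paper makes) that $T\mu\le 1/4$ forces $\gamma_\mu<1$, so the geometric sum in the iteration is legitimately bounded by $1/(1-\gamma_\mu)$.
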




\begin{cor}\label{cori}
\begin{itemize}
\item[(i)] Suppose (A1) and (A2*) hold. Then
\begin{equation}\label{corierrb}
\normi{\bbk - \tb} \leq  c_{\mu} \hit \quad\textrm{if}\quad k\geq \log_{\frac{1}{\gamma_{\mu}}}\frac{4\bar{M}}{\hit}.
\end{equation}

Further assume  $\bar{m} \geq \frac{4\hit}{(1-\gamma_{\mu})\xi}  $ with $\xi< 1$, then,
\begin{equation}\label{coriacb}
\Ak \supseteq A^{*}_{J} \quad\textrm{if}\quad  k\geq\log_{\frac{1}{\gamma_{\mu}}} \frac{R}{1-\xi}.
\end{equation}

\item[(ii)] Suppose (A1), (A2*) and (A3) hold.  Then  for any $\alpha\in (0,1/2)$, with probability at least $1 - 2\alpha$,
\begin{equation}\label{corierrg}
\normi{\bbk - \tb} \leq  c_{\mu} \vps_2 \quad\textrm{if}\quad k\geq \log_{\frac{1}{\gamma_{\mu}}}\frac{4\bar{M}}{\vps_2}.
\end{equation}

Further assume  $\bar{m} \geq \frac{4\vps_2}{\xi (1-\gamma_{\mu})}$ for some $0<\xi<1$, then,
\begin{equation}\label{coriacg}
\Ak \supseteq A^{*}_{J} \quad\textrm{if}\quad  k\geq\log_{\frac{1}{\gamma_{\mu}}} \frac{R}{1-\xi}.
\end{equation}
\item[(iii)]
Suppose $\beta^*$ is exactly  $K$-sparse.
Let  $T=K$ in SDAR. Suppose (A1), (A2*), (A3) hold and $m\geq \frac{4}{\xi (1-\gamma_{\mu})}\s\sqrt{2\log(p/\alpha)/n}$ for some $0<\xi<1$,
we have with probability at least $1-2\alpha$, $A^k= A^{k+1} = A^*$ if $k\geq\log_{\frac{1}{\gamma_{\mu}}} \frac{R}{1-\xi}$, i.e., with
at most $O(\log R)$ iterations, SDAR stops and the output is the oracle
least squares estimator $\beta^{o}$.
\end{itemize}
\end{cor}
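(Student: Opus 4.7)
The whole statement is a corollary of Theorem~\ref{thmi}, so the plan is simply to feed the iteration count and the assumption on $\bar m$ into the bounds \eqref{thmi-1}--\eqref{thmi-4}. I would treat parts (i) and (ii) in parallel since they differ only in whether $\hit$ or $\vps_2$ appears, and then read off part (iii) as a specialization.

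For the error bound in part (i), start from \eqref{thmi-2} (shifted by one in $k$ for notational convenience): $\normi{\bbk-\tb} < \tfrac{4}{3}\gamma_\mu^{k-1}\normi{\tb} + \tfrac{4}{3}\bigl(\tfrac{4}{1-\gamma_\mu}+1\bigr)\hit$. Since $\normi{\tb}=\bar M$, the hypothesis $k\ge \log_{1/\gamma_\mu}(4\bar M/\hit)$ gives $\gamma_\mu^{k}\bar M\le \hit/4$, so the first term is at most $\tfrac{1}{3}\hit$. Collecting the two constants then yields $\tfrac{1}{3}+\tfrac{4}{3}+\tfrac{16}{3(1-\gamma_\mu)}=\tfrac{5}{3}+\tfrac{16}{3(1-\gamma_\mu)}=c_\mu$, which is the claimed \eqref{corierrb}. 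For the support recovery, I would use \eqref{thmi-1} in the form $\normi{\tb|_{A^*_J\setminus A^k}}<\gamma_\mu^{k}\bar M+\tfrac{4}{1-\gamma_\mu}\hit$. The assumption $\bar m\ge 4\hit/((1-\gamma_\mu)\xi)$ bounds the second summand by $\xi\bar m$, while $k\ge \log_{1/\gamma_\mu}(R/(1-\xi))$ forces $\gamma_\mu^{k}\bar M\le (1-\xi)\bar m$. The sum is therefore strictly less than $\bar m$, but any index in $A^*_J\setminus A^k$ would contribute at least $\bar m$ to the $\ell_\infty$ norm; hence $A^*_J\setminus A^k=\emptyset$, proving \eqref{coriacb}.

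Part (ii) is literally the same calculation, invoking \eqref{thmi-3}--\eqref{thmi-4} instead and carrying along the probability $1-2\alpha$ uniformly; the identity $\tfrac{1}{3}+\tfrac{4}{3}+\tfrac{16}{3(1-\gamma_\mu)}=c_\mu$ is unchanged. Part (iii) is a specialization: with $\bbp$ exactly $K$-sparse and $T=K$, the remainder $R_J$ vanishes so $\vps_2=\s\sqrt{2\log(p/\alpha)/n}$, and the assumed lower bound on $m$ is exactly $4\vps_2/(\xi(1-\gamma_\mu))$. By part (ii), with probability at least $1-2\alpha$ we have $A^k\supseteq A^*$ once $k\ge \log_{1/\gamma_\mu}(R/(1-\xi))$. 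The construction in Algorithm~\ref{alg1} forces $|A^k|=T=K=|A^*|$ at every step, so the inclusion is actually equality, $A^k=A^*$. Substituting this into the update \eqref{Stepk} gives $\bbkk_{A^*}=(\bX_{A^*}'\bX_{A^*})^{-1}\bX_{A^*}'y=\beta^o_{A^*}$ and $\bbkk_{(A^*)^c}=\mathbf{0}$, so $\bbkk=\beta^o$, and the same update then yields $\Akk=A^k$, activating the stopping rule.

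The arguments are essentially bookkeeping on top of Theorem~\ref{thmi}, so I do not expect a genuine obstacle. The only points that need a little care are (a) verifying the arithmetic that collapses the constants in \eqref{thmi-2}/\eqref{thmi-4} to $c_\mu$, (b) the index shift between ``$k+1$'' in the theorem and ``$k$'' in the corollary, and (c) in part (iii), observing that the cardinality constraint $|A^k|=T=K$ upgrades the inclusion $A^k\supseteq A^*$ to equality, which is what allows the stopping criterion to fire and the oracle estimator to be recovered exactly.
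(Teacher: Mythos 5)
Your proposal is correct and follows essentially the same route as the paper: parts (i) and (ii) are obtained by substituting the iteration count into \eqref{thmi-1}--\eqref{thmi-4}, and your constant bookkeeping $\tfrac13+\tfrac43+\tfrac{16}{3(1-\gamma_\mu)}=c_\mu$ and the split $\gamma_\mu^k\bar M\le(1-\xi)\bar m$, $\tfrac{4}{1-\gamma_\mu}\hit\le\xi\bar m$ are exactly the paper's computations (transposed from its proof of Corollary~\ref{cort}). Two small points. First, the index shift in \eqref{corierrb} does not quite close as you state it: with the threshold $\gamma_\mu^{k}\bar M\le\hit/4$ the first term of the shifted bound is $\tfrac43\gamma_\mu^{k-1}\bar M\le\hit/(3\gamma_\mu)$, not $\hit/3$; the paper's own proof has the identical off-by-one (it bounds $\bbkk$ under a condition on $k$ and states the result for $\bbk$), so this is a shared infelicity rather than a gap, but you should either adjust the threshold by one iteration or state the bound for $\bbkk$. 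Second, in part (iii) the sentence ``the same update then yields $\Akk=\Ak$'' does not follow from $\bbkk=\beta^{o}$ alone: one must still rule out some $j\notin A^*$ having $|\dkk_j|$ large enough to displace a true index at the next thresholding step. The paper closes this by one further application of Lemma~\ref{lem9} and Lemma~\ref{lem4}, showing $\normi{\tb|_{A^*\setminus\Akk}}\le\gamma_\mu\cdot 0+4\s\sqrt{2\log(p/\alpha)/n}<m$, hence $\Akk\supseteq A^*$ and, by cardinality, $\Akk=A^*=\Ak$. Your argument is rescued more simply by observing that the support-recovery conclusion of part (ii) holds on a single probability event simultaneously for all $k\ge\log_{1/\gamma_\mu}\frac{R}{1-\xi}$, so it applies at step $k+1$ as well and cardinality again gives $\Akk=A^*$; either repair should be made explicit.
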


\begin{remark}

Theorem \ref{thmt} and Corollary \ref{cort} can be derived
from Theorem \ref{thmi} and Corollary \ref{cori}, respectively, by using the relationship between the $\ell_{\infty}$ norm and the $\ell_2$ norm. Here we present them separately because (A2) is weaker than (A2*). The stronger assumption (A2*) brings us some new insights into the
 SDAR, i.e., the sharp $\ell_{\infty}$ error bound, based on which we can show that the worst case iteration complexity of SDAR does not depend on the underlying sparsity level, see, Corollary \ref{cori}.
\end{remark}

\begin{remark}
The mutual coherence condition  $s\mu\leq 1$ with $s\geq 2K-1$ is
used in the study of OMP and Lasso in the case that $\beta^*$ is exactly $K$-sparse.
In the noiseless case with $\eta=0$, Tropp (2004) and Donoho and  Tsaig (2008) showed that under the condition $(2K-1)\mu<1$,  OMP can recover $\beta^{*}$ exactly in $K$ steps. In the noisy case with $\normt{\eta} \le \vps$, Donoho,  Elad and Temlyakov (2006) proved that OMP can recover the true support if $(2K-1)\mu\leq 1- \frac{2\vps}{m}$. Cai and Wang (2011) gave a sharp analysis of OMP under the condition  $(2K-1)\mu<1$.
The mutual coherence condition $T\mu \le 1/4$  in (A2*) is a little stronger than those used in the analysis of the OMP. However, under (A2*) we obtain a  sharp  $\ell_{\infty}$ error bound, which is not available for OMP in the literature.
Furthermore,  Corollary \ref{cori} implies that the number of iterations of SDAR does not depend on the sparsity level, which is a surprising result and does not  appear in the
literature on greedy methods,  see Remark \ref{ActiveSet2} below.
Lounici (2008) and  Zhang (2009) derived an $\ell_{\infty}$ estimation error bound
for the Lasso 
under the conditions  $K\mu<{1}/{7}$ and $K\mu\leq {1}/{4}$, respectively. However,
they need a nontrivial Lasso solver for computing an approximate solution while SDAR only involves simple computational steps.
\end{remark}

\begin{remark}
Suppose  $\beta^*$ is exactly $K$-sparse.
Part (ii) of  Corollary \ref{cori} implies that the sharp error bound
\begin{equation}\label{linferr}
\normi{\bbk - \beta^*} \leq c_{\mu} \s\sqrt{2\log(p/\alpha)/n}
\end{equation}
 can be achieved with high probability if
$k\geq \log_{\frac{1}{\gamma_{\mu}}} \frac{M}{\s\sqrt{2\log(p/\alpha)/n}}$.
\end{remark}


\begin{remark}\label{ActiveSet2}
Suppose   $\beta^*$ is exactly $K$-sparse.
Part (iii) of  Corollary \ref{cori} implies that
with high probability,  the oracle estimator can be recovered in no more than $O(\log R)$ steps if we set $T=K$ in SDAR and the  minimum magnitude of the nonzero elements of $\beta^{*}$ is $O(\sigma \sqrt{2\log(p)/n})$, which is the optimal magnitude of detectable signals.

It is interesting to notice  that the number of iterations in Corollary \ref{cori}  depends
on the relative magnitude $R$, but not the sparsity level $K$, see, Figure \ref{fig2} for the numerical results supporting this. This improves the  result in  part  (iii) of  Corollary \ref{cort}.  This  is  a surprising result since as far as we know the number of iterations for greedy methods to recover $A^{*}$ depends on $K$, see for example, Garg and  Khandekar (2009).
\end{remark}

Figure \ref{fig2} shows the average number of iterations of SDAR  with $T = K$
based on 100 independent replications on data set $(n= 500, p = 1000, K= 3:2:50, \sigma = 0.01,\rho = 0.1, R = 1)$ which  will be described in Section 6. We can see that as the sparsity level increases from $3$ to $50$ the  average number of iterations of SDAR ranges from $1$ to $3$, which is  bounded by $O(\log \frac{R}{1-\xi})$ with a suitably chosen $\xi$.

\section{Adaptive SDAR}
In practice, the sparsity level of the model is usually unknown,
we can use a data driven procedure to determine $T$, an upper bound of number of important variables  $J$,
used in SDAR (Algorithm \ref{alg1}).
The idea is to take $T$ as a tuning parameter, so $T$ plays the role similar to the penalty parameter $\la$ in a penalized method.
We can run SDAR from $T=1$ to a large $T = L$. For example,  we can take $L=O(n/\log(n))$ as suggested by Fan and Lv (2008), which is an upper bound of the largest possible model that can be consistently estimated with sample size $n$. By doing so we obtain a solution path $\{\hbeta(T): T=0, 1,\ldots, L\}$,
where $\hbeta(0)=\textrm{0}$, that is, $T=0$ corresponds to the null model.
Then we use a data driven criterion, such as HBIC [Wang, Kim and Li (2013)], to select a $T=\hT$ and use
$\hbeta(\hT)$ as the final estimate.
The overall computational complexity of this process  is
$O(Lnp \log(R))$, see Section 5 (we can also compute the path by increasing $T$ geometrically which may be more efficient, but here we are interested in the  complexity of the worst case).
We note that tuning $T$ is no more difficult than tuning a continuous
penalty parameter $\la$ in a penalized method. Indeed, here we can simply increase $T$ one by one from $T=0$ to $T=L$.
In comparison, in tuning the value of $\la$ based on a pathwise solution on an interval $[\la_{\min}, \la_{\max}]$, where $\la_{\max}$ corresponds to the null model and $\la_{\min}>0$ is a small value.
We need to determine the grid of $\la$ values on $[\la_{\min}, \la_{\max}]$
as well as $\la_{\min}$. Here $\la_{\min}$ corresponds to the largest model on the solution path. In the numerical implementation of the coordinate
descent algorithms for the Lasso [Friedman et al. (2007)], MCP and
SCAD [Breheny and Huang (2011)], $\la_{\min}=\alpha \la_{\max}$ for a small $\alpha$, for example, $\alpha=0.0001$.
Determining the value of $L$ is somewhat
similar to determining $\la_{\min}$. However, $L$ has the  meaning
of the model size, but the meaning of $\la_{\min}$ is less explicit.

\begin{figure}[H]
\centering
\includegraphics[scale=0.55]{{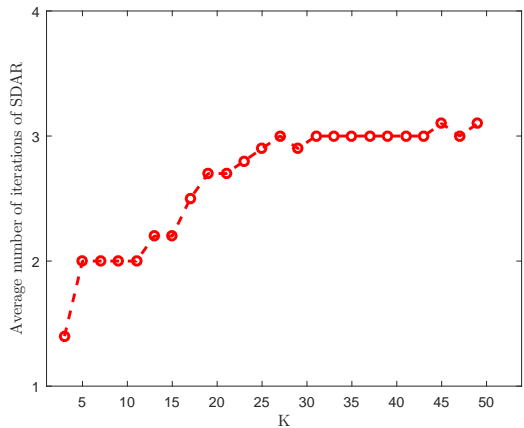}}
\caption{The average  number of iterations of SDAR  as $K$ increases.}\label{fig2}
\end{figure}

We also have the option to stop the iteration early according to other criterions. For example, we can run SDAR (Algorithm \ref{alg1}) by gradually increasing $T$  until  the change in the consecutive  solutions is smaller than a given value. Cand\`{e}s, Romberg and Tao (2006) proposed to recover $\bbp$ based on (\ref{CRT}) by finding the most sparse solution  whose residual sum of squares is smaller than a prespecified noise level $\vps$.  Inspired by this, we can also run SDAR by increasing $T$ gradually until the residual sum of squares is smaller than a prespecified value $\vps$.

We summarize these ideas in Algorithm 2 (Adaptive SDAR Algorithm) below.

\begin{algorithm}[!ht]
\caption{Adaptive SDAR (ASDAR)}
\label{alg2}
\begin{algorithmic}[1]
\REQUIRE
 Initial guess $\beta^0, d^0$, an integer $\tau$,
  an integer $L$,  and an early stopping criterion (optional).   Set $k =1$.
\FOR {$k =1,2,\cdots$}
\STATE  Run Algorithm \ref{alg1}  with  $ T = \tau k $ and with initial value $ (\beta^{k-1}, d^{k-1})$.  Denote the output by $(\beta^{k}, d^{k})$.
 \IF {the early stopping criterion is satisfied
  or $T>L$}
 \STATE stop
 \ELSE
 \STATE  $ k = k+1$.
 \ENDIF
\ENDFOR
\ENSURE $\hat{\beta}(\hat{T})$ as estimations of  $\beta^{*}$.
\end{algorithmic}
\end{algorithm}

\section{Computational complexity} 

We look at the number of floating point operations line by line in
Algorithm \ref{alg1}.
Clearly it takes  $O(p)$ flops to finish step 2-4.
In step 5, we use conjugate gradient (CG) method (Golub and Van Loan, 2012) to solve the linear equation iteratively. During  CG iterations
the main operation is two matrix-vector multiplications which cost $2n|A_{k+1}|$ flops  (the term $\bX^{\prime}y$ on the right-hand side can be precomputed and stored).
Therefore we can control the number of CG iterations smaller than $p/(2|A_{k+1}|)$ to
ensure that  $O(np)$ flops will be enough for  step 5.
In step 6, calculating the matrix-vector product costs  $np$ flops.
As for step 7, checking  the stop condition needs $O(p)$ flops.
So the the overall cost per iteration of Algorithm \ref{alg1} is $O(np)$.
By Corollary  \ref{cori}  it needs no more than  $O(\log(R))$  iterations to get a good solution   for Algorithm \ref{alg1} under the certain conditions.
Therefore the overall cost of Algorithm \ref{alg1} is  $O(np\log(R))$
for exactly sparse and approximately sparse case under proper conditions.

Now we consider the cost of ASDAR (Algorithm \ref{alg2}).
Assume  ASDAR is stopped  when $k = L$. Then the above discussion shows the the overall cost of Algorithm \ref{alg2} is bounded by
$O(Lnp \log(R))$ which is very efficient for large  scale high dimension problem since the cost increases linearly in the ambient dimension $p$.

\section{Comparison with greedy and screening methods}
\label{Relatedwork}
We give a high level comparison between SDAR and several greedy and screening methods,
including OMP [Mallat and Zhang (1993),  Tropp (2004), Donoho,  Elad and Temlyakov (2006),
Cai and Wang (2011), Zhang (2011a)], FoBa [Zhang 2011b)],  GraDes [Garg and Khandekar (2009)], and SIS [Fan and Lv (2008)].
These greedy methods iteratively select/remove  one or more variables
and project the response vector  onto
the linear subspace spanned by the variables that have already
been selected. From this point of view, they and SDAR share a similar characteristic.
However, OMP and FoBa,  
select one variable per iteration  based on the current correlation, i.e.,  the dual variable  $\dk$ in our notation, while SDAR selects $T$ variables at a time
based on the sum of primal ($\bbk$) and dual ($\dk$) information.  The following  interpretation in a low-dimensional setting with a small noise term may clarify the differences between these two approaches. If $X^{\prime}X/n \approx \textbf{E}$ and
$\eta \approx\textrm{0}$, we have
\begin{equation*}
d^k = \bX^{\prime}(y-\bX\bbk)/n = \bX^{\prime}(\bX\bbp + \eta -\bX\bbk)/n \approx \beta^{*}-\bbk+\bX^{\prime}\eta/n \approx \beta^{*}-\bbk,
\end{equation*}
and
$$\bbk+d^k\approx \beta^{*}.$$
Hence, SDAR can approximate the underlying support $A^{*}$  more accurately
than OMP and Foba. This is supported by the simulation results given in Section 6.
GraDes can be formulated as
\begin{equation}\label{iht}
\beta^{k+1} = H_{K}( \beta^{k} + s_k \dk ),
\end{equation}
where $H_{K}(\cdot)$ is the hard thresholding operator by keeping the first $K$ largest elements and setting others to $0$, $s_k$ is the step size of gradient descent.  Specifically, GraDes uses $s_k = 1/(1+\delta_{2K})$, where $\delta_{2K}$ is the RIP constant. 
Intuitively,  GraDes  works  by reducing the squares loss with gradient descent with different step sizes
and preserving sparsity using hard thresholding. Hence, GraDes uses  both primal and dual information to detect the support of the solution,  which is similar to SDAR. However, after the approximate active set is determined, SDAR does a least-square fitting,
which is more efficient and more accurate than just keeping the largest elements by hard thresholding. This is also supported by the simulation results given in Section 6.

Fan and Lv (2008) proposed SIS for dimension reduction in ultrahigh dimensional liner regression problems.
This method selects variables with the $T$ largest absolute values  of $\bX^{\prime}y$.
To improve the performance of SIS, Fan and Lv (2008) also considered an iterative  SIS, which iteratively
selects more than one feature at a time until a desired number of variables are
selected. They reported  that the iterative SIS outperforms SIS numerically.
However,  the iterative SIS lacks a theoretically analysis.
Interestingly, the first step in SDAR initialized with $\textbf{0}$ is exactly the same as the SIS. But again the process of SDAR is different from the iterative SIS
in that the active set of SDAR is determined based on the sum of primal and dual approximations while the iterative SIS uses dual only.


\section{Simulation Studies}

\subsection{Implementation}
We implemented SDAR/ASDAR, FoBa, GraDes and MCP in MatLab.
For FoBa, our MatLab implementation follows the R package developed by
Zhang (2011a). We optimize it by keeping track of rank-one updates after each greedy step. Our implementation of MCP uses the iterative threshholding
algorithm (She, 2009) with warm start. The publicly available 
 Matlab packages for LARS (included in the SparseLab package)
are used. Since LARS and FoBa add one variable at a time,
we stop them when $K$ variables are selected in addition to their default stopping conditions. (Of course, by doing this it will speed up and get better solutions for these three solvers).

In GraDes, the optimal gradient step length $s_k$ depends on the RIP constant of $\bX$,
which is NP hard to compute [Tillmann and  Pfetsch (2014)]. Here, we set $s_k=1/3$
following Garg and Khandekar (2009).  We stop GraDes when the residual norm is smaller than  $\vps = \sqrt{n}\s$, or the maximum number of iterations is greater than $n/2$.
We compute the MCP solution path and select an optimal solution using the HBIC
[Wang, Kim and Li (2013)].
We stop the iteration when the residual norm is smaller than  $\vps = \normt{\eta}$,
or the estimated support size is greater than $L =n/\log(n)$.
In ASDAR (Algorithm \ref{alg2}), we set $\tau=50$ and we stop the iteration if the residual
$\|y-\bX\beta^k\|$ is smaller than  $\vps = \sqrt{n}\s$ or $k \ge L= n/\log(n)$.

\subsection{Accuracy and efficiency}
We compare the accuracy and efficiency of SDAR/ASDAR with Lasso (LARS),
MCP, GraDes and FoBa.

We first generate an $n \times p$ random Gaussian matrix $\bar{\bX}$  whose entries are i.i.d. $\sim\cN(0, 1)$ and then normalize its columns to the $ \sqrt{n}$ length.
Then the design matrix $\bX$ is generated with $\bX_1=\bar{\bX}_1$ and
$\bX_{j} = \bar{\bX}_{j}+\rho(\bar{\bX}_{j+1}+\bar{\bX}_{j-1}), j=2, \ldots, p-1$. The underlying regression coefficient $\bbp$ is generated with the nonzero coefficients uniformly distributed in $[m,M]$, where $m = \sigma\sqrt{2\log(p)/n}$ and
$M = 100m$. Then the observation vector $y =X\beta^{*}+\eta$ with $\eta_{i}\thicksim \cN(0,\sigma^2)$, $i=1,\ldots, n$.

We consider a moderately large scale setting with $n=5000$ and $p=50000$. 
The number of nonzero coefficients is set to be $K =400$.
So the sample size $n$ is about
$O(K\log(p-K))$, which is nearly at the limit of estimating $\bbp$ in the linear model (\ref{model}) by the Lasso with theoretically guaranteed [Wainwright (2009)].
The data are generated from the model described above with $K=400, R= 100$.
We set $\sigma = 1$ and $\rho =0.2, 0.4$ and $0.6$.

Table \ref{timeerror1} shows the results based on $100$ independent replications.
The first column gives the correlation value $\rho$ and
the second column shows the methods in the comparison. The third to fifth  columns give the averaged CPU time (in seconds),
the averaged relative error
defined as ($\textrm{ReErr} = \sum \|\hat{\beta}-\beta^{*}\|/\|\beta^{*}\|$), respectively, where $\hat{\beta}$ denotes the estimates 
and $\hat{A} = \textrm{supp}(\hat{\beta})$.  The standard deviations of the CPU times and  the relative errors are shown in the parentheses.
In each column of the Table \ref{timeerror1}, the results in boldface indicate the best performers.

\begin{table}[H]
\centering
\caption{Numerical results (CPU time, relative errors) on data set with $n=5000,p=50000,K=400,R= 100,\sigma = 1,\rho = 0.2:0.2:0.6$.}\label{timeerror1}
\begin{tabular}{cccccccp{0.3cm}p{0.3cm}c}
\hline\hline
\multicolumn{1}{c}{$\rho$} & \multicolumn{1}{c}{method} &  \multicolumn{1}{c}{ReErr}   & \multicolumn{1}{c}{time(s)} \\
 \hline
                  &LARS            &1.1e-1            (2.5e-2)           &4.8e+1           (9.8e-1) \\
                  &MCP             &\textbf{7.5e-4}	  (\textbf{3.6e-5})  &9.3e+2           (2.4e+3) \\
      $0.2$       &GraDes          &1.1e-3            (7.0e-5)           &2.3e+1           (9.0e-1) \\
                  &FoBa            &\textbf{7.5e-4}	  (7.0e-5)           &4.9e+1           (3.9e-1)\\
                  & ASDAR          &\textbf{7.5e-4}	  (4.0e-5)           &8.4e+0	           (4.5e-1)\\
                  & SDAR           &\textbf{7.5e-4 }  (4.0e-5)           &\textbf{1.4e+0}	   (\textbf{5.1e-2})\\
 \hline                             	
                  &LARS            &1.8e-1            (1.2e-2)           &4.8e+1           (1.8e-1) \\
                  &MCP             &6.2e-4            (3.6e-5)           &2.2e+2           (1.6e+1) \\
         $0.4$    &GraDes          &8.8e-4            (5.7e-5)           &8.7e+2           (2.6e+3) \\
                  &FoBa            &1.0e-2            (1.4e-2)           &5.0e+1           (4.2e-1)\\
                  & ASDAR          &\textbf{6.0e-4}   (\textbf{2.6e-5})  &8.8e+0	           (\textbf{3.2e-1})\\
                  & SDAR           &\textbf{6.0e-4 }  (\textbf{2.6e-5})  &\textbf{2.3e+0}	   (1.7e+0)   \\
 \hline
                  &LARS            &3.0e-1            (2.5e-2)           &4.8e+1           (3.5e-1) \\
                  &MCP             &4.5e-4            (\textbf{2.5e-5})  &4.6e+2           (5.1e+2) \\
  $0.6$         &GraDes            &7.8e-4            (1.1e-4)           &1.5e+2           (2.3e+2) \\
                  &FoBa            &8.3e-3            (1.3e-2)           &5.1e+1           (1.1e+0)\\
                  & ASDAR          &\textbf{4.3e-4}   (3.0e-5)           &1.1e+1           (5.1e-1)\\
                  & SDAR           &\textbf{4.3e-4}   (3.0e-5)           &\textbf{2.1e+0}    (\textbf{8.6e-2})\\
\hline\hline
\end{tabular}
\end{table}

We see that when the correlation $\rho$ is low, i.e., $\rho = 0.2$, MCP, FoBa, SDAR and ASDAR are on the top of the list in average error (ReErr). In terms of speed,  SDAR/ASDAR is almost 20-900/3-100 times faster than the other methods.
As the correlation $\rho$ increases to $\rho = 0.4$ and $\rho=0.6$,
FoBa becomes less accurate  than SDAR/ASDAR. The accuracy of MCP is similar to that of SDAR/ASDAR, but MCP is 20 to 100 times slower than SDAR/ASDAR.
The standard deviations of the CPU times and  the relative errors of MCP and SDAR/ASDAR are similar and smaller than those of the other methods in all the three settings.



\subsection{Influence of the model parameters}
We now consider the effects of each of the model parameters on the performance
of ASDAR, LARS, MCP, GraDes and FoBa more closely.

\begin{figure}[H]
  \centering
  \begin{tabular}{cc}
    \includegraphics[trim = 0.5cm 0.5cm 0.5cm 0.5cm, clip=true,width=5cm]{{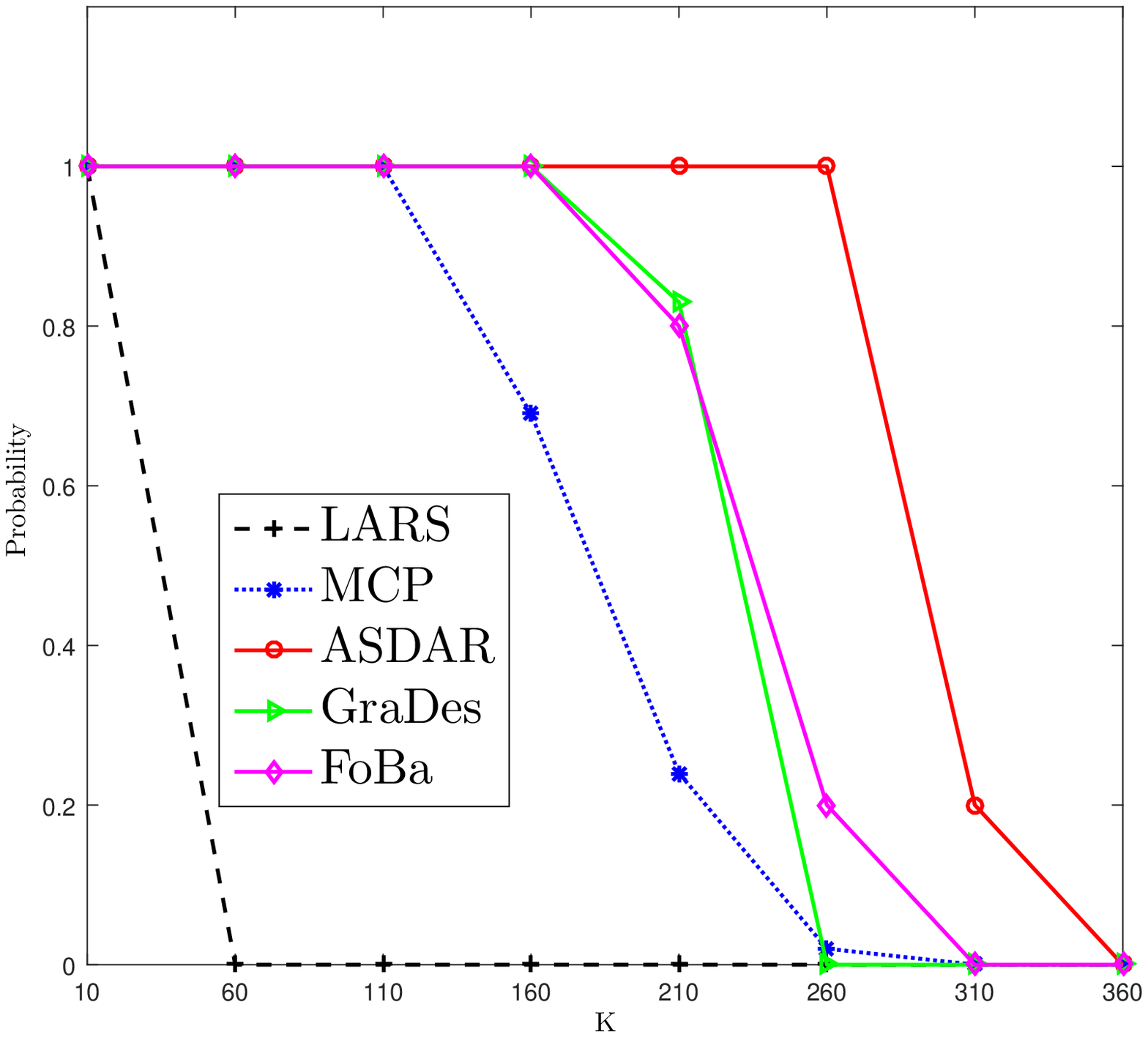}} &\includegraphics[trim = 0.5cm 0.5cm 0.5cm 0.5cm, clip=true,width=5cm]{{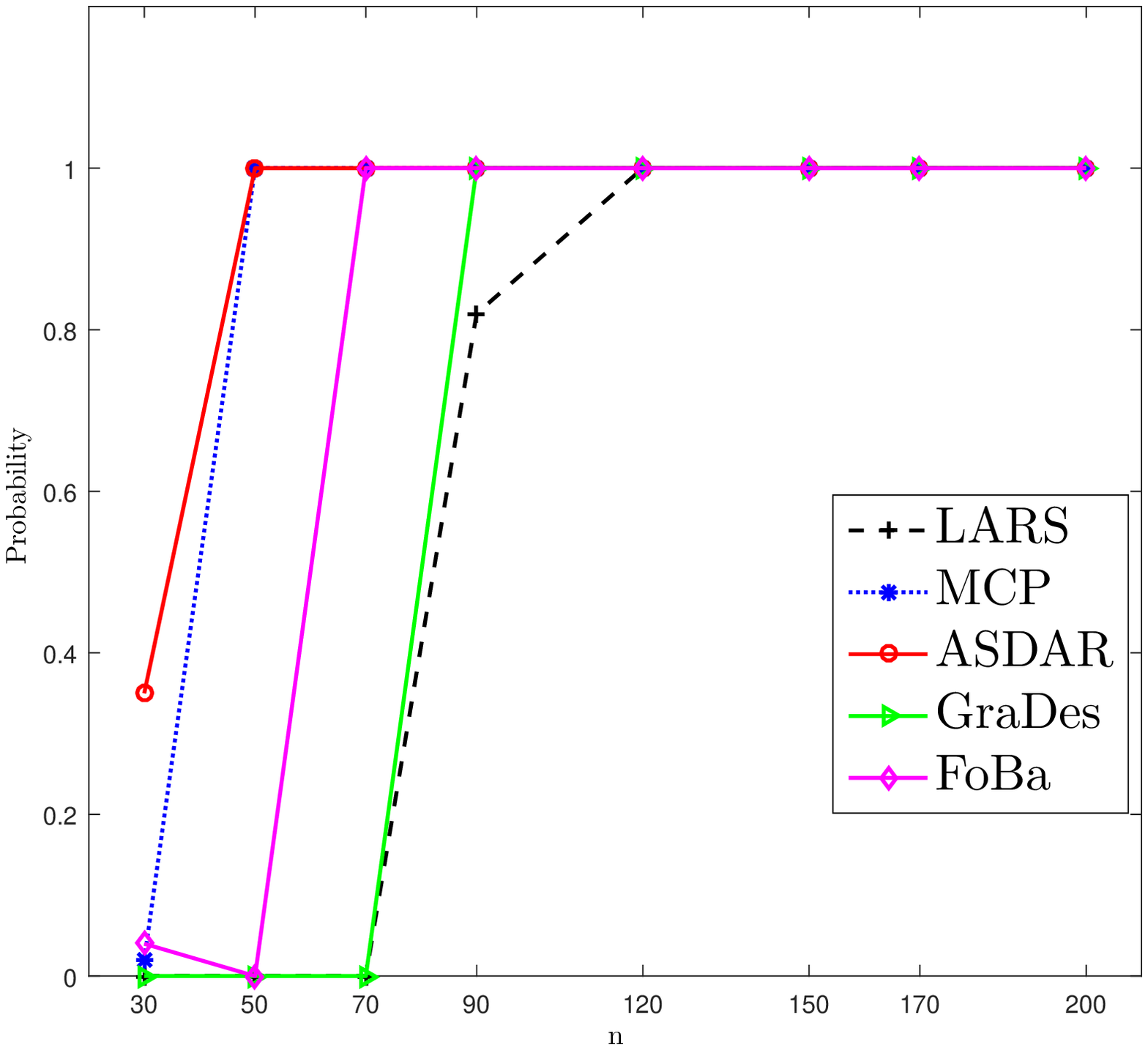}}\\
    \includegraphics[trim = 0.5cm 0.5cm 0.5cm 0.5cm, clip=true,width=5cm]{{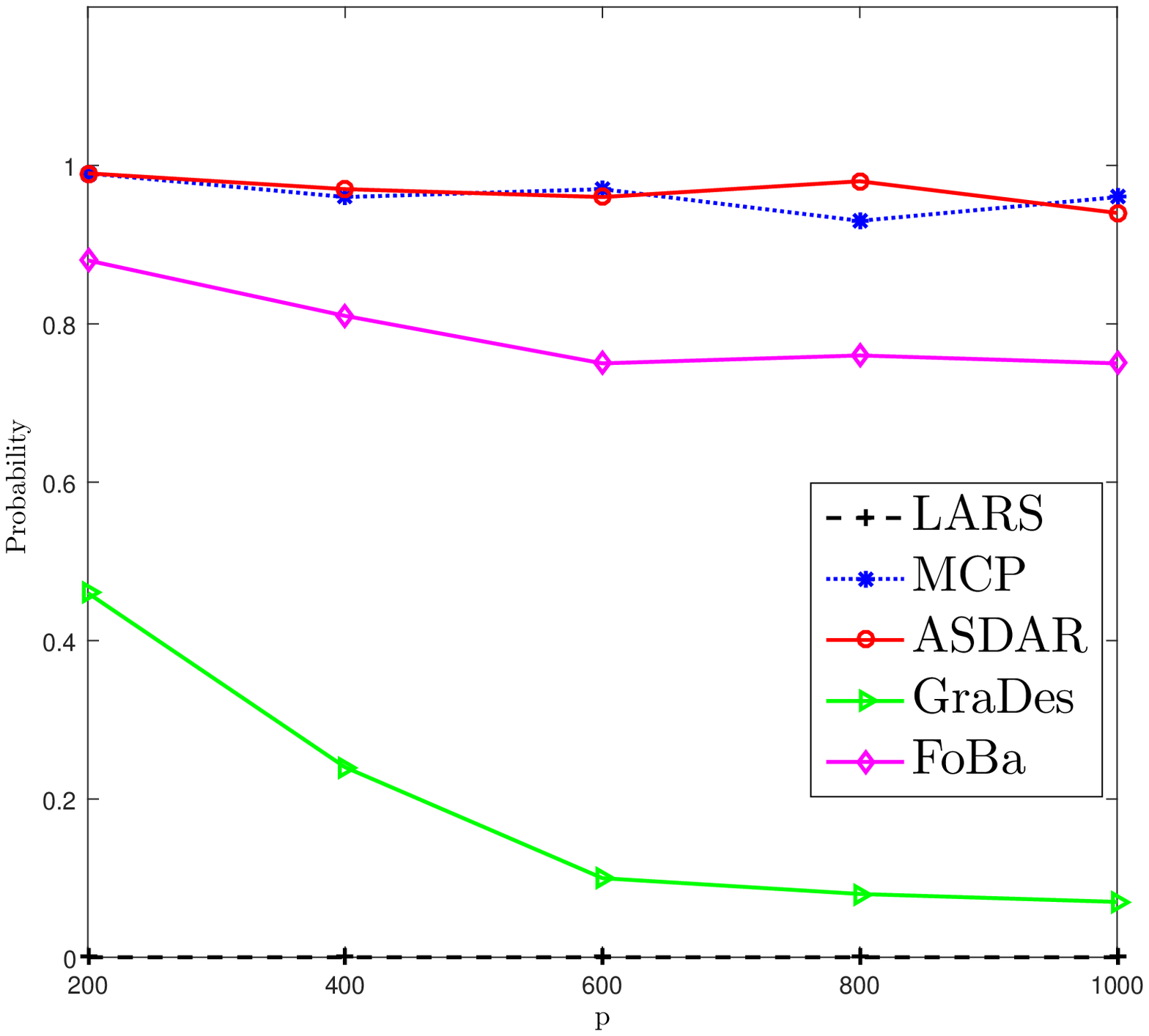}} &\includegraphics[trim = 0.5cm 0.5cm 0.5cm 0.5cm, clip=true,width=5cm]{{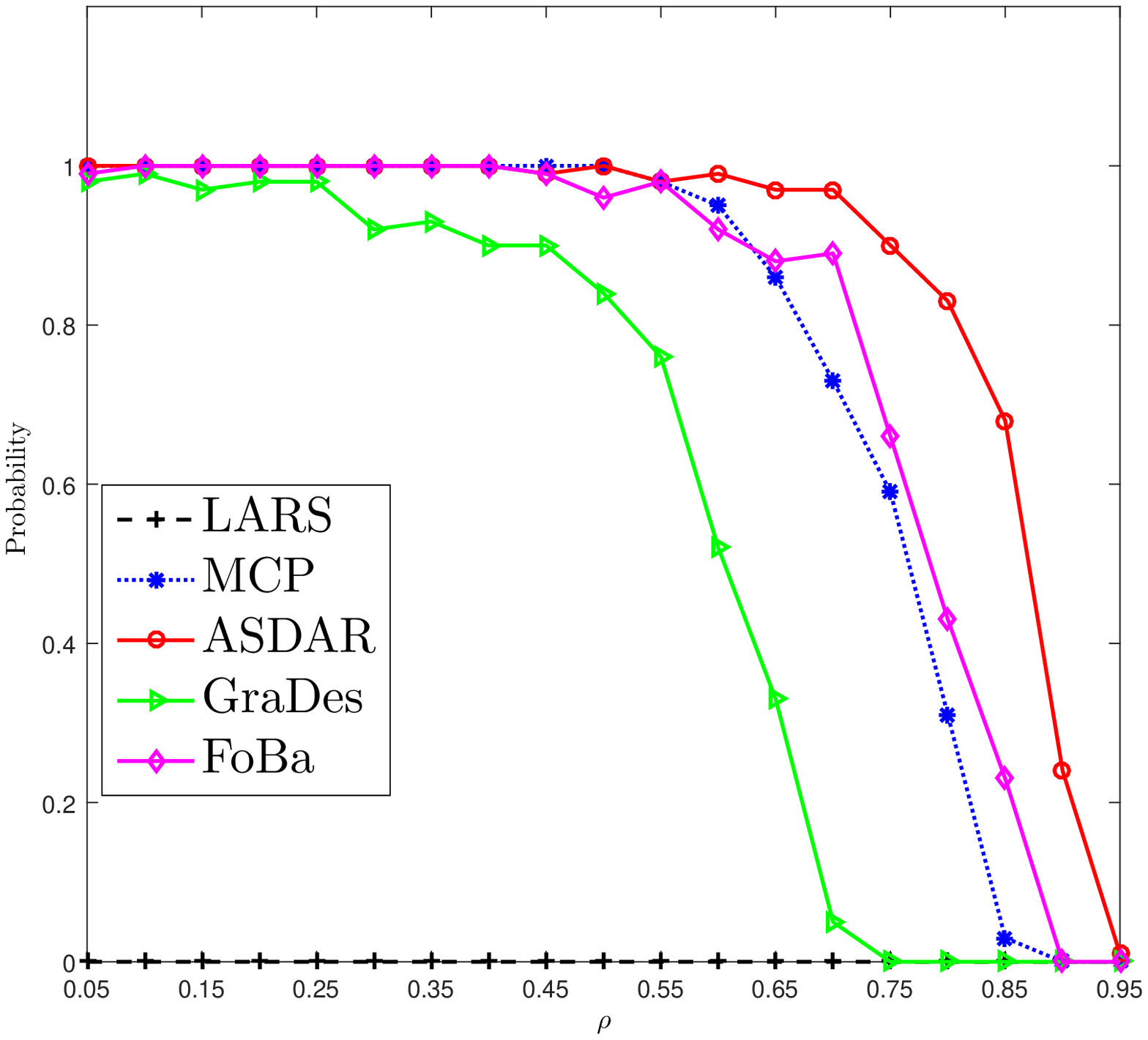}}\\
  \end{tabular}
  \caption{Numerical results of the influence of sparsity level $K$ (top left panel), sample size $n$  (top right panel), ambient dimension $p$ (bottom left  panel)  and correlation $\rho$  (bottom  right panel) on the probability of exact recovery of the true support  of all the  solvers considered  here.
  }\label{influenceofparameter}
\end{figure}

In this set of simulations, the rows of the design matrix $X$
are drawn independently from $\cN(0,\Sigma)$  with $\Sigma_{jk} = \rho^{|j-k|}, 1 \le j, k \le p$. The elements of the error vector $\eta$ are generated independently with
$\eta_{i}\thicksim \cN(0,\sigma^2)$, $i=1,\ldots, n$.
Let $R = M/m$,
where, $M = \max\{|\beta^{*}_{A^{*}}|\}, m= \min\{|\beta^{*}_{A^{*}}|\}=1$.
The underling regression coefficient vector $\beta^{*} \in \mathbb{R}^{p}$ is generated in such a way that $A^{*}$ is a randomly chosen subset of $\{1,2,...,p\}$ with $|A^{*}|=K< n$  and $R\in [1,10^3]$.
Then the observation vector $y =X\beta^{*}+\eta$. We use
$\{n, p, K, \sigma, \rho, R\}$
to indicate the parameters used in the data generating model described above.      		

We run ASDAR with $\tau=5, L = n/\log(n)$ (if not specified).
We use the HBIC
[Wang, Kim and Li (2013)] to select the tuning parameter $T$.
The simulation results given in Figure \ref{influenceofparameter} are based on 100 independent replications.

\subsubsection{Influence of the sparsity level $K$}
The top left panel of Figure \ref{influenceofparameter} shows the results of the influence of sparsity level $K$ on the probability
of exact recovery of $A^{*}$ of  ASDAR, LARS, MCP, GraDes and FoBa on data sets
with $(n=500, p =1000, K = 10:50:360, \sigma = 0.5, \rho = 0.1, R = 10^3)$.   Here $10:50:360$ means the sample size starts from 10 to 360 with an increment of 50.  We use $L=0.8n$ for both ASDAR and MCP to eliminate the effect of stopping rule since the maximum $K=360$.
When the sparsity level $K=10$, all the solvers performed well in 
recovering the true support. As $K$ increases, LARS was the first one that failed to recover the support and vanished when $K=60$ (this phenomenon had also been observed in Garg and  Khandekar (2009)), MCP began to fail when $K>110$,  GraDes and FoBa began to fail when $K>160$. In comparison, ASDAR was still able to do well even when $K=260$.



\subsubsection{Influence of the sample size $n$}
The top right panel of Figure \ref{influenceofparameter} shows the influence of the
sample size $n$ on the probability  
of correctly estimating 
$A^{*}$ 
with data generated from the model with
$(n=30:20:200, p =500, K =10, \sigma=0.1, \rho =0.1, R=10)$.
We see that the performances of all the five methods become better as $n$ increases.
However, ASDAR performs better than the others when $n = 30$ and $50$.


\subsubsection{Influence of the ambient dimension $p$}
The bottom  left panel of Figure \ref{influenceofparameter}  shows  the influence of ambient dimension $p$ on the performance of ASDAR, LARS, MCP, GraDes and FoBa on data with
$(n=100, p =200:200:1000, K = 20, \sigma = 1, \rho = 0.3, R = 10)$. We see that the probabilities of exactly recovering the support of the underlying coefficients of ASDAR and MCP are higher than those of the other solvers as
$p$ increasing, which indicate that ASDAR and MCP are more robust to the ambient dimension.

\subsubsection{Influence of correlation $\rho$}
The bottom right panel of Figure \ref{influenceofparameter}
shows the influence of correlation $\rho$ on the performance of ASDAR, LARS, MCP, GraDes and FoBa on data with $(n=150, p =500, K = 25, \sigma = 0.1, \rho=0.05:0.1:0.95, R = 10^2)$.   We see that the performance of all the solvers become worse  when the correlation $\rho$ increasing. However, ASDAR generally performed better than the other methods
as $\rho$ increases.



In summary, our simulation studies demonstrate that SDAR/ASDAR is generally more accurate, more efficient and more stable than Lasso, MCP, FoBa and GraDes.


\section{Concluding remarks}
SDAR is a constructive approach for fitting sparse,  high-dimensional linear regression models. Under appropriate conditions, we established the nonasymptotic minimax $\ell_2$ error bound and optimal
$\ell_{\infty}$ error bound of the solution sequence generated
by SDAR. We also calculated the number of iterations required to achieve these bounds.
In particular, an interesting and surprising aspect of our results is,
if a mutual coherence condition on the design matrix is satisfied,
the number of iterations required for the SDAR to achieve the optimal $\ell_{\infty}$
bound does not depend on the underlying sparsity level. In addition, SDAR has the same computational complexity per iteration as LARS, coordinate descent and greedy methods.
 SDAR/ ASDAR is accurate, fast, stable and easy to implement. Our simulation studies demonstrate that SDAR/ ASDAR is competitive with or outperforms the Lasso, MCP and several greedy methods in efficiency and accuracy. These theoretical and numerical results suggest that SDAR/ ASDAR is a   promising alternative to the existing penalized and greedy methods
for dealing with large scale high-dimensional linear regression problems.


We have only considered the linear regression model.
It would be interesting to generalize SDAR to
other models with more general loss functions or models
with other types of sparsity structures.
It would also be interesting to develop parallel or distributed versions of SDAR that can run
on multiple cores for data sets with big $n$ and large $p$ or when data are
distributively stored.


We have implemented SDAR in a Matlab package \textsl{sdar}, which is available at
\url{http://homepage.stat.uiowa.edu/~jian/}.

\section*{Acknowledgments}

We are grateful to two anonymous reviewers, the Associate Editor and Editor for their helpful comments which led to considerable improvements in the paper.


\section{Appendix: Proofs}

\noindent
\textbf{Proof of Lemma \ref{lem1}.}
Let $L_{\la}(\beta)=\frac{1}{2n}\normt{\bX\bb-y}^2 + \la\normz{\bb}.$
Suppose $\bbs$ is a  minimizer of $L_{\la}$. Then
\begin{equation*}
\begin{aligned}
    &\beta_i^\diamond \in \mathop\textrm{argmin}\limits_{t\in \mathbb{R}} L_{\la}(\beta_1^\diamond,...,\beta_{i-1}^\diamond,t,\beta_{i+1}^\diamond,...,\beta_p^\diamond)\\
    \Rightarrow\quad & \beta_i^\diamond \in \mathop\textrm{argmin}\limits_{t\in \mathbb{R}} \tfrac{1}{2n}\normt{\bX \bbs-y+(t-\beta_i^\diamond)\bX_i}^2+\la|t|_{0} \\
    \Rightarrow\quad & \beta_i^\diamond \in \mathop\textrm{argmin}\limits_{t\in \mathbb{R}} \tfrac{1}{2}(t-\beta_i^\diamond)^2+\tfrac{1}{n}(t-\beta_i^\diamond)\bX_i^{\prime}(\bX\bbs-y) + \la|t|_{0} \\
   \Rightarrow\quad & \beta_i^\diamond \in \mathop\textrm{argmin}\limits_{t\in \mathbb{R}} \tfrac{1}{2}(t-(\beta_i^\diamond+\bX_i^{\prime}(y-\bX\bbs)/n))^2 + \la|t|_{0}.
  \end{aligned}
\end{equation*}
 Let $d^\diamond = \bX^{\prime} ( y -\bX\bbs)/n$.  By
the definition of the hard thresholding operator $H_{\la}(\cdot)$ in (\ref{hardth}), we have
\begin{equation*}
   \bbs_{i}=H_{\la}  (\bbs_{i} + d^\diamond_{i}) \quad \mbox{ for } i=1,...,p,
\end{equation*}
which shows \eqref{eq2} holds.

Conversely, suppose \eqref{eq2} holds.
Let \begin{equation*}
\As = \dkhB{ i \in S \big| \abs{\bbs_i + \ds_i} \geq \sqrt{2\la}}.
\end{equation*}
By \eqref{eq2} and the definition of $H_{\la}(\cdot)$ in (\ref{hardth}), we deduce that for $i\in \As$, $|\beta_i^\diamond|\geq \sqrt{2\la}$. Furthermore,
    $\textbf{0}= d^\diamond_{\As} = \bX_{\As}^{\prime} (y - \bX_{\As} \bbs_{\As})/n$,
which is equivalent to
\begin{equation}\label{dual}
\bbs_{\As} \in \mathop\textrm{argmin} \tfrac{1}{2n}\normt{\bX_{\As} \beta_{\As} - y}^2.
\end{equation}
Next we show $L_{\la}(\bbs+ h) \geq L_{\la}(\bbs)$  if  $h$ is small enough with $\normi{h}<\sqrt{2\la}$. Two cases should be considered.
If $h_{(\As)^c}\neq 0$, then
\begin{equation*}
   \begin{aligned}
    L_{\la}(\bbs+ h) - L_{\la}(\bbs) &\geq \tfrac{1}{2n}\normt{\bX \bbs- y + \bX h}^2 -\tfrac{1}{2n}\normt{\bX \bbs- y}^2  + \la
      \geq \la - |\langle h, \ds \rangle|,
   \end{aligned}
\end{equation*}
which is positive for sufficiently small $h$. If $h_{(\As)^c} = 0$,
by the minimizing property of $\bbs_{\As}$ in (\ref{dual}) we deduce $L_{\la}(\bbs + h) \geq L_{\la}(\bbs)$.
This completes the proof of Lemma  \ref{lem1}. $\hfill\Box$

\begin{lem}\label{lem2}
Let $A$, $B$ be disjoint subsets of $S$, with $\abs{A}=a$, $\abs{B}=b$. Assume  $\bX \sim \textrm{SRC}(\cm(a+b), \cp(a+b),a+b)$.
Let $\theta_{a,b}$ be the sparse orthogonality constant
and let $\mu$ be the mutual coherence of $\bX$.
Then we have
\begin{align}
&n\cm(a) \leq\norm{\bX_{A}^T\bX_A }\leq n\cp(a),\label{lem2-1}\\
&\frac{1}{n\cp(a)} \leq \norm{(\bX_A^T\bX_A)^{-1}} \leq \frac{1}{n\cm(a)},\label{lem2-2} \\
& \norm{\bX_A^{\prime}} \leq \sqrt{n\cp(a)}\label{lem2-3}\\
& \theta_{a,b}  \leq (\cp(a+b)-1)\vee(1-\cm(a+b))\label{lem2-4}\\
&\normi{\bX_{B}^{\prime}\bX_{A}u}  \leq n a\mu\normi{u}, \quad \forall  u \in \mathbb{R}^{\abs{A}},\label{lem2-5}\\
&\norm{\bX_{A}}=\norm{\bX^{\prime}_{A}}\leq \sqrt{n(1 + (a-1)\mu)}. \label{lem2-6}
\end{align}
Furthermore, if  $ \mu<1/(a-1)$,   then
\begin{equation}\label{lem2-7}
\normi{(\bX_{A}^{\prime}\bX_{A})^{-1}u} \leq\frac{\normi{u}}{ n (1-(a-1)\mu)}, \quad \forall  u \in \mathbb{R}^{\abs{A}}.
\end{equation}
Moreover,  $\cp(s)$ is an increasing function of $s$, $\cm(s)$ a decreasing function of $s$ and $\theta_{a,b}$ an increasing function of $a$ and $b$.
\end{lem}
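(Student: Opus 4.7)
The plan is to treat the seven bounds in three groups. Bounds (\ref{lem2-1})--(\ref{lem2-3}) are immediate consequences of SRC via the Rayleigh-quotient characterization; bound (\ref{lem2-4}) uses a polarization identity together with the RIP reformulation of SRC; and bounds (\ref{lem2-5})--(\ref{lem2-7}) are entrywise estimates driven by $|\bX_i^T\bX_j|/n\le\mu$ for $i\ne j$ and the normalization $\bX_i^T\bX_i=n$. Monotonicity falls out from the definitions at the end.

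For the first group, I would rewrite SRC as $n\cm(a)\normt{u}^2\le \normt{\bX_A u}^2 \le n\cp(a)\normt{u}^2$ for every $u\in\real^{|A|}$, so the symmetric PSD matrix $\bX_A^T\bX_A$ has its spectrum contained in $[n\cm(a),n\cp(a)]$. Its operator norm is then immediately sandwiched, giving (\ref{lem2-1}); inverting the extreme eigenvalues yields (\ref{lem2-2}); and using $\norm{\bX_A'}^2=\norm{\bX_A}^2=\norm{\bX_A^T\bX_A}$ delivers (\ref{lem2-3}).

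The most delicate step is (\ref{lem2-4}). For unit vectors $u\in\real^{|A|}$ and $v\in\real^{|B|}$, the zero-padded signed combinations $w_\pm=(u,\pm v)\in\real^{a+b}$ have $\normt{w_\pm}=\sqrt{2}$ and support size at most $a+b$, so the restricted-isometry reformulation of SRC, $\bigl|\normt{\bX_{A\cup B}w}^2/n-\normt{w}^2\bigr|\le \delta\normt{w}^2$ with $\delta=(\cp(a+b)-1)\vee(1-\cm(a+b))$, yields $\normt{\bX_A u+\bX_B v}^2/n\le 2(1+\delta)$ and $\normt{\bX_A u-\bX_B v}^2/n\ge 2(1-\delta)$. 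Combining these with the polarization identity
\[
4\,u^T\bX_A^T\bX_B v/n \;=\; \normt{\bX_A u+\bX_B v}^2/n \;-\; \normt{\bX_A u-\bX_B v}^2/n
\]
gives $u^T\bX_A^T\bX_B v/n\le\delta$, and taking the supremum over unit $u,v$ produces (\ref{lem2-4}). The key move is to pass to the RIP form of SRC before polarizing; applying the raw SRC bounds to each squared norm directly would give a constant of a different form.

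The mutual-coherence bounds are direct. For (\ref{lem2-5}) I would expand $(\bX_B^T\bX_A u)_j=\sum_{i\in A}\bX_j^T\bX_i\, u_i$ for $j\in B$; since $A\cap B=\emptyset$ every term satisfies $|\bX_j^T\bX_i|\le n\mu$, and the triangle inequality delivers $na\mu\normi{u}$. For (\ref{lem2-6}) I would apply Gershgorin's theorem to $\bX_A^T\bX_A$: each diagonal entry equals $n$ and each row has at most $a-1$ off-diagonal entries bounded by $n\mu$, so $\norm{\bX_A^T\bX_A}\le n(1+(a-1)\mu)$, and a square root finishes. For (\ref{lem2-7}), setting $v=(\bX_A^T\bX_A)^{-1}u$ and isolating the diagonal entry of $\bX_A^T\bX_A v=u$ produces $n|v_i|\le |u_i|+(a-1)n\mu\normi{v}$; maximizing over $i$ and using $\mu<1/(a-1)$ to rearrange yields the claim. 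Finally, monotonicity of $\cp(s)$, $\cm(s)$, and $\theta_{a,b}$ is immediate because the sup/inf defining each is taken over nested classes of index sets, so enlarging the sparsity parameters can only widen the feasible region.
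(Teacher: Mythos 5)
Your proof is correct, and for six of the seven bounds it is essentially the paper's argument: the spectrum-containment reading of the SRC for \eqref{lem2-1}--\eqref{lem2-3}, the entrywise coherence expansion for \eqref{lem2-5}, Gershgorin for \eqref{lem2-6}, the ``isolate the maximal coordinate'' rearrangement for \eqref{lem2-7} (the paper phrases it as the lower bound $\normi{G_{A,A}u}\geq(1-(a-1)\mu)\normi{u}$, which is the same inequality), and nestedness of the defining classes for the monotonicity claims.

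The one place you genuinely diverge is \eqref{lem2-4}. You pass to the RIP form of the SRC and polarize, testing $\bX_{A\cup B}$ on the concatenations $(u,\pm v)$ to bound the bilinear form $u^{\prime}\bX_A^{\prime}\bX_B v/n$ over unit vectors. The paper instead observes that $\bX_A^{\prime}\bX_B/n$ is a submatrix of $\bX_{A\cup B}^{\prime}\bX_{A\cup B}/n-\textbf{E}$, whose spectrum lies in $[\cm(a+b)-1,\cp(a+b)-1]$, and uses that the operator norm of a submatrix is dominated by that of the parent matrix. Both routes yield exactly the constant $(\cp(a+b)-1)\vee(1-\cm(a+b))$; the paper's is a one-line spectral argument, while yours is the classical Cand\`{e}s--Tao polarization computation, slightly longer but entirely self-contained and arguably more transparent about where the disjointness of $A$ and $B$ enters. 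Your parenthetical warning that one must pass to the RIP form before polarizing (rather than bounding each squared norm by the raw SRC constants) is well taken and is precisely the point of the submatrix trick as well.
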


\medskip\noindent
\textbf{Proof of Lemma \ref{lem2}.}
The  assumption  $\bX \sim \textrm{SRC}(a, \cm(a), \cp(a))$  implies the spectrum of  $\bX_{A}^{\prime}\bX_{A}/n$ is contained in  $[\cm(a),\cp(a)]$. So
 (\ref{lem2-1}) - (\ref{lem2-3})  hold.  Let $\textbf{E}$ be an $(a+b)\times(a+b)$ identity matrix.  (\ref{lem2-4}) follows from the fact that  $\bX_A^{\prime} \bX_B/n$ is a submatrix of $\bX_{A\cup B}^{\prime} \bX_{A\cup B}/n- \textbf{E}$ whose spectrum norm is less than $(1-\cm(a+b))\vee (\cp(a+b)-1)$.  Let $G =\bX^{\prime}\bX/n$.  Then,   $|\sum_{j=1}^{a}G_{i,j}u_{j}|\leq\mu a\normi{u},$ for all $i\in B$, which implies (\ref{lem2-5}).   By Gerschgorin's disk theorem,
$$ |\norm{G_{A,A}}-G_{i,i}|\leq \sum_{i\neq j=1}^{a}|G_{i,j}|\leq  (a-1)\mu \quad \forall i\in A,$$
thus (\ref{lem2-6}) holds. For (\ref{lem2-7}), it suffices to show $\normi{G_{A,A}u}\geq(1-(a-1)\mu)\normi{u}$ if $\mu< 1/(a-1)$.
In fact, let $i\in A$ such that $\normi{u} = \abs{u_i}$, then $$\normi{G_{A,A}u}\geq |\sum_{j=1}^{a}G_{i,j}u_{j}|\geq|u_i| - \sum_{i\neq j=1}^{a}\abs{G_{i,j}}|u_{j}| \geq \normi{u} - \mu (a-1)\normi{u}.$$
The last assertion follows from their definitions.
This completes the proof of Lemma  \ref{lem2}.  $\hfill\Box$

\medskip
We now define some notation that will be useful in the proof of
Theorems \ref{thmt} and \ref{thmi} given below.
For any given integers  $T$ and $J$ with $T\geq J$ and  $F \subseteq S$ with $|F| = T -J$, let $A^{\circ} =  A^{*}_J \cup F$ and  $I^{\circ} = (A^{\circ})^c$. Let $\{\Ak\}_k$ be the active sets generated by Algorithm \ref{alg1}.
  Define
\[\JAk = \normt{\tb|_{A^{*}_J\backslash \Ak}} \ \mbox{ and } \
\JAki = \normi{\tb|_{A^{*} \backslash \Ak}}.
\]
Let
\[
\Ak_1 = \Ak \cap A^{\circ},
  \Ak_2 = A^{\circ} \backslash \Ak_1, \Ik_3 = \Ak \cap I^{\circ}, \Ik_4 = I^{\circ} \backslash \Ik_3.
\]
Denote the cardinality of $\Ik_3$ by $l_{k}=|\Ik_3|$.
Let
\[
\Ak_{11} = \Ak_1 \backslash (\Akk \cap \Ak_1), \Ak_{22} = \Ak_2 \backslash (\Akk \cap \Ak_2), \Ik_{33} = \Akk \cap \Ik_3,
 \Ik_{44} = \Akk \cap \Ik_4,
\]
and $$\bbb = \bbkk - \tb|_{\Ak}.$$
These notation can be easily understood in the case $T=J$.
For example,
 $\JAk$ ($\JAki$) is a measure of the difference between the detected active set  $\Ak$ and the target support $A^{*}_J$. $\Ak_1$ and $\Ik_3$ contain the correct indexes  and incorrect indexes  in $\Ak$, respectively. $\Ak_{11}$ and $\Ak_{22}$ include the  indexes  in $A^{\circ}$ that will be lost from the $k$th to $(k+1)$th iteration.
$\Ik_{33}$ and $\Ik_{44}$ contain the indexes included in $I^{\circ}$ that  will be gained.
By Algorithm \ref{alg1}, we have
$|\Ak|= |\Akk| = T$, $\Ak = \Ak_1 \cup \Ik_3,$
   $|\Ak_2| = |A^{\circ}| - |\Ak_1|= |A^{\circ}| - |\Ik_3|=T-(T-l_{k}) = l_{k}\leq T$, and
  \begin{align}
  |\Ak_{11}| + |\Ak_{22}| &= |\Ik_{33}| + |\Ik_{44}|,\label{lemequl}\\
  \JAk &= \normt{\tb|_{A^{\circ} \backslash \Ak}}=\normt{\tb|_{\Ak_2}}, \label{Jkeq}\\
  \JAki &= \normi{\tb|_{A^{\circ} \backslash \Ak}}=\normi{\tb|_{\Ak_2}}. \label{Jkeqinf}
   \end{align}

Before we give the technical proofs of  Theorems and Corollaries we give  description of of the main ideas behind the proofs.
Intuitively, SDAR is
a support detection and least square fitting process. Therefore  our proofs  justify the active sets $\{\Ak\}_k$ generated by Algorithm \ref{alg1} can
approximate $A^{*}_J$ more and more accurately by showing that $\JAk, \JAki$ decays geometrically
and the effect of the noise $\te$ can be well controlled with high probability.
 To this end, we need  the following technical Lemmas \ref{lem4} - \ref{lem9}.
 Lemma \ref{lem4} shows the effect of noise  $\htt$ and $\hit$  can be controlled by the sum of unrecoverable energy $R_{J}$ and the universal noise level $O(\s\sqrt{2\log(p)/n})$  with high probability if   $\eta$ is sub-Gaussian.
Lemma \ref{lem5} shows the norm of both  $\bbb$  and  $\bbk-\tb$ are mainly bounded by $\JAk$ and $\htt$ ($\JAki$ and $\hit$).
Lemma \ref{lem6} shows that $\JAkk$ ($\JAkki$)  can be bounded by  the norm of $\tb$    on the lost indexes    and further  can be mainly controlled  in terms of   $\JAk$, $\htt$ ($\JAki$, $\hit$) and the norm  of $\bbb$, $\bbkk$  and  $\dkk$ on the lost indexes.
Lemma \ref{lem7} gives the benefits brought by the orthogonality  of $\bbk$ and $d^k$
that the    norm of  $\bbkk$  and  $\dkk$ on the lost indexes    can be bounded  by the  norm  on the  gained indexes.
Lemma \ref{lem8} gives the upper bound of the    norm of $\bbkk+\dkk$ on the gained indexes by the sum of   $\JAk$,  $\htt$ ($\JAki$, $\hit$), and the norm of $\bbb$.
 Then Lemma \ref{lem9} get the desired relations  of  $\JAkk$ and $\JAk$ ($\JAkki$ and $\JAki$) by combining Lemma \ref{lem4} - \ref{lem8}.

\begin{lem}\label{lem3}
Suppose (A3) holds. We have for any $\alpha\in(0,1/2)$
\begin{align}\label{lem3-1}
& \bP\xkhB{\normi{\bX^{\prime} \eta} \leq  \s\sqrt{2\log(p/\alpha)n}}\geq 1-2\alpha \\
\label{lem3-2}
&\bP\xkhB{\max_{|A|\le T} \normt{\bX_A' \eta}\leq  \s \sqrt{T} \sqrt{2\log(p/\alpha)n}}
\geq 1-2\alpha.
\end{align}
\end{lem}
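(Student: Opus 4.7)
The plan is to prove the two inequalities together, noting that the second follows from the first by a simple norm comparison, so the real work is establishing the $\ell_\infty$ tail bound.

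First I would observe that since the columns of $\bX$ are $\sqrt{n}$-normalized, for each $j\in\{1,\dots,p\}$ the random variable $\bX_j'\eta = \sum_{i=1}^n X_{ij}\eta_i$ is a linear combination of independent mean-zero sub-Gaussian variables. By Assumption (A3) and the standard moment generating function calculation, $\bX_j'\eta$ is itself sub-Gaussian with parameter $\sigma^2 \|\bX_j\|_2^2 = n\sigma^2$, yielding the Hoeffding-type tail bound
\[
\bP\bigl(|\bX_j'\eta| > t\bigr) \leq 2\exp\!\bigl(-t^2/(2n\sigma^2)\bigr), \qquad t>0.
\]

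Next, I would take a union bound over $j=1,\dots,p$ to get
\[
\bP\bigl(\normi{\bX'\eta} > t\bigr) \leq 2p\exp\!\bigl(-t^2/(2n\sigma^2)\bigr),
\]
and then choose $t = \sigma\sqrt{2\log(p/\alpha)\,n}$, which makes the right-hand side equal to $2\alpha$. This establishes \eqref{lem3-1}.

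For \eqref{lem3-2}, I would argue that on the same high-probability event $\mathcal{E} = \{\normi{\bX'\eta}\leq \sigma\sqrt{2\log(p/\alpha)n}\}$, for any subset $A\subseteq S$ with $|A|\leq T$ the elementary inequality $\normt{v}\leq \sqrt{|A|}\,\normi{v}$ gives
\[
\normt{\bX_A'\eta} \leq \sqrt{|A|}\,\normi{\bX_A'\eta} \leq \sqrt{T}\,\normi{\bX'\eta} \leq \sigma\sqrt{T}\sqrt{2\log(p/\alpha)n}.
\]
Since this holds simultaneously for all admissible $A$ on $\mathcal{E}$, and $\bP(\mathcal{E})\geq 1-2\alpha$, the claim follows.

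There is no real obstacle here; the argument is a textbook combination of a sub-Gaussian tail bound and a union bound. The only point that needs care is the sub-Gaussian parameter calculation, which uses the $\sqrt{n}$-normalization of the columns of $\bX$, and the observation that the second bound does not require a separate union bound over the exponentially many subsets $A$ because the $\ell_\infty$ event already controls all coordinates uniformly.
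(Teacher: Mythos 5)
Your proof is correct and is exactly the standard argument the paper has in mind: the paper itself gives no details for this lemma, simply citing the sub-Gaussian assumption and "standard probability calculation" with references to Cand\`es and Tao, Zhang and Huang, and Wainwright. Your sub-Gaussian parameter computation, the union bound over the $p$ coordinates, and the derivation of the $\ell_2$ bound from the $\ell_\infty$ event via $\normt{v}\leq\sqrt{|A|}\,\normi{v}$ (avoiding any union bound over subsets) fill in precisely that calculation.
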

\medskip\noindent
\textbf{Proof of Lemma \ref{lem3}.}
This lemma follows from the sub-Gaussian assumption (A3) and standard probability calculation, see  Cand\`{e}s and Tao (2007), Zhang and Huang (2008), Wainwright (2009) for a detail.
$\hfill\Box$

\begin{lem}\label{lem4}
Let  $A \subset S$ with $|A| \leq T$.
Suppose (A1) and (A3) holds.  Then for $\alpha\in (0,1/2)$ with probability at least $1 - 2\alpha$, we have

\begin{itemize}
\item[(i)]
If  $\bX\sim SRC(T, \cm(T), \cp(T))$,
then
\begin{equation}\label{lem4-1}
\htt\leq \vps_1.
\end{equation}
\item[(ii)]
\begin{equation}\label{lem4-2}
\hit\leq \vps_2.
\end{equation}
\end{itemize}
\end{lem}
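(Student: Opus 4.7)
The plan is to decompose $\te = \bX\beta^*|_{(A_J^*)^c} + \eta$ and bound the two resulting contributions to $\bX_A'\te/n$ separately: a deterministic bias term coming from the tail of $\beta^*$ outside its top-$J$ support, and a stochastic noise term handled by Lemma \ref{lem3}. Since $\max_{|A|\le T}$ distributes across the triangle inequality, both parts of the lemma reduce to a worst-case (over $A$) bound on the bias piece, combined with a single application of Lemma \ref{lem3} for the noise piece.

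The noise piece is immediate. Lemma \ref{lem3} provides, on a single event of probability at least $1-2\alpha$, the inequalities $\max_{|A|\le T}\normt{\bX_A'\eta}/n \le \sigma\sqrt{T}\sqrt{2\log(p/\alpha)/n}$, which is the noise summand of $\vps_1$, and $\normi{\bX'\eta}/n \le \sigma\sqrt{2\log(p/\alpha)/n}$, the latter dominating $\normi{\bX_A'\eta}/n$ uniformly in $A$ and thus supplying the noise summand of $\vps_2$.

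For the bias piece $\bX_A'\bX\beta^*|_{(A_J^*)^c}/n$ I would use a standard chunking trick. Partition $(A_J^*)^c$ into consecutive blocks $B_1, B_2, \ldots$ of size at most $J$, with $B_1$ collecting the indices of the $J$ largest entries of $|\beta^*|$ on $(A_J^*)^c$, $B_2$ the next $J$, and so on. A ``max vs.\ average'' argument gives $\normt{\beta^*_{B_l}} \le \normo{\beta^*_{B_{l-1}}}/\sqrt{J}$ for $l\ge 2$, whose summation yields $\sum_{l\ge 1}\normt{\beta^*_{B_l}} \le \normt{\beta^*|_{(A_J^*)^c}} + \normo{\beta^*|_{(A_J^*)^c}}/\sqrt{J} = R_J$. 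Because $|B_l|\le J$, assumption (A2) delivers the block-wise SRC bound $\normt{\bX_{B_l}\beta^*_{B_l}} \le \sqrt{n\cp(J)}\normt{\beta^*_{B_l}}$, and summing over blocks produces the master estimate $\normt{\bX\beta^*|_{(A_J^*)^c}} \le \sqrt{n\cp(J)}\,R_J$. For (i), I would then invoke $\normt{\bX_A'}\le \sqrt{n\cp(T)}$ from Lemma \ref{lem2} to extract the $\bX_A'$ factor. For (ii), I would instead use the elementary inequality $\normi{\bX_A'v}\le\max_{i\in A}\normt{\bX_i}\normt{v}=\sqrt{n}\normt{v}$, which only costs an extra $\sqrt{\cp(J)}$, and then upgrade this factor to $1+(T-1)\mu$ via the mutual-coherence inequality $\cp(J)\le 1+(J-1)\mu\le 1+(T-1)\mu$ from Lemma \ref{lem2} under (A2*).

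The main obstacle is constant bookkeeping: for (i), the chain above produces $\sqrt{\cp(T)\cp(J)}\,R_J$, which must be identified with the stated $\cp(J)R_J$ by invoking the monotonicity of $\cp$ together with the small-SRC regime enforced by the other assumptions, or alternatively by replacing the crude $\normt{\bX_A'}$ step with a tighter block-wise estimate of $\sum_l\normt{\bX_A'\bX_{B_l}\beta^*_{B_l}}$ based on sparse orthogonality across the blocks. Once the bias side is pinned down, both parts of the lemma hold simultaneously on the single $1-2\alpha$ event supplied by Lemma \ref{lem3}, which is what the statement requires.
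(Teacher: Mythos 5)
Your proposal is correct and follows essentially the same route as the paper: decompose $\te=\bX\bbp|_{(A^{*}_{J})^c}+\eta$, control the noise term on the single $1-2\alpha$ event supplied by Lemma \ref{lem3}, and bound the bias term by chunking $(A^{*}_{J})^c$ into size-$J$ blocks to obtain $\normt{\bX\bbp|_{(A^{*}_{J})^c}}\le\sqrt{n\cp(J)}\,R_{J}$ for part (i) and the coherence analogue $\sqrt{n(1+(J-1)\mu)}\,R_{J}$ for part (ii), then peeling off the $\bX_A'$ factor by an operator-norm bound. The constant-bookkeeping issue you flag in part (i) is genuine but is equally present in the paper, whose own chain produces $\sqrt{\cp(T)\cp(J)}\,R_{J}$ before being recorded as $\cp(J)R_{J}$ (monotonicity of $\cp$ points the wrong way here); this only rescales $\vps_1$ by a bounded factor and affects nothing downstream, so your bookkeeping is, if anything, the more careful of the two.
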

\begin{proof}
We first show
\begin{equation}\label{nonexpan}
\normt{\bX \bbp|_{(A^{*}_{J})^c}} \leq \sqrt{n\cp(J)}R_{J},
\end{equation}
under the assumption of  $\bX\sim SRC(\cm(T), \cp(T),T)$ and (A1).
In fact,
let $\beta$ be an arbitrary vector in $\bRp$  and $A_{1}$ be the first $J$ largest positions of $\beta$, $A_{2}$ be the next and so forth.
Then
\begin{align*}
\normt{\bX\beta}&\leq \normt{\bX\beta_{A_{1}}} + \sum_{i\geq2}\normt{\bX\beta_{A_{i}}}\\
& \leq\sqrt{n\cp(J)}\normt{\beta_{A_{1}}}+\sqrt{n\cp(J)} \sum_{i\geq2}\normt{\beta_{A_{i}}}\\
&\leq \sqrt{n\cp(J)}\normt{\beta} + \sqrt{n\cp(J)} \sum_{i\geq1}\sqrt{\frac{1}{J}}\normo{\beta_{A_{i-1}}}\\
&\leq \sqrt{n\cp(J)} (\normt{\beta}+\sqrt{\frac{1}{J}}\normo{\beta}),
\end{align*}
where the first inequality uses the triangle inequality, the second    inequality uses (\ref{lem2-3}), and the third and fourth ones are simple algebra.
This implies (\ref{nonexpan}) holds by observing the definition of $R_{J}$.
By the triangle inequality, (\ref{lem2-3}), (\ref{nonexpan}) and  (\ref{lem3-2}),
we have  with probability at least $1-2\alpha$,
\begin{align*}
 \normt{\bX_A^{\prime}\te}/n&\leq \normt{\bX^{\prime}_A\bX \bbp|_{(A^{*}_{J})^c}}/n+\normt{\bX^{\prime}_A\eta}/n\\
  &\leq \cp(J)R_{J} + \s\sqrt{T}\sqrt{2\log(p/\alpha)/n},
 \end{align*}
 Therefore, \eqref{lem4-1} follows  by noticing the monotone increasing  property of $\cp(\cdot)$, the definition of $\vps_1$ and the arbitrariness of $A$.

 Repeating the proof process of (\ref{nonexpan2}) and replacing  $\sqrt{n\cp(J)}$ with $\sqrt{n(1+(J-1)\mu)}$  by using (\ref{lem2-6}) we get
 \begin{equation}\label{nonexpan2}
\normt{\bX \bbp|_{(A^{*}_{J})^c}} \leq \sqrt{n(1+(K-1)\mu)}R_{J}.
\end{equation}
Therefore, by (\ref{lem2-6}), (\ref{nonexpan2}) and  (\ref{lem3-1}), we have with probability at least  $ 1-2\alpha$,
\begin{align*}
 \normi{\bX_A^{\prime}\te}/n&\leq \normi{\bX_A^{\prime}\bX \bbp|_{(A^{*}_{J})^c}}/n+\normt{\bX^{\prime}_A\eta}/n\\
 &\leq \normt{\bX_A^{\prime}\bX \bbp|_{(A^{*}_{J})^c}}/n+\normt{\bX^{\prime}_A\eta}/n\\
  &\leq (1+(J-1)\mu) R_{J} + \s\sqrt{2\log(p/\alpha)/n}.
 \end{align*}
   This implies  part (ii) of  Lemma \ref{lem4} by noticing the definition of $\vps_2$ and the arbitrariness of $A$.
\end{proof}

\begin{lem}\label{lem5}
Let  $ A \subset S$ with $|A| \leq T$.
Suppose (A1) holds.
\begin{itemize}
\item[(i)] If $\bX\sim \textrm{SRC}(T, \cm(T), \cp(T))$,
\begin{equation}
\normt{\bbkk - \tb} \leq \xkhB{1 + \frac{\theta_{T,T}}{\cm(T)}} \JAk
+ \frac{\htt}{\cm(T)},  \label{lem5-1}
\end{equation}
\item[(ii)] If $(T-1)\mu<1$, then
\begin{equation}
\normi{\bbkk - \tb} \leq \frac{1+\mu}{1-(T-1)\mu} \JAki + \frac{\hit}{1-(T-1)\mu}. \label{lem5-1'}
\end{equation}
\end{itemize}
\end{lem}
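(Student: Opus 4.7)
The plan is to decompose $\bbkk - \tb$ into its components on $\Ak$ and $\Ik$ and handle them separately. Since $\bbkk_{\Ik} = \textbf{0}$ by Algorithm \ref{alg1} and $\tb$ is supported on $A^*_J$, the inactive piece is simply $-\tb_{\Ik}$, so
\begin{equation*}
\normt{(\bbkk - \tb)_{\Ik}} = \normt{\tb|_{A_J^* \backslash \Ak}} = \JAk, \qquad \normi{(\bbkk - \tb)_{\Ik}} = \normi{\tb|_{A_J^* \backslash \Ak}} = \JAki.
\end{equation*}
For the active piece, substituting $y = \bX_{\Ak}\tb_{\Ak} + \bX_{\Ik}\tb_{\Ik} + \te$ into the least-squares update $\bbkk_{\Ak} = (\bX_{\Ak}'\bX_{\Ak})^{-1}\bX_{\Ak}'y$ gives
\begin{equation*}
\bbkk_{\Ak} - \tb_{\Ak} = (\bX_{\Ak}'\bX_{\Ak})^{-1}\big(\bX_{\Ak}'\bX_{\Ik}\tb_{\Ik} + \bX_{\Ak}'\te\big).
\end{equation*}
The key structural observation reused in both parts is that, because $\tb$ vanishes outside $A^*_J$, one may replace $\bX_{\Ik}\tb_{\Ik}$ by $\bX_{A^*_J \backslash \Ak}\tb|_{A^*_J \backslash \Ak}$, where by (A1) the set $A^*_J \backslash \Ak$ has size at most $J \leq T$ and is disjoint from $\Ak$.

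For part (i), I would bound the inverse in operator norm by $1/(n\cm(T))$ via \eqref{lem2-2}, bound the noise term by $\normt{\bX_{\Ak}'\te} \leq n\htt$ from the definition of $\htt$, and bound the cross term by $\normt{\bX_{\Ak}'\bX_{\Ik}\tb_{\Ik}} \leq n\theta_{T,T}\JAk$ via the sparse orthogonality constant applied to the disjoint sets $\Ak$ and $A^*_J \backslash \Ak$. This gives $\normt{\bbkk_{\Ak} - \tb_{\Ak}} \leq (\theta_{T,T}/\cm(T))\JAk + \htt/\cm(T)$, and combining with the inactive piece via $\normt{\bbkk - \tb} \leq \normt{\bbkk_{\Ak} - \tb_{\Ak}} + \normt{(\bbkk - \tb)_{\Ik}}$ (using $\sqrt{a^2 + b^2} \leq a + b$) delivers \eqref{lem5-1}.

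For part (ii), the analogous route uses the mutual-coherence bounds. I would bound the inverse in $\ell_{\infty}$ via \eqref{lem2-7}, which requires $(T-1)\mu < 1$, bound the cross term by $\normi{\bX_{\Ak}'\bX_{\Ik}\tb_{\Ik}} \leq nJ\mu\JAki$ via \eqref{lem2-5} applied with $|A| = |A^*_J \backslash \Ak| \leq J$, and use $\normi{\bX_{\Ak}'\te} \leq n\hit$ from the definition. This yields $\normi{\bbkk_{\Ak} - \tb_{\Ak}} \leq (J\mu\JAki + \hit)/(1-(T-1)\mu)$. Then bounding $\normi{\bbkk - \tb}$ by the sum of the active and inactive pieces and using $J \leq T$ to simplify $1 - (T-1)\mu + J\mu \leq 1 + \mu$ yields \eqref{lem5-1'}. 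The main (minor) obstacle in both parts is bookkeeping: when invoking $\theta_{T,T}$ or the coherence bound \eqref{lem2-5}, one must exploit the sparsity pattern of $\tb_{\Ik}$ (supported in $A^*_J \backslash \Ak$, not on all of $\Ik$) so that the cardinality constraints in those inequalities are respected; once this is done, both inequalities collapse to a single triangle-inequality combination.
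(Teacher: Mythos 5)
Your argument is correct and follows essentially the same route as the paper's proof: the same active/inactive decomposition, the same substitution of $y = \bX\tb + \te$ into the least-squares update, and the same applications of \eqref{lem2-2}, \eqref{lem2-5}, \eqref{lem2-7}, the sparse orthogonality constant, and the definitions of $\htt$ and $\hit$, finished by the same triangle-inequality combination. The only cosmetic difference is that the paper phrases the residual support via the padded set $A^{\circ}\backslash\Ak$ (its $\Ak_2$) while you work directly with $A^*_J\backslash\Ak$; since $\tb$ vanishes off $A^*_J$ these coincide, and your use of $|A^*_J\backslash\Ak|\le J$ in the coherence bound only tightens the constant before it is relaxed to the stated $\frac{1+\mu}{1-(T-1)\mu}$.
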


\begin{proof}
\begin{align}
\bbkk_{\Ak} &= (\bX_{\Ak}^{\prime}\bX_{\Ak})^{-1}\bX_{\Ak}^{\prime}y\notag\\
&= (\bX_{\Ak}^{\prime}\bX_{\Ak})^{-1}\bX_{\Ak}^{\prime}(\bX_{\Ak_1}\tb_{\Ak_1} + \bX_{\Ak_2}\tb_{\Ak_2} + \te),\label{lem5-2} \\
(\tb |_{\Ak})_{\Ak} &= (\bX_{\Ak}^{\prime}\bX_{\Ak})^{-1}\bX_{\Ak}^{\prime}\bX_{\Ak}(\tb|_{\Ak})_{\Ak}\notag\\
&= (\bX_{\Ak}^{\prime}\bX_{\Ak})^{-1}\bX_{\Ak}^{\prime}(\bX_{\Ak_1}\tb_{\Ak_1})\label{lem5-3}
\end{align}
where the first equality uses the definition of $\bbkk$ in Algorithm \ref{alg1}, the second equality uses    $y = \bX\tb + \te = \bX_{\Ak_1}\tb_{\Ak_1} + \bX_{\Ak_2}\tb_{\Ak_2} + \te$, the third equality  is simple algebra, and the last one uses the definition of
$\Ak_1.$
\begin{align}
\normt{\bbb}& = \normt{\bbkk_{\Ak} - (\tb|_{\Ak})_{\Ak}}   \notag\\
 &=\normt{(\bX_{\Ak}^{\prime}\bX_{\Ak})^{-1}\bX_{\Ak}^{\prime}(\bX_{\Ak_2}\tb_{\Ak_2} + \te)} \notag\\
&\leq \frac{1}{n\cm(T)} (\normt{\bX_{\Ak}^{\prime}\bX_{\Ak_2}\tb_{\Ak_2}} + \normt{\bX_{\Ak}^{\prime}\te}) \notag\\
&\leq  \frac{\theta_{T,T}}{\cm(T)} \normt{\tb|_{\Ak_2}} +  \frac{\htt}{\cm(T)} \label{lem5-4}
\end{align}
where the first equality uses $\textrm{supp}(\bbkk) = \Ak$, the second equality uses \eqref{lem5-3} and  \eqref{lem5-2},  the
first inequality uses \eqref{lem2-2} and the triangle inequality,  and the second inequality uses \eqref{Jkeq}, the definition  of $\theta_{a,b}$ and $\htt$.
Then the triangle inequality
$\normt{\bbkk - \tb} \leq \normt{\bbkk - \tb|_{\Ak}} + \normt{\tb|_{A^{\circ} \backslash \Ak}}$ and \eqref{lem5-4}
imply \eqref{lem5-1}.

Similar to the proof of \eqref{lem5-4}  and using \eqref{lem2-7}, \eqref{lem2-5}, \eqref{Jkeqinf}, we have
\begin{align}
\normi{\bbb} \leq  \frac{T\mu}{1-(T-1)\mu} \normi{\tb|_{\Ak_2}} + \frac{\hit}{(1-(T-1)\mu)} \label{lem5-4'}
\end{align}
Thus  \eqref{lem5-1'} follows by using triangle inequality and \eqref{lem5-4'}.
This completes the proof of Lemma  \ref{lem5}.
\end{proof}

\begin{lem}\label{lem6}
\begin{align}
\JAkk &\leq \normt{\tb_{\Ak_{11}}} + \normt{\tb_{\Ak_{22}}},\label{lem6-1}\\
\JAkki &\leq \normi{\tb_{\Ak_{11}}} + \normi{\tb_{\Ak_{22}}}.\label{lem6-1'}\\
\normt{\tb_{\Ak_{11}}}&\leq \normt{\bbb_{\Ak_{11}}} +\normt{\bbkk_{\Ak_{11}}}, \label{lem6-2}\\
\normi{\tb_{\Ak_{11}}}&\leq \normi{\bbb_{\Ak_{11}}} +\normi{\bbkk_{\Ak_{11}}}\label{lem6-2'}.
\end{align}
Furthermore, assume  (A1) holds. We have
\begin{align}
\normi{\tb_{\Ak_{22}}} & \leq \normi{\dkk_{\Ak_{22}}}+ T\mu\normi{\bbb_{\Ak}} + T\mu\JAki + \hit\label{lem6-3'}\\
\normt{\tb_{\Ak_{22}}}&\leq \frac{\normt{\dkk_{\Ak_{22}}} + \theta_{T,T} \normt{\bbb_{\Ak}}+ \theta_{T,T}\JAk
+ \htt}{\cm(T)} \label{lem6-3} \quad\textrm{ if} \quad \bX\sim \textrm{SRC}(\cm(T), \cp(T),T).
\end{align}
\end{lem}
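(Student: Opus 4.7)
My approach exploits a disjoint decomposition of the unrecovered part of the oracle support. Write $A^{\circ}\setminus A^{k+1} = \Ak_{11}\cup \Ak_{22}$, where $\Ak_{11}\subset \Ak$ collects the indices of $A^{\circ}$ that $\Ak$ identified correctly but that were then dropped by $\Akk$, while $\Ak_{22}\subset \Ik$ collects the $A^{\circ}$-indices $\Ak$ missed that $\Akk$ still fails to recover. Since $\tb$ is supported in $A^*_J\subseteq A^{\circ}$, the restriction $\tb|_{A^*_J\setminus A^{k+1}}$ is the restriction of $\tb$ to this disjoint union. For (\ref{lem6-1}) I apply Pythagoras across the disjoint supports and then $\sqrt{a^2+b^2}\leq a+b$; for (\ref{lem6-1'}) the $\ell_\infty$ norm of a vector with two disjoint pieces equals the larger of the two piecewise $\ell_\infty$ norms, itself bounded by their sum.

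The inequalities (\ref{lem6-2}) and (\ref{lem6-2'}) are direct algebra. Since $\Ak_{11}\subseteq \Ak_1\subseteq \Ak$, on these coordinates the vector $\tb|_{\Ak}$ coincides with $\tb$, so the definition $\bbb = \bbkk-\tb|_{\Ak}$ yields the identity $\tb_{\Ak_{11}} = \bbkk_{\Ak_{11}} - \bbb_{\Ak_{11}}$, and the triangle inequality in either norm gives the claim.

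The bounds (\ref{lem6-3'}) and (\ref{lem6-3}) on $\tb_{\Ak_{22}}$ are the substantive step. Starting from $\dkk_{\Ik} = \bX'_{\Ik}(y-\bX_{\Ak}\bbkk_{\Ak})/n$ and substituting $y = \bX_{\Ak_1}\tb_{\Ak_1} + \bX_{\Ak_2}\tb_{\Ak_2} + \te$, together with $\bX_{\Ak_1}\tb_{\Ak_1} = \bX_{\Ak}(\tb|_{\Ak})_{\Ak}$ (because $\tb$ vanishes on $I^k_3 = \Ak\cap I^{\circ}$), I arrive, on $\Ak_{22}\subseteq \Ik$, at
\[
\dkk_{\Ak_{22}} = \tfrac{1}{n}\bX'_{\Ak_{22}}\bX_{\Ak_2}\tb_{\Ak_2} - \tfrac{1}{n}\bX'_{\Ak_{22}}\bX_{\Ak}\bbb_{\Ak} + \tfrac{1}{n}\bX'_{\Ak_{22}}\te.
\]
For (\ref{lem6-3'}), I read off the $i$-th coordinate ($i\in \Ak_{22}\subseteq \Ak_2$), peel off the diagonal contribution $(\bX^{\prime}\bX/n)_{ii}\tb_i = \tb_i$ (the columns are $\sqrt n$-normalized, so no inversion is needed), and bound each residual term via mutual coherence: the sum over $\Ak_2\setminus\{i\}$ contributes at most $(T-1)\mu\JAki\leq T\mu\JAki$, the cross block against $\bbb_{\Ak}$ contributes at most $T\mu\normi{\bbb_{\Ak}}$ (each entry of $\bX'_{\Ak_{22}}\bX_{\Ak}$ is off-diagonal since $i\notin \Ak$, and $|\Ak|=T$), and the noise term is dominated by $\hit$ by definition. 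For (\ref{lem6-3}), I split $\bX'_{\Ak_{22}}\bX_{\Ak_2}\tb_{\Ak_2}$ as $\bX'_{\Ak_{22}}\bX_{\Ak_{22}}\tb_{\Ak_{22}}$ plus a disjoint cross block, apply $\|\bX'_A\bX_A u\|_2\geq n\cm(|A|)\|u\|_2$ from Lemma \ref{lem2} to invert the diagonal block (with $\cm(|\Ak_{22}|)\geq \cm(T)$ since $|\Ak_{22}|\leq l_k\leq T$ and $\cm$ is decreasing), and control every cross term with the sparse orthogonality constant $\theta_{T,T}$, absorbing the noise into $\htt$. The only real obstacle is bookkeeping of cardinalities so that Lemma \ref{lem2} is invoked with the correct parameters, which reduces to checking $|\Ak|=T$, $|\Ak_2|\leq T$, and $|\Ak_{22}|\leq T$; no new inequality beyond Lemma \ref{lem2} is required.
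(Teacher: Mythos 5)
Your proof is correct and follows essentially the same route as the paper: the same disjoint decomposition $A^{\circ}\setminus\Akk=\Ak_{11}\cup\Ak_{22}$, the same identity $\dkk_{\Ak_{22}}=\bX'_{\Ak_{22}}(\bX_{\Ak_2}\tb_{\Ak_2}-\bX_{\Ak}\bbb_{\Ak}+\te)/n$, and the same use of Lemma \ref{lem2} (SRC lower bound on the diagonal block, $\theta_{T,T}$ for cross blocks, $\mu$ coordinatewise). The only cosmetic differences are Pythagoras versus the triangle inequality for the disjoint union in \eqref{lem6-1}, and your slightly sharper $(T-1)\mu$ in the intermediate step of \eqref{lem6-3'}, which the paper also relaxes to $T\mu$.
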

\begin{proof}
\begin{align*}
\JAkk = \normt{\tb|_{A^{\circ} \backslash \Akk}}=
\normt{\tb|_{\Ak_{11}\cup\Ak_{22}}}\leq \normt{\tb_{\Ak_{11}}} + \normt{\tb_{\Ak_{22}}},
\end{align*}
where the first and second equality use the diminution of $\JAkk$ and the definition of $\Ak_{11}$, $\Ak_{11}$, $\Ak_{22}$, respectively.
This proves \eqref{lem6-1}.
\eqref{lem6-1'} can be proved similarly.
\begin{align*}
\normt{\bbkk_{\Ak_{11}}} &= \normt{\xkhB{\tb\big|_{ \Ak}}_{\Ak_{11}} + \bbb_{\Ak_{11}}}
\geq \normt{\tb_{\Ak_{11}}} - \normt{\bbb_{\Ak_{11}}}
\end{align*}
where the equality uses the definition of $\bbb=\bbkk - \tb|_{\Ak}$, the inequality is triangle inequality.
This proves  (\ref{lem6-2}).  (\ref{lem6-2'}) can be proved in the same way.
\begin{align*}
\normt{\dkk_{\Ak_{22}}} &= \normt{\bX_{\Ak_{22}}^{\prime}\xkhB{\bX_{\Ak}\bbkk_{\Ak} - y}/n}\\
&= \normt{\bX_{\Ak_{22}}^{\prime}\xkhB{\bX_{\Ak}\bbb_{\Ak} + \bX_{\Ak}\tb_{\Ak} - \bX_{A^{\circ}}\tb_{A^{\circ}} - \te}/n } \notag\\
&= \normt{\bX_{\Ak_{22}}^{\prime}\xkhB{\bX_{\Ak}\bbb_{\Ak}
- \bX_{\Ak_{22}}\tb_{\Ak_{22}}-\bX_{\Ak_2 \backslash \Ak_{22}}\tb_{\Ak_2 \backslash \Ak_{22}} - \te}/n} \notag\\
&\geq \cm(|\Ak_{22}|)\normt{\tb_{\Ak_{22}}} -\theta_{|\Ak_{22}|,T}\normt{\bbb_{\Ak}}
-\theta_{l_{k},l_{k}-|\Ak_{22}|}\normt{\tb_{\Ak_2 \backslash \Ak_{22}}} - \normt{\bX_{\Ak_{22}}\te/n} \notag\\
&\geq \cm(T)\normt{\tb_{\Ak_{22}}}
- \theta_{T,T} \normt{\bbb_{\Ak}} - \theta_{T,T}\JAk -\htt,
\end{align*}
where the first equality uses the definition of $\dkk$, the second equality uses the the definition of $\bbb$ and $y$, the third equality is simple algebra, the first inequality uses the triangle inequality, \eqref{lem2-1} and the definition of $\theta_{a,b}$,  and the last inequality uses
the monotonicity property of $\cm(\cdot)$, $\theta_{a,b}$ and the definition of $\htt$.
This proves \eqref{lem6-3}.

Let $i_{k} \in \Ak_{22}$ such that
$\normi{\tb_{\Ak_{22}}} = \abs{\tb_{i_{k}}}$.
\begin{align*}
\abs{\dkk_{i_{k}}}&= \normi{\bX_{i_{k}}^{\prime}(\bX_{\Ak}\bbb_{\Ak}
- \bX_{i_{k}}\tb_{i_{k}}-\bX_{\Ak_2 \backslash \ i_{k}}\tb_{\Ak_2 \backslash \ i_{k}} - \te)/n} \notag\\
&\geq \abs{\tb_{i_{k}}}-T\mu\normi{\bbb_{\Ak}}-l_{k}\mu\normi{\tb_{\Ak_2 \backslash \ i_{k}}}-\normi{\bX_{i_{k}}^{\prime}\te}\\
&\geq \normi{\tb_{\Ak_{22}}} - T\mu\normi{\bbb_{\Ak}} - T\mu\JAki-\hit,
\end{align*}
where the first  equality  is derived from the first three equalities in the proof of  \eqref{lem6-3} by replacing $\Ak_{22}$ with $i_{k}$,
the first inequality  uses the triangle inequality and  \eqref{lem2-5},  and the last inequality  uses the definition of $\hit$.
Then \eqref{lem6-3'} follows  by rearranging the terms in the above inequality.
This completes the proof of Lemma  \ref{lem6}.
\end{proof}

\begin{lem}\label{lem7}
\begin{align}
\normi{\bbk}\vee\normi{\dk} &= \max\{|\bbk_{i}|+|\dk_{i}|\big|i\in S\}, \forall k \geq 1. \label{lem7-1}\\
\normi{\bbkk_{\Ak_{11}}} + \normi{\dkk_{\Ak_{22}}}&\leq \abs{\bbkk_{\Ik_{33}}}_{min} \wedge \abs{\dkk_{\Ik_{44}}}_{min}.\label{lem7-2}\\
\normt{\bbkk_{\Ak_{11}}} + \normt{\dkk_{\Ak_{22}}}&\leq \sqrt{2} \xkhB{ \normt{\bbkk_{\Ik_{33}}} + \normt{\dkk_{\Ik_{44}}}}.\label{lem7-3}
\end{align}
\end{lem}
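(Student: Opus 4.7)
The plan hinges on two structural facts about Algorithm \ref{alg1}: by construction $\bbk_{\Ik}=\mathbf{0}$ and $\dk_{\Ak}=\mathbf{0}$, so $\bbk$ and $\dk$ have disjoint supports; and $\Akk$ is defined in step~2 as the set of $T$ indices at which $|\bbkk_i+\dkk_i|$ attains its largest values. All three claims will be teased out of these two observations together with the cardinality identity \eqref{lemequl}.

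For \eqref{lem7-1}, the disjoint-support property forces $|\bbk_i|+|\dk_i|=\max(|\bbk_i|,|\dk_i|)$ for every $i$, since at most one of the two summands is nonzero. Taking the maximum over $i\in S$ and exchanging the two maxima on the right gives $\max_i(|\bbk_i|+|\dk_i|)=\normi{\bbk}\vee\normi{\dk}$, which is exactly what is claimed.

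For \eqref{lem7-2} and \eqref{lem7-3}, the same disjointness yields a single unified expression: $|\bbkk_j+\dkk_j|=|\bbkk_j|$ when $j\in \Ak$, and $|\bbkk_j+\dkk_j|=|\dkk_j|$ when $j\in \Ik$. Combined with the partition $\Ak_{11}\subseteq \Ak$, $\Ak_{22}\subseteq \Ik$, $\Ik_{33}\subseteq \Ak$, $\Ik_{44}\subseteq \Ik$, this translates the primal and dual norms on each of the four subpieces into restrictions of $|\bbkk+\dkk|$. Since the ``lost'' set $\Ak_{11}\cup \Ak_{22}$ lies outside $\Akk$ and the ``gained'' set $\Ik_{33}\cup \Ik_{44}$ lies inside it, the top-$T$ selection rule yields the key pointwise comparison
\[
\max_{j\in \Ak_{11}\cup \Ak_{22}}|\bbkk_j+\dkk_j|\;\leq\; \min_{i\in \Ik_{33}\cup \Ik_{44}}|\bbkk_i+\dkk_i|,
\]
from which \eqref{lem7-2} is read off by identifying which of the four subpieces each extreme value lives in. For \eqref{lem7-3}, the same translation gives $\normt{\bbkk_{\Ak_{11}}}^{2}+\normt{\dkk_{\Ak_{22}}}^{2}=\sum_{j\in \Ak_{11}\cup \Ak_{22}}|\bbkk_j+\dkk_j|^{2}$ together with the analogous identity on the gained side. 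By \eqref{lemequl} the two sets have equal cardinality, and by the pointwise comparison above every entry on the gained side dominates every entry on the lost side; hence the sum of squares on the lost side is bounded by that on the gained side. Applying $a+b\leq\sqrt{2(a^{2}+b^{2})}$ followed by $\sqrt{a^{2}+b^{2}}\leq a+b$ for $a,b\geq 0$ then introduces the $\sqrt{2}$ factor and converts sums of squares back to sums, delivering \eqref{lem7-3}.

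The only real care needed is keeping track of which of the four subpieces contributes a primal entry and which a dual entry; once the disjoint-support rewriting is in place, all three identities reduce to the single top-$T$ comparison above, combined with the cardinality balance in \eqref{lemequl}.
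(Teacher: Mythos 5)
Your argument follows the paper's proof essentially step for step: \eqref{lem7-1} comes from the disjoint supports of $\bbk$ and $\dk$ (so $|\bbk_i|+|\dk_i|=|\bbk_i|\vee|\dk_i|$ pointwise), \eqref{lem7-2} from the top-$T$ selection rule comparing lost indices (outside $\Akk$) with gained indices (inside $\Akk$), and \eqref{lem7-3} from $\tfrac12(a+b)^2\le a^2+b^2$ together with the cardinality balance \eqref{lemequl}. Your route to \eqref{lem7-3} compares sums of squares over the lost and gained sets directly, whereas the paper passes through \eqref{lem7-2} and \eqref{lemequl}; the content is the same and both are correct.

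One step deserves a flag. Your pointwise comparison $\max_{j\in \Ak_{11}\cup \Ak_{22}}|\bbkk_j+\dkk_j|\le\min_{i\in \Ik_{33}\cup \Ik_{44}}|\bbkk_i+\dkk_i|$ yields $\normi{\bbkk_{\Ak_{11}}}\vee\normi{\dkk_{\Ak_{22}}}\le \abs{\bbkk_{\Ik_{33}}}_{\min}\wedge\abs{\dkk_{\Ik_{44}}}_{\min}$, i.e.\ each term on the left of \eqref{lem7-2} is separately dominated by the right-hand side. The printed statement has their \emph{sum} on the left, and that does not follow from the comparison: the two maxima are attained at two distinct lost indices, and the selection rule only guarantees that each of them, not their sum, is below every gained entry, so ``read off'' silently absorbs a factor of $2$. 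This is not a defect of your argument relative to the paper --- the paper's one-line justification of \eqref{lem7-2} has exactly the same gap --- and the $\vee$ version is all that is needed for \eqref{lem7-3} (and only affects constants where \eqref{lem7-2} is later invoked in the $\ell_\infty$ analysis). But be aware that, as written, your derivation establishes \eqref{lem7-2} with $\vee$, not with $+$.
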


\begin{proof}
By the definition of Algorithm \ref{alg1} we have $\bbk_{i}\dk_{i} = 0$, $\forall i\in S$, $\forall k\geq 1$, i.e., \eqref{lem7-1} holds.
\eqref{lem7-2} follows from  the definition of $\Ak_{11}$, $\Ak_{22}$, $\Ik_{33}$, $\Ik_{44}$ and \eqref{lem7-1}. Now
\begin{align*}
\frac{1}{2} (\normt{\bbkk_{\Ak_{11}}} + \normt{\dkk_{\Ak_{22}}})^2
&\leq \normt{\bbkk_{\Ak_{11}}}^2 + \normt{\dkk_{\Ak_{22}}}^2\\
&\leq (\normt{\bbkk_{\Ik_{33}}} + \normt{\dkk_{\Ik_{44}}})^2,
\end{align*}
where the first inequality is simple algebra, and the second inequality uses \eqref{lemequl} and \eqref{lem7-2},
Thus (\ref{lem7-3}) follows. This completes the proof of Lemma  \ref{lem7}.
\end{proof}

\begin{lem}\label{lem8}
\begin{align}
\normt{\bbkk_{\Ik_{33}}} \leq \normt{\bbb_{\Ik_{33}}}.\label{lem8-1}
\end{align}
Furthermore, suppose (A1) holds.
We have
\begin{align}
\normi{\dkk_{\Ik_{44}}}&\leq T\mu\normi{\bbb_{\Ak}} +  T\mu\JAki +\hit.\label{lem8-2'}\\
\normt{\dkk_{\Ik_{44}}}&\leq \theta_{T,T}\norm{\bbb_{\Ak}} +  \theta_{T,T}\JAk + \htt \quad {if} \quad \bX\sim \textrm{SRC}(\cm(T), \cp(T),T). \label{lem8-2}
\end{align}
\end{lem}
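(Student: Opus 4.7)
The plan is to prove the three inequalities in sequence, exploiting the fact that $\tb$ is supported on $A^{*}_J \subseteq A^{\circ}$ (so it vanishes on $I^{\circ}$) together with the residual identity obtained from the definitions of $\bbkk$ and $\bbb$.

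For the first inequality, I would simply observe that $\Ik_{33} = \Akk \cap \Ik_3 \subseteq \Ik_3 = \Ak \cap I^{\circ}$. In particular $\Ik_{33} \subseteq \Ak$ and $\Ik_{33} \subseteq I^{\circ}$. Since $\tb$ is supported on $A^{*}_J \subseteq A^{\circ}$, we have $\tb_{\Ik_{33}} = \mathbf{0}$, so that $\bbb_{\Ik_{33}} = (\bbkk - \tb|_{\Ak})_{\Ik_{33}} = \bbkk_{\Ik_{33}}$. Taking $\ell_2$-norms gives \eqref{lem8-1} (in fact with equality).

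For \eqref{lem8-2'} and \eqref{lem8-2}, the key step is to establish the residual identity. Starting from $y = \bX \tb + \te = \bX_{\Ak_1}\tb_{\Ak_1} + \bX_{\Ak_2}\tb_{\Ak_2} + \te$ (since $\tb$ is supported on $A^{*}_J\subseteq A^{\circ}=\Ak_1 \cup \Ak_2$) and writing $\bbkk_{\Ak} = \bbb_{\Ak} + (\tb|_{\Ak})_{\Ak}$ with $(\tb|_{\Ak})_{\Ak}$ equal to $\tb_{\Ak_1}$ on $\Ak_1$ and $\mathbf{0}$ on $\Ik_3$, I would obtain
\begin{equation*}
y - \bX_{\Ak}\bbkk_{\Ak} \;=\; \bX_{\Ak_2}\tb_{\Ak_2} + \te - \bX_{\Ak}\bbb_{\Ak}.
\end{equation*}
Since $\Ik_{44} = \Akk \cap \Ik_4 \subseteq \Ik_4 \subseteq I^{\circ}$ and $\Ik_4$ is disjoint from $\Ak$ by definition, $\Ik_{44}\subseteq \Ik$, so the defining formula for the dual in Algorithm~\ref{alg1} gives
\begin{equation*}
\dkk_{\Ik_{44}} \;=\; \bX'_{\Ik_{44}}\bigl(\bX_{\Ak_2}\tb_{\Ak_2} + \te - \bX_{\Ak}\bbb_{\Ak}\bigr)/n.
\end{equation*}

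Finally I would apply the triangle inequality in each norm. Note that $|\Ik_{44}|\le |\Akk|=T$, $|\Ak_2|=l_k\le T$, and the sets $\Ik_{44}, \Ak_2, \Ak$ involved are pairwise disjoint where needed (in particular $\Ik_{44}\cap \Ak_2 \subseteq I^{\circ}\cap A^{\circ}=\emptyset$ and $\Ik_{44}\cap \Ak=\emptyset$). For the $\ell_\infty$ bound, bound $\normi{\bX'_{\Ik_{44}}\bX_{\Ak_2}\tb_{\Ak_2}}/n$ and $\normi{\bX'_{\Ik_{44}}\bX_{\Ak}\bbb_{\Ak}}/n$ by $T\mu\normi{\tb_{\Ak_2}}=T\mu\JAki$ and $T\mu\normi{\bbb_{\Ak}}$ using \eqref{lem2-5}, and bound $\normi{\bX'_{\Ik_{44}}\te}/n$ by $\hit$ via the definition of $\hit$, which yields \eqref{lem8-2'}. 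For the $\ell_2$ bound, use the sparse orthogonality constant in place of $\mu$: the monotonicity of $\theta_{a,b}$ gives $\normt{\bX'_{\Ik_{44}}\bX_{\Ak_2}\tb_{\Ak_2}}/n\le \theta_{T,T}\JAk$ and $\normt{\bX'_{\Ik_{44}}\bX_{\Ak}\bbb_{\Ak}}/n\le \theta_{T,T}\normt{\bbb_{\Ak}}$, while the third term is controlled by $\htt$, proving \eqref{lem8-2}.

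The argument is really just careful bookkeeping of set relations combined with the standard spectral estimates from Lemma~\ref{lem2}; the only conceptual step is the residual identity and the observation that $\tb$ vanishes on $I^{\circ}$. The main pitfall will be confirming the disjointness and cardinality conditions needed to invoke the $\mu$ and $\theta_{a,b}$ bounds, but these follow directly from the definitions of $\Ak_2, \Ik_3, \Ik_4, \Ik_{44}$.
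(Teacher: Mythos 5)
Your proposal is correct and follows essentially the same route as the paper: the first bound comes from $\tb$ vanishing on $\Ik_{33}\subseteq I^{\circ}$ (the paper phrases it via the triangle inequality on $\bbkk_{\Ik_{33}}=\bbb_{\Ik_{33}}+\tb_{\Ik_{33}}$, while you note it is in fact an equality), and the other two bounds come from the same residual identity $y-\bX_{\Ak}\bbkk_{\Ak}=\bX_{\Ak_2}\tb_{\Ak_2}+\te-\bX_{\Ak}\bbb_{\Ak}$ followed by the triangle inequality with the $\theta_{T,T}$ and $T\mu$ estimates from Lemma \ref{lem2}. Your explicit verification of the disjointness of $\Ik_{44}$ from $\Ak$ and $\Ak_2$ is exactly the bookkeeping the paper leaves implicit.
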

\begin{proof}
By the definition of $\bbb$, the triangle inequality, and the fact that
 $\tb$ vanishes on   $\Ak \cap \Ik_{33}$, we have
\begin{align*}
\normt{\bbkk_{\Ik_{33}}} & = \normt{\bbb_{\Ik_{33}}+\tb_{\Ik_{33}}}
\leq \normt{\bbb_{\Ik_{33}}} + \normt{\tb_{\Ak \cap \Ik_{33}}} = \normt{\bbb_{\Ik_{33}}}.
\end{align*}
So \eqref{lem8-1} follows. Now
\begin{align*}
\normt{\dkk_{\Ik_{44}}}
&= \normt{\bX_{\Ik_{44}}^{\prime}\xkhB{\bX_{\Ak}\bbb_{\Ak} - \bX_{\Ak_2}\tb_{\Ak_2} - \te}/n}\notag\\
&\leq \theta_{|\Ik_{44}|,T}\normt{\bbb_{\Ak}} + \theta_{|\Ik_{44}|,l_{k}}\normt{\tb_{\Ak_2}}
 +\normt{\bX_{\Ik_{44}}^{\prime}\te} \notag\\
&\leq \theta_{T,T}\normt{\bbb_{\Ak}} +  \theta_{T,T}\JAk + \htt,
\end{align*}
where the first  equality  is derived from the first three equalities in
the proof of  \eqref{lem6-3} by replacing $\Ak_{22}$ with $\Ik_{44}$,
the first inequality uses the triangle inequality and the definition of $\theta_{a,b}$, and the last inequality uses
the monotonicity property of  $\theta_{a,b}$ and $\htt$. This implies
\eqref{lem8-2}.
Finally, \eqref{lem8-2'} can be proved similarly by using \eqref{lem2-5} and  \eqref{lem4-2}.
This completes the proof of Lemma  \ref{lem8}.
\end{proof}

\begin{lem}\label{lem9}
Suppose (A1) holds.
\begin{itemize}
\item[(i)]
If  $\bX\sim \textrm{SRC}(T, \cm(T), \cp(T))$,
then
\begin{equation}\label{lem9-1}
\JAkk\leq \gamma \JAk+ \frac{\gamma} {\theta_{T,T}} \htt,
\end{equation}
\item[(ii)]
If  $(T-1)\mu<1$, then
\begin{equation}\label{lem9-1'}
\JAkki\leq \gamma_{\mu} \JAk+ \frac{3+2\mu}{1-(T-1)\mu} \hit.
\end{equation}
\end{itemize}
\end{lem}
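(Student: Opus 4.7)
\emph{Proof plan.} The plan is to combine the ingredients from Lemmas \ref{lem6}--\ref{lem8} with the internal bound on $\normt{\bbb}$ (resp.\ $\normi{\bbb}$) derived inside the proof of Lemma \ref{lem5} to obtain a one-step geometric contraction of $\JAk$ (resp.\ $\JAki$). The overall logic is: Lemma \ref{lem6} controls $\JAkk$ by the magnitudes of $\bbkk$ and $\dkk$ on the indices \emph{dropped} when passing from $\Ak$ to $\Akk$ (namely $\Ak_{11}$ and $\Ak_{22}$); Lemma \ref{lem7} bounds these dropped magnitudes by the magnitudes on the indices \emph{gained} ($\Ik_{33}$ and $\Ik_{44}$), since the thresholding rule in Algorithm \ref{alg1} keeps the $T$ largest entries of $\bbkk+\dkk$; and Lemma \ref{lem8} controls those gained magnitudes in terms of $\JAk$, the noise term $\htt$ (resp.\ $\hit$), and $\normt{\bbb}$ (resp.\ $\normi{\bbb}$), which is itself dominated by $\JAk$ and $\htt$ (resp.\ $\hit$) via \eqref{lem5-4} (resp.\ \eqref{lem5-4'}).

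\emph{Part (i).} I would start from \eqref{lem6-1}, then apply \eqref{lem6-2} and \eqref{lem6-3}, and use $\normt{\bbb_{\Ak_{11}}} \le \normt{\bbb_{\Ak}} = \normt{\bbb}$ to obtain
\[
\JAkk \le \normt{\bbb} + \Bigl(\normt{\bbkk_{\Ak_{11}}} + \tfrac{1}{\cm(T)}\normt{\dkk_{\Ak_{22}}}\Bigr) + \tfrac{\theta_{T,T}}{\cm(T)}\normt{\bbb} + \tfrac{\theta_{T,T}}{\cm(T)}\JAk + \tfrac{\htt}{\cm(T)}.
\]
The bracketed combination is the key term. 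Since the columns of $\bX$ are $\sqrt{n}$-normalized one has $\cm(T)\le 1$, so $1/\cm(T)=\max(1,1/\cm(T))$ can be factored out; then \eqref{lem7-3} yields $\normt{\bbkk_{\Ak_{11}}} + \tfrac{1}{\cm(T)}\normt{\dkk_{\Ak_{22}}} \le \tfrac{\sqrt{2}}{\cm(T)}(\normt{\bbkk_{\Ik_{33}}} + \normt{\dkk_{\Ik_{44}}})$. Next, \eqref{lem8-1} and \eqref{lem8-2} give $\normt{\bbkk_{\Ik_{33}}}\le \normt{\bbb}$ and $\normt{\dkk_{\Ik_{44}}}\le \theta_{T,T}\normt{\bbb} + \theta_{T,T}\JAk + \htt$, and finally \eqref{lem5-4} supplies $\normt{\bbb}\le (\theta_{T,T}/\cm(T))\JAk + \htt/\cm(T)$. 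Collecting the coefficients of $\JAk$ and $\htt$ then produces $\gamma\JAk + (\gamma/\theta_{T,T})\htt$, as claimed.

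\emph{Part (ii).} The argument is parallel in $\ell_\infty$ but cleaner, because \eqref{lem6-3'} has no $1/\cm(T)$ factor and \eqref{lem7-2} directly bounds the sum $\normi{\bbkk_{\Ak_{11}}} + \normi{\dkk_{\Ak_{22}}}$ by $|\dkk_{\Ik_{44}}|_{\min}\le \normi{\dkk_{\Ik_{44}}}$. Combining \eqref{lem6-1'}, \eqref{lem6-2'}, \eqref{lem6-3'}, \eqref{lem7-2} and \eqref{lem8-2'}, together with $\normi{\bbb_{\Ak_{11}}}\le\normi{\bbb}$, gives
\[
\JAkki \le (1+2T\mu)\normi{\bbb} + 2T\mu\JAki + 2\hit.
\]
Substituting the bound $\normi{\bbb} \le (T\mu/(1-(T-1)\mu))\JAki + \hit/(1-(T-1)\mu)$ from \eqref{lem5-4'} and using the identity $1+2T\mu + 2(1-(T-1)\mu) = 3+2\mu$ turns this into $\gamma_{\mu}\JAki + ((3+2\mu)/(1-(T-1)\mu))\hit$, which is the stated bound.

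\emph{Main obstacle.} The coefficient bookkeeping in part (i) is the main technical hurdle. The factor $1/\cm(T)$ appears in three distinct places (the least-squares bound in \eqref{lem6-3}, the absorption of the weight mismatch via $\max(1,1/\cm(T))$, and the bound on $\normt{\bbb}$) and interacts with the $\sqrt{2}$ from \eqref{lem7-3} to produce the somewhat unusual algebraic form of $\gamma$. Tracking which terms pick up which power of $1/\cm(T)$ at each step, and then verifying that the final coefficient of $\htt$ is exactly $\gamma/\theta_{T,T}$ (so that the two bounds in \eqref{lem9-1} share the common contraction factor $\gamma$), is where real care is needed. Part (ii), by contrast, sidesteps both the $\sqrt{2}$ and the least-squares inversion constant, so once \eqref{lem7-2} is used to merge the two dropped-index terms the algebra simplifies cleanly.
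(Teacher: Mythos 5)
Your proposal follows the paper's proof exactly: both parts chain \eqref{lem6-1}--\eqref{lem6-3} (resp.\ their $\ell_\infty$ analogues), pass from the dropped indices to the gained ones via Lemma \ref{lem7}, control the gained indices via Lemma \ref{lem8}, and close the loop with the internal bounds \eqref{lem5-4} and \eqref{lem5-4'} on $\normt{\bbb}$ and $\normi{\bbb}$; your part (ii), including the identity $1+2T\mu+2(1-(T-1)\mu)=3+2\mu$, reproduces the paper's computation verbatim.

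The one place where your bookkeeping in part (i) does not quite recover the stated constant is the pair of terms $\sqrt{2}\,\normt{\bbb_{\Ik_{33}}}$ (coming from \eqref{lem7-3} followed by \eqref{lem8-1}) and $\normt{\bbb_{\Ak_{11}}}$. Bounding each separately by $\normt{\bbb}$, as you propose, gives a combined coefficient $1+\sqrt{2}\approx 2.414$, whereas the definition of $\gamma$ requires this coefficient to be $2$. The fix is to keep the two terms together and use that $\Ak_{11}$ and $\Ik_{33}$ are disjoint subsets of $\Ak=\mathrm{supp}(\bbb)$ (one lies outside $\Akk$, the other inside), so that $\normt{\bbb_{\Ak_{11}}}^2+\normt{\bbb_{\Ik_{33}}}^2\le\normt{\bbb}^2$ and hence, by Cauchy--Schwarz, $\sqrt{2}\,\normt{\bbb_{\Ik_{33}}}+\normt{\bbb_{\Ak_{11}}}\le\sqrt{3}\,\normt{\bbb}\le 2\normt{\bbb}$. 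With that adjustment the coefficients of $\JAk$ and $\htt$ collect to exactly $\gamma$ and $\gamma/\theta_{T,T}$ as claimed; without it you prove the lemma only with a slightly larger contraction factor. Everything else, including the explicit use of $\cm(T)\le 1$ to pull a common $1/\cm(T)$ out front (which the paper does tacitly), is sound.
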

\begin{proof}
\begin{align*}
\JAkk&\leq \normt{\tb_{\Ak_{11}}} + \normt{\tb_{\Ak_{22}}}\\
&\leq(\normt{\bbkk_{\Ak_{11}}} + \normt{\dkk_{\Ak_{22}}} + \normt{\bbb_{\Ak_{11}}} +  \theta_{T,T} \normt{\bbb_{\Ak}}
+ \theta_{T,T}\JAk+ \htt)/ \cm(T)\\
&\leq(\sqrt{2}(\normt{\bbkk_{\Ik_{33}}} + \normt{\dkk_{\Ik_{44}}}) +   \normt{\bbb_{\Ak_{11}}}+ \theta_{T,T}\norm{\bbb_{\Ak}} +  \theta_{T,T}\JAk + \htt)/ \cm(T)\\
&\leq((2+(1+\sqrt{2})\theta_{T,T})\normt{\bbb}+ (1+\sqrt{2})\theta_{T,T}\JAk+ (1+\sqrt{2})\htt)/\cm(T)\\
&\leq ( \frac{2\theta_{T,T}+(1+\sqrt{2})\theta_{T,T}^2}{\cm(T)^2} + \frac{(1+\sqrt{2})\theta_{T,T}}{\cm(T)}) \JAk\\
&+ (\frac{2+(1+\sqrt{2})\theta_{T,T}}{\cm(T)^2}+ \frac{1+\sqrt{2}}{\cm(T)})\htt,
\end{align*}
where the first inequality is \eqref{lem6-1}, the second inequality uses  \eqref{lem6-2} and \eqref{lem6-3}, the third inequality uses \eqref{lem7-3}, the fourth inequality uses the sum of \eqref{lem8-1} and \eqref{lem8-2}, and the last inequality follows from \eqref{lem5-4}.
This implies     \eqref{lem9-1} by noticing
the definitions of $\gamma$. Now
\begin{align*}
\JAkki&\leq \normi{\tb_{\Ak_{11}}} + \normt{\tb_{\Ak_{22}}}\\
&\leq\normi{\bbkk_{\Ak_{11}}} + \normi{\dkk_{\Ak_{22}}} + \normi{\bbb_{\Ak_{11}}}+ T\mu\normi{\bbb_{\Ak}} + T\mu\JAki + \hit.\\
&\leq\normi{\dkk_{\Ik_{44}}}+ \normi{\bbb_{\Ak_{11}}}+ T\mu\normi{\bbb_{\Ak}} + T\mu\JAki + \hit\\
&\leq\normi{\bbb_{\Ak_{11}}}+ 2T\mu\normi{\bbb_{\Ak}} + 2T\mu\JAki + 2\hit\\
&\leq  (\frac{(1+2T\mu)T\mu}{1-(T-1)\mu}+2T\mu)\JAki+ \frac{3+2\mu}{1-(T-1)\mu}\hit,
\end{align*}
where the first inequality is \eqref{lem6-1'}, the second inequality uses  \eqref{lem6-2'} and \eqref{lem6-3'}, the third inequality uses \eqref{lem7-2}, the fourth inequality uses the sum of  \eqref{lem8-2'}, and the last inequality 
follows from \eqref{lem5-4'}.
Thus part (ii) of Lemma \ref{lem9} follows by noticing the definitions of $\gamma_{\mu}$.
\end{proof}


\medskip\noindent
\textbf{Proof of Theorem \ref{thmt}.}
\begin{proof}
Suppose $\gamma < 1$.  By using \eqref{lem9-1} repeatedly,
\begin{align*}
\JAkk &\leq \gamma\JAk + \frac{\gamma}{\theta_{T,T}} \htt\\
 &\leq \gamma(\gamma E_2(A^{k-1}) + \frac{\gamma}{\theta_{T,T}} \htt) + \gamma \htt\\
 & \leq \cdots \\
&\leq  \gamma^{k+1} E_2(A^0) + \frac{\gamma}{\theta_{T,T}}(1 + \gamma + \cdots + \gamma^k) \htt\\
&< \gamma^{k+1}\normt{\tb} + \frac{\gamma}{(1-\gamma)\theta_{T,T}}\htt,
\end{align*}
i.e., \eqref{thmt-1} holds. Now
\begin{align*}
\normt{\bbkk - \tb} &\leq (1+\frac{\theta_{T,T}}{\cm(T)})\JAk + \frac{\htt}{\cm(T)} \\
&\leq(1+\frac{\theta_{T,T}}{\cm(T)})\zkhB{\gamma^k\normt{\tb} + \frac{\gamma\theta_{T,T}}{1-\gamma}\htt}  \\
&= (1+\frac{\theta_{T,T}}{\cm(T)})\gamma^k\normt{\tb} + \zkhB{\frac{\gamma\theta_{T,T}}{(1-\gamma)}(1+\frac{\theta_{T,T}}{\cm(T)}) + \frac{1}{\cm(T)}}\htt,
\end{align*}
where the first inequality follows from  \eqref{lem5-1}, the second inequality uses \eqref{thmt-1}, and the third line follows after some algebra.
Thus \eqref{thmt-2} follows by noticing the definitions of $b_1$ and $b_2$. This completes the proof of part (i) of Theorem \ref{thmt}.
\eqref{thmt-3} and \eqref{thmt-4} follow from \eqref{thmt-1}, \eqref{lem4-1} and  \eqref{thmt-2}, \eqref{lem4-1} respectively.
This completes  the proof of Theorem \ref{thmt}.
\end{proof}

\medskip\noindent
\textbf{Proof of Corollary \ref{cort}.}
\begin{proof}
By \eqref{thmt-2},
  \begin{align*}
\normt{\bbkk-\tb} &\leq b_1\gamma_1^k\normt{\tb} + b_2\htt\\
&\leq b_1 \htt  + b_2\htt \quad \textrm{if} \quad 
k\geq \log_{\frac{1}{\gamma}} \frac{\sqrt{J}\bar{M}}{\htt}
\end{align*}
where 
the second inequality follows after  some algebra.
By \eqref{thmt-1},
\begin{align*}
\normt{\tb|_{A^{*}_{J} \backslash \Ak}} &\leq \gamma^{k}\normt{\tb} + \frac{\gamma \theta_{T,T}}{1-\gamma}\htt\\
&\leq \gamma^{k}\sqrt{J}\bar{M} +\xi \bar{m} \\
&< \bar{m}\quad \textrm{if} \quad k\geq \log_{\frac{1}{\gamma}} \frac{\sqrt{J}R}{1-\xi},
\end{align*}
where 
the second inequality uses the assumption $ \bar{m} \geq \frac{\gamma\htt}{(1-\gamma)\theta_{T,T}\xi} $ with $0 < \xi < 1$, and the last inequality follows after some simple algebra.
This implies  $A^{*}_{J} \subset \Ak$ if $k\geq \log_{\frac{1}{\gamma}} \frac{\sqrt{J}R}{1-\xi}$.
This proves part (i).
The proof of part (ii) of is similar to that of part (i) by using \eqref{lem4-1}, we omit it here.
Suppose $\beta^*$ is exactly  $K$-sparse and $T=K$ in SDAR. It follows from  part (ii) that with probability at least $1-2\alpha$, $A^{*} = \Ak$ if $k\geq \log_{\frac{1}{\gamma}} \frac{\sqrt{K}R}{1-\xi}$. Then part (iii) holds by   showing  that $\Akk = A^{*}  $. Indeed, by \eqref{lem9-1} and \eqref{lem4-1} we have
\begin{align*}
\normt{\tb|_{A^{*}\backslash \Akk}}&\leq \gamma \normt{\tb|_{A^{*} \backslash \Ak}} + \frac{\gamma}{\theta_{T,T}} \s\sqrt{K}\sqrt{2\log(p/\alpha)/n}\\
&=\frac{\gamma}{\theta_{T,T}} \s\sqrt{K}\sqrt{2\log(p/\alpha)/n}.
\end{align*}
Then $\Akk = A^{*}  $ by using  the assumption that $m \geq \frac{\gamma}{(1-\gamma)\theta_{T,T}\xi}\s\sqrt{K}\sqrt{2\log(p/\alpha)/n}>\frac{\gamma}{\theta_{T,T}}\s\sqrt{K}\sqrt{2\log(p/\alpha)/n}$.
This completes the proof of Corollary  \ref{cort}.
\end{proof}

\medskip\noindent
\textbf{Proof of Theorem \ref{thmi}.}
\begin{proof}
 Suppose $T\mu\leq 1/4$, some algebra shows $\gamma_{\mu}<1$ and $\frac{1+\mu}{1-(T-1)\mu}<\frac{3+2\mu}{1-(T-1)\mu}<4$. Now Theorem \ref{thmi} can be proved in a way similar to Theorem \ref{thmt} by using \eqref{lem9-1'}, \eqref{lem4-2} and \eqref{lem5-1'}. We omit it here.
This completes the proof of Theorem \ref{thmi}.
\end{proof}

\medskip\noindent
\textbf{Proof of Corollary \ref{cori}.}
\begin{proof}
The proofs of part (i) and part (ii) are similar to that of Corollary \ref{cort},
we omit them here.
Suppose $\beta^*$ is exactly $K$-sparse and $T=K$ in SDAR. It follows from  part (ii) that with probability at least $1-2\alpha$, $A^{*} = \Ak$ if $k\geq \log_{\frac{1}{\gamma_{\mu}}} \frac{R}{1-\xi}$. Then part (iii) holds by   showing  that $\Akk = A^{*} $. Indeed, by \eqref{lem9-1'}, \eqref{lem4-2} and $\frac{3+2\mu}{1-(T-1)\mu}<4$ we have
\begin{align*}
\normi{\tb|_{A^{*}\backslash \Akk}}&\leq \gamma_{\mu} \normi{\tb|_{A^{*} \backslash \Ak}} + 4 \s\sqrt{2\log(p/\alpha)/n}\\
&=4 \s\sqrt{2\log(p/\alpha)/n}.
\end{align*}
Then $\Akk = A^{*}  $ by using  the assumption that $$m\geq \frac{4}{\xi (1-\gamma_{\mu})}\s\sqrt{2\log(p/\alpha)/n}>4\s\sqrt{2\log(p/\alpha)/n}.$$
This completes the proof of Corollary \ref{cori}.
\end{proof}

%


\end{document}